\numberwithin{equation}{section}
\newtheorem{prop}{Proposition}[section]
\newtheorem{theo}[prop]{Theorem}
\newtheorem{lemm}[prop]{Lemma}
\newtheorem{defi}[prop]{Definition}
\title[Quantitative Mode Stability for the Wave Equation]{Quantitative Mode Stability for the Wave Equation on the Kerr Spacetime}
\author{Yakov Shlapentokh-Rothman}
\thanks{This work was partially supported by NSF grant DMS-0943787.}
\address{Department of Mathematics\\  Massachusetts Institute of Technology\\ Cambridge, MA 02139\\ USA}
\email{yakovsr@math.mit.edu}
\begin{document}
\begin{abstract}
We give a quantitative refinement and simple proofs of mode stability type statements for the wave equation on Kerr backgrounds in the full sub-extremal range $(|a| < M)$. As an application, we are able to quantitatively control the energy flux along the horizon and null infinity and establish integrated local energy decay for solutions to the wave equation in any bounded-frequency regime.
\end{abstract}
\maketitle
\tableofcontents
\section{Introduction}
One of the most central problems in mathematical General Relativity is the non-linear stability of the $2$-parameter family of Kerr spacetimes $(\mathcal{M},g_{a,M})$, indexed by mass $M$ and specific angular momentum $a$. Though the full non-linear problem (the stability of $(\mathcal{M},g_{a,M})$ as a family of solutions to the Einstein vacuum equations Ric$(g) = 0$) appears intractable at the moment, much work has been done in the linear setting. In particular, experience teaches us that resolving the non-linear problem will require a robust understanding of decay for solutions of the wave equation $\Box_g\psi = 0$ on the fixed Kerr spacetime $(\mathcal{M},g)$. Let us direct the reader to the lecture notes \cite{n6} for a general introduction to linear waves on black hole backgrounds.

Surprisingly, even the most basic boundedness and decay statements for the wave equation on Kerr remained unanswered until quite recently. Boundedness and decay results for solutions to the wave equation on the $1$-parameter Schwarzschild subfamily $(a = 0)$ were obtained in \cite{n18},\cite{n19}, and \cite{n20}. The first global result for general solutions to the Cauchy problem on a rotating black hole $(a \neq 0)$ was obtained in \cite{n3} where Dafermos and Rodnianski established uniform boundedness in the case $|a|\ll M$. Following this, decay results, again in the case $|a| \ll M$, were obtained by various authors, e.g. \cite{n6}, \cite{n4}, \cite{n5}, \cite{n9}, \cite{n10}, \cite{n11}, and \cite{n12}. For the full sub-extremal range of Kerr black holes $(|a| < M)$, the coupling between ``superradiance'' and trapping presented serious conceptual difficulties; nevertheless, in \cite{n2} Dafermos and Rodnianski succeeded in establishing boundedness and decay for the wave equation on a general sub-extremal Kerr background. Their proof required an additional estimate\footnote{See their discussion in section 11.7 of \cite{n2}.} for the ``bounded superradiant frequencies.'' This paper provides the needed result.

Interestingly, the problem of the superradiant frequencies will lead us back to the classical mode analysis of the physics literature, see \cite{n13} and \cite{n1}, albeit from a quite different perspective. Mode solutions to the wave equation will be reviewed in section \ref{reviewModes}; for now, we simply recall that a solution $\psi$ to the wave equation $\Box_g\psi = 0$ is called a mode solution if
\[\psi(t,r,\theta,\phi) = e^{-i\omega t}e^{im\phi}S(\theta)R(r)\text{ with }\omega \in \mathbb{C}\text{ and }m \in \mathbb{Z},\]
where $(t,r,\theta,\phi)$ are Boyer-Lindquist coordinates (defined in section \ref{notation}) and $S$ and $R$ must satisfy appropriate ordinary differential equations and boundary conditions (given in section \ref{reviewModes}) so that, among other things, $\psi$ has finite energy along suitable spacelike hypersurfaces.\footnote{When $\text{Im}\left(\omega\right) > 0$ one should take asymptotically flat hypersurfaces connecting the future event horizon and spacelike infinity. When $\text{Im}\left(\omega\right) \leq 0$ one should instead consider hyperboloidal hypersurfaces connecting the future event horizon and future null infinity. See the discussion in appendix \ref{finiteModeEnergy}.} Ruling out the exponentially growing mode solutions corresponding to Im$(\omega) > 0$ is the content of ``mode stability.'' This was established by Whiting in the ground-breaking \cite{n1}. We will extend Whiting's techniques and establish a \emph{quantitative} understanding of the lack of mode solutions with \emph{real} $\omega$.\footnote{See also \cite{n15} and \cite{n16} which concern solutions to the Cauchy problem of the form $e^{im\phi}\psi_0(t,r,\theta)$ and discuss mode solutions with real $\omega$.} As a byproduct of our methods, we will also be able to simplify the proof of Whiting's original mode stability result. Next, we will show that this ``quantitative mode stability on the real axis'' can be upgraded to ``integrated local energy decay,'' with an explicit constant, for solutions to the wave equation in any ``bounded-frequency regime.''\footnote{The phrase ``bounded-frequency regime'' will be precisely defined in section \ref{theResults}; but, lest the reader be mislead, we take the opportunity to emphasize that the integrated energy decay statement proven here will assume \emph{a priori} that the solution and its coordinate derivatives are square integrable to the future in $t$.} Along the way, we will produce the necessary estimate for section 11.7 of \cite{n2}.
\subsection{The Spacetime}\label{notation}
Fix a pair of parameters $(a,M)$ with $|a| < M$, and define
\[r_+ := M + \sqrt{M^2 - a^2}.\]
Define the underlying manifold $\mathcal{M}$ to be covered by a global\footnote{``Global'' is be understood with respect to the usual degeneracy of polar coordinates.} ``Boyer-Lindquist'' coordinate chart
\[(t,r,\theta,\phi) \in \mathbb{R} \times (r_+,\infty)\times \mathbb{S}^2.\]
The Kerr metric then takes the form
\begin{equation}\label{metric}
g_{a,M} = -\left(1-\frac{2Mr}{\rho^2}\right)dt^2 - \frac{4Mar\sin^2\theta}{\rho^2}dtd\phi + \frac{\rho^2}{\Delta}dr^2
\end{equation}
\[+ \rho^2 d\theta^2 + \sin^2\theta\frac{\Pi}{\rho^2}d\phi^2,\]
\[r_{\pm} := M \pm \sqrt{M^2-a^2},\]
\[\Delta := r^2 - 2Mr + a^2 = (r-r_+)(r-r_-),\]
\[\rho^2 := r^2 + a^2\cos^2\theta,\]
\[\Pi := (r^2+a^2)^2 - a^2\sin^2\theta\Delta.\]
It is convenient to define an $r^*(r) :(r_+,\infty) \to (-\infty,\infty)$ coordinate up to a constant by
\[\frac{dr^*}{dr} := \frac{r^2+a^2}{\Delta}.\]
We will often drop the parameters and refer to $g_{a,M}$ as $g$.

It turns out that the manifold $\mathcal{M}$ can be extended to a manifold $\tilde{\mathcal{M}}$ such that $\partial \mathcal{M}$ is a null hypersurface called the ``horizon.'' Since Boyer-Lindquist coordinates would break down at the horizon, one needs a new coordinate system. The standard choice is ``Kerr-star" coordinates $(t^*,r,\phi^*,\theta)$:
\[\frac{d\overline{t}}{dr} := \frac{r^2+a^2}{\Delta},\]
\[\frac{d\overline{\phi}}{dr} := \frac{a}{\Delta},\]
\[t^*(t,r) := t + \overline{t}(r),\]
\[\phi^*(\phi,r) := \phi + \overline{\phi}(r).\]
In these coordinates the metric becomes
\[g = -\left(1-\frac{2Mr}{\rho^2}\right)(dt^*)^2 - \frac{4Mar\sin^2\theta}{\rho^2}dt^*d\phi^* + 2dt^*dr + \]
\[\rho^2 d\theta^2 + \sin^2\theta\frac{\Pi}{\rho^2}(d\phi^*)^2 -2a\sin^2\theta drd\phi^*.\]
Note that we can now extend the metric to the manifold $\tilde{\mathcal{M}} := (t^*,r,\theta,\phi^*) \in \mathbb{R}\times (0,\infty)\times \mathbb{S}^2$. The (future) event horizon $\mathcal{H}^+$ is defined to be the null hypersurface $\{r = r_+\}$.

\subsection{Separating the Wave Equation: Mode Solutions}\label{reviewModes}
When $a = 0$, in addition to possessing the Killing vector field $\partial_t$, the metric (\ref{metric}) is spherically symmetric. Thus, it is immediately clear that the wave equation $\Box_{g_{0,M}}\psi = 0$ is separable. When $a \neq 0$ the only Killing vector fields are $\partial_t$ and $\partial_{\phi}$. Nevertheless, as first discovered by Carter \cite{n22}, the wave equation $\Box_g\psi = 0$ remains separable (in an appropriate coordinate system). Indeed, letting $(\omega,m) \in \mathbb{C}\setminus\{0\}\times\mathbb{Z}$, we have
\[\frac{e^{i\omega t}e^{-im\phi}}{\rho^2}\Box_g\left(e^{-i\omega t}e^{im\phi}\psi_0(r,\theta)\right) = \]
\begin{equation}\label{compute}
\partial_r\left(\Delta\partial_r\right)\psi_0 + \left(\frac{(r^2+a^2)\omega^2- 4Mamr\omega + a^2m^2}{\Delta} - a^2\omega^2\right)\psi_0 +
\end{equation}
\[\frac{1}{\sin\theta}\partial_{\theta}\left(\sin\theta\partial_{\theta}\right)\psi_0 - \left(\frac{m^2}{\sin^2\theta}-a^2\omega^2\cos^2\theta\right)\psi_0.\]
In fact, the separability of the wave equation follows from the presence on Kerr of a \emph{Killing tensor} \cite{n23}.

We call
\begin{equation}\label{sml}
    \frac{1}{\sin\theta}\frac{d}{d\theta}\left(\sin\theta\frac{dS}{d\theta}\right) - \left(\frac{m^2}{\sin^2\theta} - a^2\omega^2\cos^2\theta\right)S + \lambda S = 0
\end{equation}
the ``angular ODE.'' One can show that when $\omega \in \mathbb{R}$, then (\ref{sml}) along with the boundary condition
\begin{equation}\label{boundSml}
    e^{im\phi}S(\theta) \text{ extends smoothly to }\mathbb{S}^2
\end{equation}
defines a Sturm-Liouville problem with a corresponding collection of eigenfunctions $\{S_{\omega ml}\}_{l=|m|}^{\infty}$ and real eigenvalues $\{\lambda_{\omega ml}\}_{l=|m|}^{\infty}$. These $\{S_{\omega ml}\}$ are an orthonormal basis of $L^2(\sin\theta d\theta)$ and are called ``oblate spheroidal harmonics.'' When $a = 0$ these are simply spherical harmonics, and we label them in the standard way so that $\lambda_{\omega ml} = l(l+1)$. For $a \neq 0$, the labeling is uniquely determined by enforcing continuity in $a$. Lastly, we note that for $\omega$ with sufficiently small imaginary part, one may define the $S_{\omega ml}$ and $\lambda_{\omega ml}$ via perturbation theory \cite{n14}.

Now we are ready for the main definition of the section.
\begin{defi}\label{mode}Let $(\mathcal{M},g)$ be a sub-extremal Kerr spacetime with parameters $(a,M)$. A smooth solution $\psi$ to the wave equation
\begin{equation}\label{waveEqn}
\Box_g\psi = 0
\end{equation}
is called a ``mode solution'' if there exist ``parameters'' $(\omega,m,l) \in \mathbb{C}\setminus\{0\}\times \mathbb{Z}\times\mathbb{Z}_{\geq m}$ such that
\begin{equation}\label{modeForm}
\psi(t,r,\theta,\phi) = e^{-i\omega t}e^{im\phi}S_{\omega ml}(\theta)R(r,\omega,m,l),
\end{equation}
where
\begin{enumerate}
\item $S_{\omega ml}$ satisfies the boundary condition (\ref{boundSml}) and is an eigenfunction with eigenvalue $\lambda_{\omega ml}$ for the angular ODE (\ref{sml}).
\item $R$ is a solution to
    \begin{equation}\label{firstEqn}
    \partial_r\left(\Delta\partial_r\right)R + \left(\frac{(r^2+a^2)\omega^2- 4Mamr\omega + a^2m^2}{\Delta} - \lambda_{\omega ml} - a^2\omega^2\right)R = 0
    \end{equation}
\item \begin{equation}\label{horizonBound}
        R \sim (r-r_+)^{\frac{i(am-2Mr_+\omega)}{r_+-r_-}}\text{ at }r = r_+.\footnote{This notation means that $R(r)(r-r_+)^{\frac{-i(am-2Mr_+\omega)}{r_+-r_-}}$ is smooth at $r = r_+$.}
    \end{equation}
\item \begin{equation} \label{infinityBound}
        R \sim \frac{e^{i\omega r^*}}{r}\text{ at }r = \infty.\footnote{This notation means that there exists constants $\left\{C_i\right\}_{i=0}^{\infty}$ such that for every $N \geq 1$, $R(r^*) = \frac{e^{i\omega r^*}}{r}\sum_{i=0}^N\frac{C_i}{r^i} + O\left(\left(r\right)^{-N-2}\right)\text{ for large }r$.}
    \end{equation}
\end{enumerate}
\end{defi}
We will often suppress some of the arguments of $S_{\omega ml}$ and $R$ and refer to them as $S_{\omega ml}(\theta)$ and $R(r)$.

Instead of considering $R(r)$, it is often more convenient to work with the function
\[u(r^*) := (r^2+a^2)^{1/2}R(r).\]
Then, letting primes denote $r^*$-derivatives, equation (\ref{firstEqn}) is equivalent to
\begin{equation}\label{eqn0}
u'' + \left(\omega^2-V\right)u = 0,
\end{equation}
\begin{align*}
V :=\ &\frac{4Mram\omega - a^2m^2 + \Delta(\lambda_{\omega ml} + a^2\omega^2)}{(r^2+a^2)^2}\\
&+ \frac{\Delta}{(r^2+a^2)^4}\left(a^2\Delta + 2Mr(r^2-a^2)\right).
\end{align*}
In appendix \ref{odeFacts} we have collected various facts about the relevant class of ODEs that will be used throughout the paper. The boundary conditions given for $R$ and $S_{\omega ml}$ ((\ref{horizonBound}), (\ref{infinityBound}), and (\ref{boundSml})) are uniquely determined by requiring that $\psi$, given by (\ref{modeForm}), extends smoothly to the horizon, has finite energy along asymptotically flat hypersurfaces when $\text{Im}\left(\omega\right) > 0$, and has finite energy along hyperboloidal hypersurfaces when $\text{Im}\left(\omega\right) \leq 0$ (see the discussion in appendix \ref{finiteModeEnergy}). Furthermore, in section \ref{theBoundary} we will see these boundary conditions directly arise during the proof of integrated local energy decay. Though they will only concern us tangentially here, it is worth mentioning that there is a large literature devoted to locating mode solutions with $\text{Im}\left(\omega\right) < 0$ (see the review \cite{n13}). These are called \emph{quasi-normal modes} and are expected to provide to great deal of dynamical information about the decay of scalar fields. For a sample of the mathematical study of quasi-normal modes and corresponding applications (to black hole spacetimes), we recommend \cite{n24}, \cite{n25}, \cite{n26}, \cite{n27}, \cite{n28}, \cite{n29}, \cite{n30}, \cite{n31}, \cite{n32}, and the references therein.
\subsection{Mode Stability Type Statements}
Ruling out the exponentially growing mode solutions corresponding to $\text{Im}\left(\omega\right) > 0$ is the content of ``mode stability (in the upper half plane).'' This was established by Whiting in 1989 \cite{n1}. However, this turns out not to be the full story. Indeed, the existence of mode solutions with $\omega \in \mathbb{R}\setminus\{0\}$ is a serious obstruction to ``integrated local energy decay'' for the wave equation. We will call the ruling out of these mode solutions ``mode stability on the real axis.'' This was first explored numerically in \cite{n21}. In addition, \cite{n21} presented a heuristic argument (rigorously established in \cite{n17}) indicating that mode stability on the real axis would imply mode stability in the upper half plane. In section \ref{wronskianReduct} we will show how one can upgrade mode stability on the real axis to \emph{integrated local energy decay} for the wave equation in any ``bounded-frequency regime.'' In order for the constant in this estimate to be explicit, however, we will be interested in a quantitative version of mode stability of the real axis.

We turn now to an explanation of ``quantitative mode stability.'' Observe that if a solution to the angular ODE exists, an asymptotic analysis of (\ref{eqn0}) (see appendix \ref{odeFacts}) allows one to make the following definitions:
\begin{defi}\label{theHor}Let the parameters $|a| < M$ be fixed. Then define $u_{\text{hor}}(r^*,\omega,m,l)$ to be the unique function satisfying
    \begin{enumerate}
        \item $u_{\text{hor}}'' + \left(\omega^2-V\right)u_{\text{hor}} = 0$.
        \item $u_{\text{hor}} \sim (r-r_+)^{\frac{i(am-2Mr_+\omega)}{r_+-r_-}}\text{ near }r^* = -\infty$.
        \item $\left|\left((r(r^*)-r_+)^{\frac{-i(am-2Mr_+\omega)}{r_+-r_-}}u_{\text{hor}}\right)\left(-\infty\right)\right|^2 = 1$.
    \end{enumerate}
\end{defi}
\begin{defi}\label{theOut}Let the parameters $|a| < M$ be fixed. Then define $u_{\text{out}}(r^*,\omega,m,l)$ to be the unique function satisfying
    \begin{enumerate}
        \item $u_{\text{out}}'' + \left(\omega^2-V\right)u_{\text{out}} = 0$.
        \item $u_{\text{out}} \sim e^{i\omega r^*}\text{ near }r^* = \infty$.
        \item $\left|\left(e^{-i\omega r^*}u_{\text{out}}\right)\left(\infty\right)\right|^2 = 1$.
    \end{enumerate}
\end{defi}
See appendix \ref{odeFacts} for the explicit definition of ``$\sim$''. When there is no risk of confusion, we shall drop some or all of $u_{\text{hor}}$'s and $u_{\text{out}}$'s arguments. Next, recall that the \emph{Wronskian}
\[u_{\text{out}}'(r^*)u_{\text{hor}}(r^*)-u_{\text{hor}}'(r^*)u_{\text{out}}(r^*)\]
is independent of $r^*$. Hence, we can define
\begin{equation}\label{theWronk}
W(\omega,m,l) := u_{\text{out}}'(r^*)u_{\text{hor}}(r^*)-u_{\text{hor}}'(r^*)u_{\text{out}}(r^*).
\end{equation}
This will vanish if and only if $u_{\text{out}}$ and $u_{\text{hor}}$ are linearly dependent, i.e.~there exists a non-trivial solution to (\ref{eqn0}) $\Leftrightarrow W = 0 \Leftrightarrow \left|W^{-1}\right| = \infty$. ``Quantitative mode stability'' consists of producing an upper bound for $\left|W^{-1}\right|$ with an explicit dependence on $a$, $M$, $\omega$, $m$, and $l$.
\subsection{Statement of Results}\label{theResults}
Fix a Kerr spacetime $(\mathcal{M},g)$ with parameters $(a,M)$ satisfying $|a| < M$, and recall the definition of mode solutions (definition \ref{mode}) and the Wronskian (\ref{theWronk}) given in the previous section.

Our main result about mode solutions is
\begin{theo}(Quantitative Mode Stability on the Real Axis)\label{qMSRA} Let
\[\mathscr{A} \subset \left\{(\omega,m,l) \in \mathbb{R}\times\mathbb{Z}\times\mathbb{Z}_{\geq |m|}\right\}\]
be a set of frequency parameters with
\[C_{\mathscr{A}} := \sup_{(\omega,m,l)\in \mathscr{A}}\left(\left|\omega\right| + \left|\omega\right|^{-1} + \left|m\right| + |l|\right) < \infty.\]
Then
\[\sup_{(\omega,m,l)\in\mathscr{A}}\left|W^{-1}\right| \leq G(C_{\mathscr{A}},a,M)\]
where the function $G$ can, in principle, be given explicitly.

\end{theo}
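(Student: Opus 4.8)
\section*{Proof proposal}

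The plan is to reduce the problem to Whiting's transformation and a compactness argument. First I would recall the core of Whiting's method \cite{n1}: there is an integral transform taking a solution $R$ of the radial ODE (\ref{firstEqn}) with the horizon and infinity boundary conditions to a solution $\tilde{R}$ of a \emph{second} ODE on $(r_+,\infty)$ whose potential is real and for which a positivity/energy identity can be run. Crucially, when $\omega$ is real the transformed problem is self-adjoint in a way that forces any putative mode solution to vanish identically. I would repackage this so that the nonexistence statement becomes ``$W(\omega,m,l)\neq 0$ for all real $(\omega,m,l)$ with $\omega\neq 0$.'' This is the qualitative input; the quantitative statement then follows by controlling how $W^{-1}$ can blow up.

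The second step is to establish that $W$, as built from $u_{\text{hor}}$ and $u_{\text{out}}$ in (\ref{theWronk}), depends continuously on $(\omega,m,l)$ on the relevant parameter range, and in fact is jointly continuous in $(\omega,a,M)$ for fixed integers $(m,l)$. This is an ODE-dependence-on-parameters argument: the potential $V$ and the characteristic exponents in (\ref{horizonBound}), (\ref{infinityBound}) depend smoothly on $\omega$ away from $\omega=0$ and $r=r_\pm$, and the normalized solutions $u_{\text{hor}},u_{\text{out}}$ of Definitions \ref{theHor}, \ref{theOut} are obtained by a convergent iteration (the asymptotic expansions in appendix \ref{odeFacts}), so they and their $r^*$-derivatives vary continuously; hence so does the Wronskian, which can be evaluated at any convenient fixed $r^*$. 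One must be slightly careful that $\lambda_{ml}(\omega)$ is continuous (it is, by the Sturm–Liouville perturbation theory already quoted) and that the normalizations in Definitions \ref{theHor}–\ref{theOut} do not introduce discontinuities, but these are routine.

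Third, I would cut the set $\mathscr{A}$ down to a compact one. The hypothesis $C_{\mathscr{A}}<\infty$ forces $|m|,|l|$ to take only finitely many integer values and forces $\omega$ into the compact set $\{|\omega|\le C_{\mathscr{A}}\}\cap\{|\omega|\ge C_{\mathscr{A}}^{-1}\}$, which in particular stays away from the bad value $\omega=0$. So $\mathscr{A}$ is contained in a finite union of sets of the form $K\times\{m\}\times\{l\}$ with $K$ compact in $\mathbb{R}\setminus\{0\}$. On each such piece $|W|$ is a continuous, strictly positive (by step one) function on a compact set, hence attains a positive minimum; therefore $|W^{-1}|$ is bounded on $\mathscr{A}$, and the bound depends only on $a$, $M$, and $C_{\mathscr{A}}$ (through the size of $K$ and the finitely many admissible $m,l$). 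Tracking the constants through the iteration bounds of appendix \ref{odeFacts}, Whiting's identity, and the lower bound for $|W|$ produces an explicit $G(C_{\mathscr{A}},a,M)$.

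The main obstacle is step one made \emph{quantitative}: a bare compactness argument gives a bound but not an explicit $G$, so to get an effective constant one needs an explicit lower bound for $|W|$ rather than just its nonvanishing. I expect the real work to be in quantifying Whiting's argument --- producing, from the transformed ODE and its energy identity, an honest inequality of the form $|W|\ge c(\omega,m,l,a,M)>0$ with $c$ written down --- together with keeping the constants in the asymptotic expansions for $u_{\text{hor}}$, $u_{\text{out}}$ near $r^*=\pm\infty$ under explicit control as $\omega$ ranges over $K$. Everything downstream (continuity, compactness, assembling $G$) is then bookkeeping.
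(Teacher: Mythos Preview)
Your compactness argument (steps two and three) does produce \emph{some} bound $G$, but it cannot produce an \emph{explicit} one, and explicitness is part of the statement. You recognize this in your final paragraph, but then you do not actually supply the missing quantitative step; ``quantifying Whiting's argument'' is precisely the content of the theorem, and your proposal leaves it as a black box. A continuous positive function on a compact set has a positive minimum, but that minimum is not computable from the data you have listed: you would need a modulus of continuity for $W$ together with a pointwise lower bound at some specific frequency, and nothing in your outline provides either.

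The paper's route is genuinely different and avoids compactness entirely. The key idea you are missing is to apply Whiting's integral transformation not to a homogeneous mode but to a solution of the \emph{inhomogeneous} radial ODE $u'' + (\omega^2 - V)u = H$ with the mode boundary conditions. The transformed function $\tilde u$ then satisfies an inhomogeneous equation with a real potential, and the microlocal energy identity for $\tilde u$ yields an \emph{explicit} inequality
\[
|u(-\infty)|^2 \;\lesssim\; \int_{r_+}^\infty |F(r)|^2 r^4\,dr
\]
(after some further ODE multiplier estimates to close the argument). Now the Wronskian enters through the variation-of-parameters formula: for compactly supported $H$ the unique solution with mode boundary conditions is
\[
u(r^*) \;=\; W^{-1}\!\left(u_{\text{out}}\!\int_{-\infty}^{r^*}\! u_{\text{hor}}H + u_{\text{hor}}\!\int_{r^*}^{\infty}\! u_{\text{out}}H\right),
\]
so that $|u(-\infty)|^2 = |W|^{-2}\bigl|\int u_{\text{out}}H\bigr|^2$. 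Combining this with the inequality above and then \emph{choosing} a convenient compactly supported $F$ (using the explicit asymptotics of $u_{\text{out}}$ to make the denominator nonzero) yields an explicit upper bound on $|W|^{-2}$. Every constant in this chain is trackable. The inhomogeneous-equation-plus-resolvent-formula mechanism is the substantive idea your proposal lacks.
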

Along the way we will give simple\footnote{Using Whiting's integral transformations \cite{n1} but avoiding differential transformations or a physical space argument with a new metric.} proofs of
\begin{theo}(Mode Stability)(Whiting \cite{n1})\label{modeStability} There exist no non-trivial mode solutions corresponding to $\text{Im}\left(\omega\right) > 0$.
\end{theo}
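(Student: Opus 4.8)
The plan is to follow Whiting's strategy of converting the radial ODE into an equivalent problem via integral transformations, the key point being that the existence of a mode solution with $\mathrm{Im}(\omega) > 0$ is equivalent to the vanishing of an appropriate Wronskian, which can in turn be ruled out by a positivity argument after transforming. First I would normalize the radial equation: passing to the variable $u(r^*) = (r^2+a^2)^{1/2}R(r)$, equation \eqref{firstEqn} becomes \eqref{eqn0}, and the boundary conditions \eqref{horizonBound}, \eqref{infinityBound} translate into $u$ decaying exponentially both at $r^* \to -\infty$ (the horizon) and at $r^* \to +\infty$ (null infinity) — here one uses $\mathrm{Im}(\omega) > 0$ together with the sign of $am - 2Mr_+\omega$'s imaginary part to see the horizon asymptotic is genuinely decaying. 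Thus a mode solution is precisely an $L^2$ (indeed exponentially decaying) solution of \eqref{eqn0}, i.e. an honest eigenfunction; our task is to show \eqref{eqn0} has no such eigenfunction for $\mathrm{Im}(\omega) > 0$. The naive energy identity (multiply by $\bar u$, integrate, take imaginary parts) fails precisely because of superradiance: the boundary term at the horizon has an unfavorable sign when $am\omega$ is in the superradiant regime, which is exactly why a direct argument does not close and why Whiting's transformation is needed.

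The heart of the argument is Whiting's integral transformation. Starting from a solution $R$ of \eqref{firstEqn}, one defines a new function, roughly of the form
\[
\tilde R(x) = \int_{r_+}^\infty K(x,r)\, e^{\text{(phase)}}\, \Delta(r)^{s}\, R(r)\, dr,
\]
with a carefully chosen kernel $K$ (built from Bessel-type or exponential factors in $x$ and $r$) and prefactors designed so that (i) $\tilde R$ satisfies a \emph{new} second-order ODE of the same Heun/confluent-Heun type, but with the crucial feature that the transformed equation, after a further change of variables, becomes self-adjoint with a definite-sign potential, and (ii) the decay of $R$ at both ends guarantees the integral converges and that the boundary terms generated when one verifies $\tilde R$ solves its ODE all vanish. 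One then writes down the energy identity for $\tilde R$ in its new variable: multiplying the transformed equation by $\overline{\tilde R}$ and integrating, the boundary terms vanish (by the established decay of $\tilde R$) and the bulk term is manifestly sign-definite — crucially, the superradiant ambiguity has been removed by the transformation, so the identity forces $\tilde R \equiv 0$. Finally one argues that the transformation is injective on the relevant function space (the kernel is, up to harmless factors, an integral transform with a nonvanishing kernel, so $\tilde R \equiv 0 \Rightarrow R \equiv 0$), giving a contradiction with $R$ being a nontrivial mode solution.

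In more detail, the key steps in order are: (1) reduce to showing \eqref{eqn0} has no exponentially-decaying solution when $\mathrm{Im}(\omega) > 0$, verifying carefully that both boundary conditions force genuine decay; (2) record the necessary convergence and decay estimates for the transformed function $\tilde R$ — this is where one needs the asymptotic analysis of \eqref{firstEqn} near $r_+$ and near $\infty$, and it is essential that $\mathrm{Im}(\omega) > 0$ to get exponential (not merely polynomial) decay, which is what makes all boundary terms vanish; (3) verify by direct computation (differentiating under the integral sign, integrating by parts, using that $R$ solves \eqref{firstEqn}) that $\tilde R$ solves the transformed ODE, with all integration-by-parts boundary contributions killed by step (2); (4) put the transformed equation in self-adjoint form and run the sign-definite energy identity to conclude $\tilde R \equiv 0$; (5) invoke injectivity of Whiting's transform to deduce $R \equiv 0$, contradicting nontriviality.

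The main obstacle — and the step I would allocate the most care to — is step (3) together with the bookkeeping in step (2): choosing the transformation kernel and prefactor powers of $\Delta$ so that the transformed equation is \emph{exactly} of the right form with a sign-definite potential, \emph{and} simultaneously so that $\tilde R$ inherits decay strong enough to annihilate every boundary term, is delicate. One has to balance the singular behavior of $R$ at $r_+$ (governed by the exponent $\tfrac{i(am-2Mr_+\omega)}{r_+-r_-}$) against the decay of the kernel, and the polynomial-times-$e^{i\omega r^*}$ behavior at infinity against the kernel's growth there; getting both ends to work with one choice is the crux of Whiting's construction. A secondary subtlety is justifying the differentiation under the integral sign and the interchange of the ODE operator with the integral, which requires the convergence to be locally uniform in $x$ — but once step (2)'s estimates are in hand this is routine. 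The injectivity in step (5) is comparatively soft, following from standard uniqueness/analyticity properties of the relevant integral transform.
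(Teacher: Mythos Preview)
Your overall architecture matches the paper's: Whiting's integral transformation of $R$ to $\tilde u$, an energy identity for $\tilde u$ forcing $\tilde u \equiv 0$, then injectivity of the transform to conclude $R \equiv 0$. The injectivity step (5) is indeed handled in the paper by a short Fourier/analyticity argument, as you anticipate.

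There is, however, a genuine gap in your step (4). The transformed potential $\Phi$ is \emph{not} real-valued when $\mathrm{Im}(\omega)>0$, because both $\omega$ and the angular eigenvalue $\lambda_{ml}$ are complex; so the transformed equation is not self-adjoint and there is no ``definite-sign potential'' in the naive sense. The paper does not multiply by $\overline{\tilde u}$ but rather uses the microlocal energy current $\tilde Q_T := \mathrm{Im}(\tilde u'\,\overline{\omega\tilde u})$, whose derivative yields $\omega_I|\tilde u'|^2 + \mathrm{Im}(\Phi\overline{\omega})|\tilde u|^2$ with $\omega_I = \mathrm{Im}(\omega)>0$. The quantity one must prove nonnegative is $\mathrm{Im}(\Phi\overline{\omega})$, and while most of its terms are manifestly so, there is a residual term $-\mathrm{Im}\bigl((\lambda_{ml}+a^2\omega^2)\overline{\omega}\bigr)$ whose sign is not a priori clear. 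The paper handles this by returning to the \emph{angular} ODE: multiplying \eqref{sml} by $\overline{\omega S_{ml}}\sin\theta$, integrating by parts, and taking imaginary parts gives exactly $-\mathrm{Im}\bigl((\lambda_{ml}+a^2\omega^2)\overline{\omega}\bigr)\geq 0$. Your outline never mentions the angular equation, and without this input the positivity argument does not close.

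A secondary remark: your phrase ``after a further change of variables, becomes self-adjoint with a definite-sign potential'' sounds closer to Whiting's original route (which also involved differential transformations of $S_{ml}$ and a physical-space argument for a new metric). The paper's point is precisely that one can dispense with all of that and run the argument purely with $\tilde u$ and the angular estimate above; you should make clear which route you intend and, if the simpler one, supply the angular-ODE step.
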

\begin{theo}(Mode Stability on the Real Axis)\label{realModeStability} There exist no non-trivial mode solutions corresponding to $\omega \in \mathbb{R}\setminus\{0\}$.
\end{theo}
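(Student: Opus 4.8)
The plan is to show that the Wronskian $W(\omega,m,l)$ of \eqref{theWronk} never vanishes for $\omega\in\RR\setminus\{0\}$, which, as noted just before \eqref{theWronk}, is equivalent to the non-existence of a non-trivial mode solution. So suppose toward a contradiction that $W(\omega,m,l)=0$; then $u_{\text{hor}}$ and $u_{\text{out}}$ are proportional, and after rescaling there is a non-trivial $u$ solving \eqref{eqn0} that satisfies both boundary conditions simultaneously: $u\sim A\,(r-r_+)^{\frac{i(am-2Mr_+\omega)}{r_+-r_-}}$ at $r=r_+$ and $u\sim B\,e^{i\omega r^*}$ at $r=\infty$. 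Writing $\omega_+:=\tfrac{a}{2Mr_+}$ for the horizon angular velocity, an elementary computation with $\frac{dr^*}{dr}=\frac{r^2+a^2}{\Delta}$ shows that near $r^*=-\infty$ one has $u\sim A\,e^{i\alpha r^*}$ with $\alpha=\frac{am-2Mr_+\omega}{r_+^2+a^2}=\frac{2Mr_+(m\omega_+-\omega)}{r_+^2+a^2}$, so $\alpha$ has the same sign as $m\omega_+-\omega$. I then split into the \emph{non-superradiant} regime $\omega(m\omega_+-\omega)<0$ — which in particular contains every case with $m=0$ — and the closed \emph{superradiant} regime in which $\omega$ lies weakly between $0$ and $m\omega_+$.

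\emph{Step 1 (conserved current).} Since $\omega\in\RR$ and $\lambda_{ml}\in\RR$ by the Sturm--Liouville theory recalled in Section \ref{reviewModes}, the coefficient $\omega^2-V$ in \eqref{eqn0} is real, so $\text{Im}(\bar u\,u')$ is independent of $r^*$. Evaluating it at $r^*=-\infty$ and at $r^*=+\infty$ using the two asymptotics gives the flux identity $\omega\,|B|^2=\alpha\,|A|^2$. In the non-superradiant regime the two sides carry opposite weak signs, forcing $A=B=0$; but $u$ satisfies the horizon-regularity condition \eqref{horizonBound} and is therefore (the characteristic exponents being distinct when $\omega\neq m\omega_+$) a scalar multiple of $u_{\text{hor}}$, so $A=0$ means $u\equiv 0$, a contradiction. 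Hence only the closed superradiant regime survives, and Theorem \ref{realModeStability} is reduced to it.

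\emph{Step 2 (Whiting's integral transformation).} In the superradiant regime the identity of Step 1 is vacuous, so I would instead apply Whiting's integral transformation from \cite{n1} — using only the integral transformation, with no differential transformation or auxiliary physical-space metric. Applied to the radial function $R$ of \eqref{firstEqn}, it produces a function $\tilde R$ solving a closely related second-order ODE whose potential has a favorable sign — the transformed problem is neither trapping nor superradiant — and it carries the boundary conditions \eqref{horizonBound} and \eqref{infinityBound} into the corresponding conditions for $\tilde R$ in the transformed frequency variables. Running the argument of Step 1 for $\tilde u:=(r^2+a^2)^{1/2}\tilde R$, the transformed flux identity now forces the transformed horizon and outgoing amplitudes to vanish; then $\tilde u$ has vanishing Cauchy-type data at an endpoint of its equation, so $\tilde u\equiv 0$, hence $\tilde R\equiv 0$, and by injectivity of Whiting's transform $R\equiv 0$, contradicting non-triviality. (When $\text{Im}(\omega)>0$ the very same transform yields the promised simple proof of Theorem \ref{modeStability}: $\tilde u$ then decays exponentially at both ends, so integrating the transformed equation against $\bar{\tilde u}$ gives $\int\big(|\tilde u'|^2+(\tilde V-\tilde\omega^2)|\tilde u|^2\big)=0$ with a pointwise non-negative integrand.)

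The main obstacle is entirely in Step 2 for \emph{real} $\omega$. The integral defining Whiting's transform no longer converges absolutely, since $R$ merely oscillates and decays like $r^{-1}$ at infinity, so one must regularize it or deform the contour; and one must compute the endpoint behaviour of $\tilde R$ as $r\to r_+$ and $r\to\infty$ — a Watson's-lemma / stationary-phase analysis — precisely enough both to read off the transformed boundary conditions and to verify that the transformed frequency parameters genuinely land in the favorable non-superradiant, non-trapping regime, equivalently that the transformed outgoing amplitude is forced to zero. Keeping $\lambda_{ml}$ real throughout is exactly what makes $\text{Im}(\bar u\,u')$ conserved in both steps. Finally one should inspect the borderline frequency $\omega=m\omega_+$, where Step 1 yields $B=0$ but not immediately $u\equiv 0$; this is cleanest to absorb into Step 2 (e.g.\ by continuity in $\omega$, or directly), while the trivial case $m=0$ is already disposed of by Step 1.
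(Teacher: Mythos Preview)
Your plan is correct and is essentially the paper's approach: apply Whiting's integral transformation to the radial function, verify that the transformed $\tilde u$ satisfies a second-order ODE with \emph{real} potential (Proposition~\ref{eqn}), pin down its endpoint asymptotics for real $\omega$ (Proposition~\ref{asymptotics}), and then run the conserved-current argument on $\tilde u$ to force both boundary amplitudes to vanish; injectivity of the transform (or the identity $|\tilde u(\infty)|^2=\tfrac{(r_+-r_-)^2}{4M\omega^2 r_+}|u(-\infty)|^2$) then gives $u\equiv 0$. Two small remarks: your Step~1 case split is unnecessary---the paper applies the transform uniformly to all real $\omega\neq 0$, since the transformed energy identity $\tilde Q_T(\infty)-\tilde Q_T(-\infty)=0$ is coercive regardless of whether the original frequencies are superradiant; and the transformed equation is not literally the radial ODE with shifted parameters (so $\tilde u\neq (r^2+a^2)^{1/2}\tilde R$ in the naive sense---there are additional prefactors $(x-r_+)^{-2iM\omega}e^{-i\omega x}$), but your qualitative description of why the transformed flux identity succeeds is right. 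The obstacles you flag (making the oscillatory integral sense of the transform for real $\omega$, extracting the endpoint behaviour) are exactly the content of Propositions~\ref{eqn} and~\ref{asymptotics}, handled in Section~\ref{intTransProof} via a holomorphic extension in $z=x+iy$ and careful integration by parts.
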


Before discussing our main application, we need a few definitions.
\begin{defi}\label{admissible}We will say that a $C^{\infty}\left(\mathcal{M}\right)$ function $\psi\left(t,r,\theta,\phi\right)$ is admissible if
    \begin{enumerate}
        \item For each multi-index $\alpha$ with $\left|\alpha\right| \geq 1$, and sufficiently large $r_0$, we have
            \[\int_{r > r_0}\int_{\mathbb{S}^2}\left|\partial^{\alpha}\psi\right|^2\Big|_{t = 0}r^2\sin\theta\ dr\ d\theta\ d\phi < \infty.\]
        \item For each multi-index $\alpha$ with $\left|\alpha\right| \geq 0$, and Boyer-Lindquist $(r,\theta,\phi) \in \left(r_+,\infty\right)\times \mathbb{S}^2$, we have
            \[\int_0^{\infty}\left|\partial^{\alpha}\psi\right|^2dt < \infty.\]
        \item For every compact $K \in \left(r_+,\infty\right)\times\mathbb{S}^2$ and multi-index $\alpha$ with $\left|\alpha\right| \geq 0$, we have
            \[\int_0^{\infty}\int_K\left|\partial^{\alpha}\psi\right|^2\sin\theta\ dt\ dr\ d\theta\ d\phi < \infty.\]
    \end{enumerate}
    All of these derivatives are Boyer-Lindquist derivatives.
\end{defi}
\begin{defi}\label{boundedFreq}
Let $h$ be an admissible function on Kerr. Let $\mathscr{B} \subset \mathbb{R}$ and $\mathscr{C} \subset \{(m,l) \in \mathbb{Z}\times \mathbb{Z} : l \geq |m|\}$ be such that
\[C_{\mathscr{B}} := \sup_{\omega \in \mathscr{B}}\left(\left|\omega\right| + \left|\omega\right|^{-1}\right) < \infty,\]
\[C_{\mathscr{C}} := \sup_{m,l \in \mathscr{C}}\left(\left|m\right| + \left|l\right|\right) < \infty.\]
Then we define
\[P_{\mathscr{B},\mathscr{C}}h := \]
\[\int_{\mathscr{B}}\sum_{(m,l) \in \mathscr{C}}\left(\int_0^{\pi}\int_0^{2\pi}\int_{-\infty}^{\infty}h e^{i\omega \tau}e^{-im\varphi}S_{\omega ml}\sin\vartheta\ d\tau\ d\varphi\ d\vartheta\right)S_{\omega ml}e^{im\phi}e^{-i\omega t}\ d\omega.\]
\end{defi}
Next, let $\Sigma_0$ be a spacelike hyperboloidal\footnote{The hyperboloidal condition will be satisfied if for sufficiently large $r$, the hypersurface $\Sigma_0$ is spacelike, given by the zero set of $t - f(r^*)$, and $f$ satisfies
\[\left(f'\right)^2 - 1 = -\frac{C}{r^2} + O\left(r^{-3}\right)\text{ as }r\to\infty.\]
See the discussion in appendix \ref{finiteModeEnergy}.} hypersurface connecting the future event horizon $\mathcal{H}^+$ and future null infinity. The relevant Penrose diagram is given by
\begin{center}
    \begin{tikzpicture}
        [interior/.style={circle,draw=black,fill=black, inner sep=0pt,minimum size = 2.5mm},
        boundary/.style={circle,draw=black,fill=black!60, inner sep=0pt,minimum size = 2.5mm},
        exterior/.style={circle,draw=black,fill=white, inner sep=0pt,minimum size = 2.5mm},
        highlight/.style = {circle,draw= red, fill = red, inner sep=0pt, minimum size = 2.5mm}]
        \draw (0,0) -- (3.535,3.535) -- (7.071,0);
        \node [left] at (1.768,1.768) {$\mathcal{H}^+$};
        \node [right] at (5.4,1.768) {$\mathcal{I}^+$};
        \draw (1.184,1.186) to [out = -30,in = 210] (6.04,.996);
        \node [below] at (3.536,.47) {$\Sigma_0$};
    \end{tikzpicture}
\end{center}

Let $\Sigma_1$ be the image of $\Sigma_0$ under the time $1$ map of the flow generated by $\partial_t$. Define a cutoff $\chi$ which is $0$ in the past of $\Sigma_0$ and identically $1$ in the future of $\Sigma_1$.

Our application of Theorem \ref{qMSRA} will be
\begin{theo}(Boundedness of the Microlocal Energy Flux and Integrated Local Energy Decay in the Bounded-Frequency Regime)\label{boundMicroLocalEnergy} Let $\psi$ be an admissible\footnote{This condition could be relaxed considerably; however, our main goal is to simply give a flavor of the sort of results which follow from Theorem \ref{qMSRA}.} function on Kerr that is also a solution to the wave equation $\Box_g\psi = 0$. Set
\[\psi_{\text{\Rightscissors}} := \chi\psi.\]

Let $\mathscr{B} \subset \mathbb{R}$ and $\mathscr{C} \subset \{(m,l) \in \mathbb{Z}\times \mathbb{Z} : l \geq |m|\}$ be such that
\[C_{\mathscr{B}} := \sup_{\omega \in \mathscr{B}}\left(\left|\omega\right| + \left|\omega\right|^{-1}\right) < \infty\]
\[C_{\mathscr{C}} := \sup_{m,l \in \mathscr{C}}\left(\left|m\right| + \left|l\right|\right) < \infty.\]

Then, for every $r_+ < r_0 < r_1 < \infty$,
\begin{equation}\label{estHorizon}
\int_{\mathcal{H}^+}\left|P_{\mathscr{B},\mathscr{C}}\psi_{\text{\Rightscissors}}\right|^2
\end{equation}
\[+ \int_{\mathcal{I}^+}\left|\partial P_{\mathscr{B},\mathscr{C}}\psi_{\text{\Rightscissors}}\right|^2 + \int_{\mathbb{R}\times (r_0,r_1)\times\mathbb{S}^2}\left|\partial P_{\mathscr{B},\mathscr{C}}\psi_{\text{\Rightscissors}}\right|^2 \leq\]
\[B\left(r_0,r_1,C_{\mathscr{B}},C_{\mathscr{C}},a,M\right) \int_{\Sigma_0}\left|\partial\psi\right|^2\]
where $\left|\partial\psi\right|^2$ denotes a term proportional to a non-degenerate energy flux of a globally timelike vector field (see appendix \ref{theFirstt}). In particular, this energy will degenerate as $r \to \infty$ due to the hyperboloidal nature of $\Sigma_0$:
\begin{align*}
\left|\partial\psi\right|^2_{\Sigma_0} \approx &(\left(\partial_t+\partial_{r^*}\right)\psi)^2 + r^{-2}(\left(\partial_t-\partial_{r^*}\right)\psi)^2 \\
&+ r^{-2}\left(\sin^{-2}\theta(\partial_{\phi}\psi)^2 + (\partial_{\theta}\psi)^2\right)\text{ as }r\to\infty
\end{align*}
The energy at future null infinity is explicitly given by
\[\left|\partial P_{\mathscr{B},\mathscr{C}}\psi_{\text{\Rightscissors}}\right|^2_{\mathcal{I}^+} \approx \lim_{r\to\infty}r^2\left(\left|\partial_t P_{\mathscr{B},\mathscr{C}}\psi_{\text{\Rightscissors}}\right|^2 + \left|\partial_r P_{\mathscr{B},\mathscr{C}}\psi_{\text{\Rightscissors}}\right|^2\right).\]
Note that the spacetime volume form satisfies
\[\text{dVol}_{(t,r,\theta,\phi)} \approx r^2\sin\theta\, dt\, dr\, d\theta\, d\phi.\]
The function $B\left(r_0,r_1,C_{\mathscr{B}},C_{\mathscr{C}},a,M\right)$ can, in principle, be given explicitly.
\end{theo}

Of course, since we are in a bounded-frequency regime, the zeroth order estimate along the horizon (\ref{estHorizon}) controls the microlocal energy flux along the horizon:
\[\int_{\mathscr{B}}\sum_{(m,l)\in\mathscr{C}}\omega\left(am-2Mr_+\omega\right)\left|u(-\infty)\right|^2d\omega.\]
Here
\[u(r^*) := (r^2+a^2)^{1/2}R(r)\]
where $R(r,\omega,m,l)$ is the projection of the Fourier transform in $t$ of $\psi_{\text{\Rightscissors}}$ onto the oblate spheroidal harmonics $S_{\omega ml}$, i.e.
\[R(r) := \int_0^{\pi}\int_0^{2\pi}\int_{-\infty}^{\infty}\psi_{\text{\Rightscissors}}e^{i\omega t}e^{-im\phi}S_{\omega ml}\sin\theta\ dt\ d\phi\ d\theta.\]
The estimate for this term is utilized in Dafermos' and Rodnianski's proof of integrated local energy decay for the wave equation \cite{n2}. For this application, it is very important that the right hand side is at the level of energy.

Before diving into the proofs of our results, we will review the case of mode solutions on Schwarzschild $(a = 0)$ and what is already known about mode solutions on Kerr.
\subsection{Modes on Schwarzschild}\label{modeSchw}
It is instructive to observe that the counterpart to mode stability in the Riemannian setting\footnote{This is the case of a product metric $(\mathbb{R}\times N, -dt^2 + g_N)$ with $(N,g_N)$ complete and Riemannian.} is the ``automatic'' fact that the Laplace-Beltrami operator has no spectrum in the upper half plane. A better way to see the triviality of Riemannian mode stability is to note that the existence of a uniformly timelike vector field $\partial_t$ immediately implies the uniform boundedness of a non-degenerate energy \cite{n8}.

Recall that the Schwarzschild spacetime is the Kerr spacetime with vanishing angular momentum $(a = 0)$. This is not a product metric; nevertheless, $\partial_t$ is a timelike Killing vector field for all $r > r_+$, the associated conserved energy is coercive, and mode stability is immediately established in a similar fashion to the previous paragraph.\footnote{Of course, $\partial_t$ becomes null on the horizon, and thus the conserved energy degenerates as $r \to r_+$. However, a moment's thought shows that this does not affect the argument.}

Mode stability on the real axis for Schwarzschild is more subtle since real mode solutions have infinite energy along asymptotically flat hypersurfaces. However, this does not preclude physical space methods; one simply observes
\begin{enumerate}
    \item The boundary conditions at infinity and the horizon imply that real mode solutions have finite energy along the hypersurface $\Sigma_0$ (see appendix \ref{finiteModeEnergy}).
    \item A straightforward computation shows that the energy flux for such real modes along the portion of null infinity in the future of $\Sigma_0$ must be infinite.
    \item The energy identity associated to $\partial_t$ implies that the energy flux along the portion of null infinity in the future of $\Sigma_0$ must be less than or equal to the energy flux along $\Sigma_0$.
\end{enumerate}
This is a clear contradiction to the existence of real modes.

For later purposes it will be convenient to revisit these arguments from a ``microlocal'' point of view. In phase space, the analogue of the energy flux is the microlocal energy current:
\[Q_T(r^*) := \text{Im}\left(u'\overline{\omega u}\right).\]
Let us show how the microlocal energy can be used to give a short proof of mode stability. Suppose we have a mode solution with corresponding $u(r^*)$ and $\omega = \omega_R + i\omega_I$ for some $\omega_I > 0$. First, we observe that the boundary conditions (\ref{horizonBound}) and (\ref{infinityBound}) imply that $Q_T(\pm\infty) = 0$. Next, we compute
\begin{align*}
-\left(Q_T\right)' &= \omega_I\left|u'\right|^2 + \text{Im}\left(\left(\omega^2-V\right)\overline{\omega}\right)\left|u\right|^2\\
                   & = \omega_I\left(\left|u'\right|^2 + \left(\left|\omega\right|^2 + \frac{(r-2M)\left(rl(l+1) + 2M\right)}{r^4}\right)\left|u\right|^2\right).
\end{align*}
Since the coefficients of $\left|u'\right|^2$ and $\left|u\right|^2$ are positive, the fundamental theorem of calculus implies that $u$ is identically $0$. Algebraically, we are exploiting the fact that the potential $V$ does not depend on $\omega$ and is positive.

Now consider a real mode solution with corresponding $u(r^*)$ and $\omega \in \mathbb{R}\setminus\{0\}$. This time we have ``conservation of energy,''
\[\left(Q_T\right)' = 0.\]
Integrating gives
\[Q_T(\infty)-Q_T(-\infty) = 0 \Rightarrow \]
\[\omega^2\left|u(\infty)\right|^2 + 2Mr_+\omega^2\left|u(-\infty)\right|^2 = 0.\]
We have used the boundary conditions (\ref{horizonBound}) and (\ref{infinityBound}) to evaluate the microlocal energy current at $\pm\infty$. Applying the unique continuation lemma from section \ref{vanish1} immediately implies that $u$ vanishes identically.
\subsection{Modes on Kerr: The Ergoregion, Superradiance, and Whiting's Transformations}
On the Kerr spacetime all of these arguments break down.

In the ergoregion
\[\Delta - a^2\sin^2\theta < 0\]
the Killing vector field $\partial_t$ is no longer timelike. Hence, the associated conserved quantity is no longer coercive and is useless by itself.

At the level of the ODE, we may again define a microlocal energy current:
\[Q_T := \text{Im}\left(u'\overline{\omega u}\right).\]
However,
\[\text{Im}\left(\left(\omega^2-V\right)\overline{\omega}\right) = \]
\[\omega_I\left(\left|\omega\right|^2 - \frac{a^2m^2}{(r^2+a^2)^2} + \frac{\Delta}{(r^2+a^2)^4}\left(a^2\Delta + 2Mr(r^2-a^2)\right)\right) +\] \[\frac{\Delta}{(r^2+a^2)^2}\text{Im}\left(\left(\lambda_{\omega ml}+a^2\omega^2\right)\overline{\omega}\right)\]
is no longer always positive. In fact, for $\omega_I > 0$
\[\text{Im}\left(\left(\omega^2-V\right)\overline{\omega}\right)\left(-\infty\right) = \omega_I\left(\left|\omega\right|^2 - \frac{a^2m^2}{4M^2r_+^2}\right) < 0 \Leftrightarrow \]
\[\left|am\right| - 2Mr_+\left|\omega\right| > 0.\]
This troublesome frequency regime also arises if $\omega \in \mathbb{R}\setminus\{0\}$. For such $\omega$ we still have ``conservation of energy,''
\[\left(Q_T\right)' = 0.\]
Integrating and evaluating with the boundary conditions (\ref{horizonBound}) and (\ref{infinityBound}) gives
\begin{prop}(The Microlocal Energy Estimate)\label{energyEst}
\[\omega^2\left|u(\infty)\right|^2 - \omega\left(am-2Mr_+\omega\right)\left|u(-\infty)\right|^2 = 0.\]
\end{prop}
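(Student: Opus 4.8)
The plan is to carry out the microlocal ``conservation of energy'' argument already indicated for Schwarzschild in Section~\ref{modeSchw}, keeping careful track of the contribution at the horizon. Attached to the mode solution we have the radial function $u(r^*) = (r^2+a^2)^{1/2}R(r)$, which solves $(\ref{eqn0})$, i.e. $u'' + (\omega^2-V)u = 0$; set $Q_T(r^*) := \text{Im}\left(u'\overline{\omega u}\right)$. Since $\omega \in \mathbb{R}\setminus\{0\}$ and (as $\omega$ is real, $\lambda_{ml}$ is a real Sturm--Liouville eigenvalue, and hence) the potential $V$ is real, I would first write $Q_T = \omega\,\text{Im}(u'\bar u)$ and compute $(Q_T)' = \omega\,\text{Im}\left(u''\bar u + |u'|^2\right) = \omega\,\text{Im}\left(-(\omega^2-V)|u|^2\right) = 0$, using the ODE together with the reality of $\omega^2 - V$ and of $|u|^2$. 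Thus $Q_T$ is constant in $r^*$; in particular its limits at $r^* = \pm\infty$, which exist by the asymptotic analysis of Appendix~\ref{odeFacts}, satisfy $Q_T(\infty) = Q_T(-\infty)$.

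It then remains to evaluate the two boundary terms. At $r^* = +\infty$, boundary condition $(\ref{infinityBound})$ gives $u \sim e^{i\omega r^*}$ (up to a constant whose modulus squared is $|u(\infty)|^2$) with $u' \sim i\omega e^{i\omega r^*}$ to leading order and faster-decaying corrections, so $\text{Im}(u'\bar u) \to \omega|u(\infty)|^2$ and $Q_T(\infty) = \omega^2|u(\infty)|^2$. At $r^* = -\infty$ one must first convert $(\ref{horizonBound})$, which is written in the $r$ variable, into an $r^*$-asymptotic: from $\frac{dr^*}{dr} = \frac{r^2+a^2}{\Delta}$ with $\Delta = (r-r_+)(r-r_-)$, $\Delta'(r_+) = r_+-r_-$ and $r_+^2 + a^2 = 2Mr_+$, one gets $r^* = \frac{2Mr_+}{r_+-r_-}\log(r-r_+) + O(1)$ as $r \to r_+$, hence $(r-r_+)^{\frac{i(am-2Mr_+\omega)}{r_+-r_-}} \sim e^{i\left(\frac{am-2Mr_+\omega}{2Mr_+}\right)r^*}$ up to a constant, which has unit modulus precisely because, for real $\omega$, the exponent $\frac{am-2Mr_+\omega}{r_+-r_-}$ is real (so that $|u|^2$ genuinely has a limit at $r^* = -\infty$). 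Multiplying by the prefactor $(r^2+a^2)^{1/2} \to (2Mr_+)^{1/2}$, I would conclude $u \sim e^{i\left(\frac{am-2Mr_+\omega}{2Mr_+}\right)r^*}$ up to a constant; the crucial arithmetic point is that this $(2Mr_+)^{1/2}$ prefactor exactly compensates the $\tfrac{1}{2Mr_+}$ sitting in the frequency, so that, with the normalizations fixed in Definition~\ref{theHor}, one obtains $Q_T(-\infty) = \omega(am-2Mr_+\omega)|u(-\infty)|^2$.

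Combining, the equality $Q_T(\infty) = Q_T(-\infty)$ yields $\omega^2|u(\infty)|^2 - \omega(am-2Mr_+\omega)|u(-\infty)|^2 = 0$, which is the assertion; as a consistency check, specializing to $a = 0$, $r_+ = 2M$ reproduces the Schwarzschild identity of Section~\ref{modeSchw}. The only genuine obstacle lies in the horizon term: executing the change of variables $r \leftrightarrow r^*$ near $r = r_+$ correctly, keeping track of both the $(r^2+a^2)^{1/2}$ prefactor and the normalization conventions of Definition~\ref{theHor}, and confirming that the subleading terms in the $\sim$-expansion of $u$ drop out of $\lim_{r^*\to-\infty}\text{Im}(u'\bar u)$. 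Granting the ODE facts collected in Appendix~\ref{odeFacts}, all of this is routine, and the proof then reduces to the one-line conservation identity of the first paragraph together with the two endpoint evaluations.
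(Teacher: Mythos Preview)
Your proposal is correct and follows exactly the approach the paper uses: the proposition is derived in the text immediately preceding it by noting that for real $\omega$ the potential $V$ is real, hence $(Q_T)'=0$, and then evaluating $Q_T(\pm\infty)$ via the boundary conditions~(\ref{horizonBound}) and~(\ref{infinityBound}). Your write-up simply supplies the horizon computation (the $r\leftrightarrow r^*$ conversion and the role of the $(r^2+a^2)^{1/2}$ prefactor) that the paper leaves implicit; the only thing to be careful about is the precise normalization convention behind the symbol $|u(-\infty)|^2$, which you correctly flag as the one nontrivial bookkeeping point.
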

If $\omega\left(am-2Mr_+\omega\right) < 0$, then this gives a successful estimate of the boundary terms $\left|u(-\infty)\right|^2$ and $\left|u(\infty)\right|^2$. However, if
\begin{equation}\label{superradiantFreq}
\omega\left(am-2Mr_+\omega\right) \geq 0,
\end{equation}
then proposition \ref{energyEst} fails to give an estimate for $\left|u(-\infty)\right|^2$ and $\left|u(\infty)\right|^2$. In the case of (\ref{superradiantFreq}) we say that our frequency parameters are \emph{superradiant}. The existence of superradiant frequencies is the phase space manifestation of the fact that the physical space energy flux associated to $\partial_t$ may be negative along the horizon, i.e.~energy can be extracted from a spinning black hole.

Despite these difficulties, in \cite{n1} Whiting was able to give a relatively short proof of mode stability for a wide class of equations on sub-extremal Kerr, including the wave equation $\Box_g\psi = 0$, i.e.~Theorem \ref{modeStability}. By closely examining the structure of $u$'s and $S_{\omega ml}$'s equations, Whiting found (appropriately non-degenerate) integral and differential transformations taking $u$ to $\tilde{u}$ and $S_{\omega ml}$ to $\tilde{S}_{\omega ml}$ such that
\[\tilde{\psi}(t,r,\theta,\phi) := (r^2+a^2)^{-1/2}e^{-i\omega t}e^{im\phi}\tilde{S}_{\omega ml}(\theta)\tilde{u}(r^*(r))\]
satisfied a wave equation $\Box_{\tilde{g}}\tilde{\psi} = 0$ associated to a new metric $\tilde{g}$ for which there was no ergoregion. After this miracle, the proof concluded with a physical space energy argument as in our discussion of Schwarzschild in section \ref{modeSchw}.
\section{The Wronskian Estimate and Proofs of Mode Stability}\label{wronskEst}
In this section we will explain our extension of Whiting's integral transformations and use this to prove Theorems \ref{qMSRA}, \ref{modeStability}, and \ref{realModeStability}.

It turns out to be useful to work with the inhomogeneous version of $R$'s and $u$'s equations:
\begin{equation}\label{inhomoEqn}
\Delta\frac{d}{dr}\left(\Delta\frac{dR}{dr}\right) - \tilde{V}R = \Delta(r^2+a^2)F(r) =: \Delta\hat{F},
\end{equation}
\[\tilde{V} := -(r^2+a^2)^2\omega^2 + 4Mamr\omega - a^2m^2 + \Delta\left(\lambda_{\omega ml} + a^2\omega^2\right).\]
Here $\hat{F}$ is assumed to be a $C^{\infty}$ function compactly supported in $(r_+,\infty)$.

Recalling that $u(r^*) = (r^2+a^2)^{1/2}R(r)$, we have
\begin{equation}\label{inhomoEqn2}
u'' + \left(\omega^2-V\right)u = H,
\end{equation}
\begin{align*}
V := &\frac{4Mram\omega - a^2m^2 + \Delta(\lambda_{\omega ml} + a^2\omega^2)}{(r^2+a^2)^2}\\
&+\frac{\Delta}{(r^2+a^2)^4}\left(a^2\Delta + 2Mr(r^2-a^2)\right),
\end{align*}
\begin{equation}\label{H}
H(r^*) := \frac{\Delta}{(r^2+a^2)^{1/2}}F(r).
\end{equation}

Our starting point is Whiting's integral transformation:
\begin{align}\label{white}
\tilde{u}(x^*) :=&\ (x^2+a^2)^{1/2}(x-r_+)^{-2iM\omega}e^{-i\omega x}\times\\
&\int_{r_+}^{\infty}e^{\frac{2i\omega}{r_+-r_-}(x-r_-)(r-r_-)}(r-r_-)^{\eta}(r-r_+)^{\xi}e^{-i\omega r}R(r)dr.
\end{align}
Here $\eta$ and $\xi$ are given by
\[\eta := \frac{-i(am-2Mr_-\omega)}{r_+-r_-},\]
\[\xi := \frac{i(am-2Mr_+\omega)}{r_+-r_-}.\]

In \cite{n1} Whiting used the above transformation only on modes satisfying the homogeneous equation with Im$(\omega) > 0$, and the integral was thus absolutely convergent. Since we shall also allow $\omega \in \mathbb{R}\setminus\{0\}$, at first, $\tilde{u}$ only makes sense as an $L^2_{\text{loc}}$ function. Nevertheless, in section \ref{intTransProof} we will establish

\begin{prop}\label{eqn}Let $\text{Im}\left(\omega\right) \geq 0$, $\omega \neq 0$, $R$ solve the inhomogeneous radial ODE (\ref{inhomoEqn}), and $R$ satisfy the boundary conditions from definition \ref{mode}. Define $\tilde{u}$ via Whiting's integral transformation (\ref{white}). Then $\tilde{u}(x)$ is $C^{\infty}$ on $(r_+,\infty)$ and, letting primes denote $x^*$-derivatives, satisfies
\[\tilde{u}'' + \Phi \tilde{u} = \tilde{H},\]
where
\begin{equation}\label{tildeH}
\tilde{H}(x^*) := \frac{(x-r_+)(x-r_-)}{(x^2+a^2)^2}\tilde{G}(x),
\end{equation}
\begin{align*}
\tilde{G}(x) :=&\ (x^2+a^2)^{1/2}(x-r_+)^{-2iM\omega}e^{-i\omega x}\times\\
&\int_{r_+}^{\infty}e^{\frac{2i\omega}{r_+-r_-}(x-r_-)(r-r_-)}(r-r_-)^{\eta}(r-r_+)^{\xi}e^{-i\omega r}\hat{F}(r)dr,
\end{align*}
\[\Phi(x^*) := \frac{(x-r_-)\tilde{\Phi}_1(x)}{(x^2+a^2)^2} - \tilde{\Phi}_2(x),\footnote{For mode stability on the real axis, it is only important that $\Phi$ is real.}\]
\[\tilde{\Phi}_1(x) := \omega^2(x-r_+)^2(x-r_-) - \left(4M\omega^2 + \frac{4\omega(am-2Mr_+\omega)}{r_+-r_-}\right)(x-r_-)(x-r_+) \] \[+4M^2\omega^2(x-r_-) + \left(2am\omega - \lambda_{\omega ml} - a^2\omega^2\right)(x-r_+),\]
\[\tilde{\Phi}_2(x) := \frac{(x-r_+)(x-r_-)}{(x^2+a^2)^4}\left(a^2(x-r_+)(x-r_-) + 2Mx(x^2-a^2)\right).\]
\end{prop}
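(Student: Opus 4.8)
The plan is to treat the defining formula (\ref{white}) as a classical integral transformation: write $\tilde u(x) = A(x)\int_{r_+}^\infty \mathcal K(x,r)R(r)\,dr$ with
\[A(x) := (x^2+a^2)^{1/2}(x-r_+)^{-2iM\omega}e^{-i\omega x}, \qquad \mathcal K(x,r) := e^{\frac{2i\omega}{r_+-r_-}(x-r_-)(r-r_-)}(r-r_-)^{\eta}(r-r_+)^{\xi}e^{-i\omega r},\]
so that also $\tilde G(x) = A(x)\int_{r_+}^\infty \mathcal K(x,r)\hat F(r)\,dr$. Then push the target operator $\mathcal L_x := \partial_{x^*}^2 + \Phi(x^*)$ (regarded, via $\partial_{x^*} = \tfrac{\Delta(x)}{x^2+a^2}\partial_x$, as a second-order operator in $x$) through the integral; the $x$-derivatives that land on $\mathcal K$ are traded for $r$-derivatives by a kernel identity, after which a single integration by parts in $r$ lets the \emph{radial} operator act on $R$, where it is eliminated using (\ref{inhomoEqn}), leaving the source $\tilde H$ plus boundary terms that must be shown to vanish. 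Before any of this one must settle convergence and regularity. Near $r=r_+$, condition (\ref{horizonBound}) says $(r-r_+)^{-\xi}R(r)$ extends smoothly, so the integrand is integrable there with a power of $(r-r_+)$ to spare. Near $r=\infty$, (\ref{infinityBound}) and $r^* = r + 2M\log r + O(1)$ give $e^{-i\omega r}R(r) = r^{2iM\omega}(b_0 r^{-1}+b_1 r^{-2}+\cdots)$, and since $\eta+\xi = -2iM\omega$ the integrand behaves like $r^{-1}e^{icr}$ with $c := \tfrac{2\omega(x-r_-)}{r_+-r_-} \neq 0$ (recall $\omega\neq 0$). For $\text{Im}(\omega) > 0$ the factor $e^{icr}$ decays exponentially, the integrals for $\tilde u$, $\tilde u'$, $\tilde u''$ converge absolutely, and differentiation under the integral sign is immediate; for $\omega\in\mathbb{R}\setminus\{0\}$ the integral is only conditionally convergent, and one regularizes --- e.g.\ by inserting $e^{-\epsilon r}$ and letting $\epsilon\downarrow 0$, or equivalently by integrating by parts a few times in $r$ against the nonstationary phase $e^{icr}$ --- the point being that oscillatory integrals $\int^\infty r^k e^{icr}e^{-\epsilon r}\,dr$ have finite limits as $\epsilon\downarrow 0$.

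The heart of the argument is a kernel identity. Let $\mathcal R_r := \Delta\,\partial_r(\Delta\,\partial_r\,\cdot\,) - \tilde V$, so that (\ref{inhomoEqn}) reads $\Delta^{-1}\mathcal R_r R = \hat F$ and $\Delta^{-1}\mathcal R_r = \partial_r(\Delta\,\partial_r\,\cdot\,) - \Delta^{-1}\tilde V$ is formally self-adjoint. The claim is that
\[\mathcal L_x\bigl[A(x)\mathcal K(x,r)\bigr] = \frac{(x-r_+)(x-r_-)}{(x^2+a^2)^2}\,A(x)\,\bigl(\Delta^{-1}\mathcal R_r\bigr)\bigl[\mathcal K(x,r)\bigr],\]
i.e.\ the new operator $\mathcal L_x$ in $x$, acting on $A\mathcal K$, reproduces the \emph{old} radial operator in $r$ acting on $\mathcal K$, up to the weight $\tfrac{(x-r_+)(x-r_-)}{(x^2+a^2)^2}$ and conjugation by $A$. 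Because $\mathcal K$ is elementary --- an exponential times powers of $(r-r_+)$ and $(r-r_-)$, with $x$ and $r$ coupled only through the product $(x-r_-)(r-r_-)$ in the exponent --- every $x$- or $r$-derivative of $\mathcal K$ is $\mathcal K$ times a rational function of $(x,r)$, so the identity reduces to an equality of rational functions; the explicit values of $\eta$, $\xi$ and $\tilde\Phi$ (hence $\Phi$) recorded in the statement are exactly those making it true. This rational-function identity is Whiting's original observation; the only new bookkeeping is that the weight on the right is retained rather than absorbed, which is how $\hat F$ is carried into $\tilde H$. I expect verifying this identity --- keeping the powers of $(x^2+a^2)$ and the distinction between $\Delta(x)$ and $\Delta(r)$ straight --- to be the main routine obstacle, while the genuinely delicate point is the real-$\omega$ limiting argument.

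Granting the identity, apply $\mathcal L_x$ under the integral to get
\[\mathcal L_x\tilde u(x) = \frac{(x-r_+)(x-r_-)}{(x^2+a^2)^2}\,A(x)\int_{r_+}^\infty \bigl(\Delta^{-1}\mathcal R_r\bigr)\bigl[\mathcal K(x,r)\bigr]\,R(r)\,dr.\]
By Green's identity for the self-adjoint operator $\Delta^{-1}\mathcal R_r$, the integrand equals $\mathcal K\cdot\bigl(\Delta^{-1}\mathcal R_r R\bigr) + \partial_r\bigl[\Delta\,(\partial_r\mathcal K\cdot R - \mathcal K\cdot\partial_r R)\bigr]$; using $\Delta^{-1}\mathcal R_r R = \hat F$, the relations $A(x)\int_{r_+}^\infty \mathcal K\hat F\,dr = \tilde G(x)$ and $\tilde H = \tfrac{(x-r_+)(x-r_-)}{(x^2+a^2)^2}\tilde G$, and integrating the total derivative, one obtains
\[\mathcal L_x\tilde u(x) = \tilde H(x^*) + \frac{(x-r_+)(x-r_-)}{(x^2+a^2)^2}\,A(x)\,\Bigl[\Delta(r)\bigl(\partial_r\mathcal K\cdot R - \mathcal K\cdot\partial_r R\bigr)\Bigr]_{r=r_+}^{r=\infty}.\]
It remains to kill the boundary term. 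At $r=r_+$: writing $\mathcal K = (r-r_+)^{\xi}g(r)$ and $R = (r-r_+)^{\xi}h(r)$ with $g,h$ smooth (the latter by (\ref{horizonBound})), the singular cross terms cancel, $\partial_r\mathcal K\cdot R - \mathcal K\cdot\partial_r R = (r-r_+)^{2\xi}(g'h - gh')$, and since $\Delta(r) = (r-r_+)(r-r_-)$ and $\text{Re}(\xi) = \tfrac{2Mr_+\,\text{Im}(\omega)}{r_+-r_-}\geq 0$, the $r_+$ endpoint vanishes for every $\text{Im}(\omega)\geq 0$. At $r=\infty$: for $\text{Im}(\omega) > 0$ the exponential decay of $\mathcal K$ annihilates it; for $\omega\in\mathbb{R}\setminus\{0\}$ one runs the entire computation with the $e^{-\epsilon r}$-regularized integrand, whose boundary term at $\infty$ vanishes outright, and passes to $\epsilon\downarrow 0$ --- the $O(\epsilon)$ corrections produced by differentiating $e^{-\epsilon r}$ disappearing thanks to the oscillation $e^{icr}$, $c\neq 0$. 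This proves $\tilde u'' + \Phi\tilde u = \tilde H$; since $\tilde H\in L^2_{\text{loc}}$ and $\tilde u\in L^2_{\text{loc}}$, the equation then upgrades $\tilde u$ to $H^2_{\text{loc}}(r_+,\infty)\subset H^1_{\text{loc}}(r_+,\infty)$.
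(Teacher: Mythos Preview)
Your proof is correct and follows essentially the same strategy as the paper: the kernel identity you state is exactly the intertwining relation the paper derives by recasting both the source and target equations as Confluent Heun equations and checking $(\tilde L_z - L_r)e^{A(z-r_-)(r-r_-)}=0$; your Green's-identity step is the paper's Lemma~\ref{intByParts}; and your boundary-term analysis at $r_+$ matches the paper's check that $\frac{dg}{dr^*}(r_+)=0$. The one substantive difference is in how the real-$\omega$ case is regularized: you damp the \emph{integration} variable via $e^{-\epsilon r}$, whereas the paper complexifies the \emph{target} variable, writing $\tilde g(z)$ for $z=x+iy$ with $y>0$ (which, since $\omega>0$, produces the same exponential decay in $r$), proves uniform-in-$y$ $H^1$ bounds and precise large-$x$ asymptotics by a careful integration-by-parts argument (Lemma~\ref{propTranFunc}), and then sends $y\to 0$. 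The paper's route has the advantage that holomorphy in $z$ makes differentiation under the integral automatic for $y>0$, and the same Lemma~\ref{propTranFunc} simultaneously yields the boundary behaviour needed for Proposition~\ref{asymptotics}; your $e^{-\epsilon r}$ scheme is equally valid for the present statement but leaves the ``$O(\epsilon)$ corrections vanish by nonstationary phase'' step somewhat informal --- to make it airtight you would end up reproducing an analysis equivalent to Lemma~\ref{propTranFunc}.
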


Of course, it is important to understand the boundary conditions for $\tilde{u}$. When $\text{Im}\left(\omega\right) > 0$, the following quite crude analysis of $\tilde{u}$ is sufficient.
\begin{prop}\label{asymptoticsEasy}If $\text{Im}\left(\omega\right) > 0$, then
\begin{enumerate}
    \item $\tilde{u} = O\left(\left(x-r_+\right)^{2M\text{Im}\left(\omega\right)}\right)$ as $x \to r_+$.
    \item $\tilde{u}' = O\left(\left(x-r_+\right)^{2M\text{Im}\left(\omega\right)}\right)$ as $x \to r_+$.
    \item $\tilde{u} = O\left(e^{-\text{Im}\left(\omega\right)x}x^{1+2M\text{Im}\left(\omega\right)}\right)$ as $x \to \infty$.
    \item $\tilde{u}' = O\left(e^{-\text{Im}\left(\omega\right)x}x^{1+2M\text{Im}\left(\omega\right)}\right)$ as $x \to \infty$.
\end{enumerate}
\end{prop}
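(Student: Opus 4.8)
The plan is to work directly with the explicit structure of Whiting's transform (\ref{white}), using only crude bounds, as the statement suggests. Write $\tilde u(x^*) = P(x)\,I(x)$, where $P(x) := (x^2+a^2)^{1/2}(x-r_+)^{-2iM\omega}e^{-i\omega x}$ and $I(x)$ is the integral appearing in (\ref{white}), and set $\omega_I := \mathrm{Im}(\omega) > 0$; all primes denote $x^*$-derivatives, which I will handle by differentiating in $x$ and using $dx^*/dx = (x^2+a^2)/\Delta$. The prefactor is elementary: since $|(x-r_+)^{-2iM\omega}| = (x-r_+)^{2M\omega_I}$ one has $|P(x)| \asymp (x-r_+)^{2M\omega_I}$ and $|P'(x)| \lesssim (x-r_+)^{2M\omega_I-1}$ as $x\to r_+$ (the extra power coming from $\partial_x(x-r_+)^{-2iM\omega}$), and $|P(x)|,\,|P'(x)| \lesssim x^{1+2M\omega_I}e^{\omega_I x}$ as $x\to\infty$. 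Since the conversion factor satisfies $\Delta/(x^2+a^2) = (x-r_+)(x-r_-)/(x^2+a^2) \asymp (x-r_+)$ as $x\to r_+$ and $\to 1$ as $x\to\infty$, the loss of one power of $(x-r_+)$ in $P'$ is exactly restored in $\tilde u'$. Thus everything reduces to showing $|I(x)|,\,|I'(x)| = O(1)$ as $x\to r_+$ and $|I(x)|,\,|I'(x)| = O(e^{-2\omega_I x})$ as $x\to\infty$.

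First I would establish a domination bound for the integrand of $I$ and for its $x$-derivative, uniform in $x$ on compact subsets of $(r_+,\infty)$; this simultaneously shows $\tilde u$ is a well-defined, locally bounded function, licenses differentiation under the integral, and yields the ``$\tilde u\in H^1_{\mathrm{loc}}$'' assertion. Using that $R$ is bounded near $r_+$ (its horizon asymptotics) and that $|R(r)| \le C/r$ for large $r$ (its asymptotics at infinity), the modulus of the integrand is bounded by $C\,e^{-\frac{2\omega_I(x-r_-)}{r_+-r_-}(r-r_-)}(r-r_-)^{\mathrm{Re}(\eta)}(r-r_+)^{\mathrm{Re}(\xi)}e^{\omega_I r}|R(r)|$, and two algebraic observations make this run: $\mathrm{Re}(\xi) = 2Mr_+\omega_I/(r_+-r_-) > 0$, so the integrand is integrable at $r_+$; and $\mathrm{Re}(\eta)+\mathrm{Re}(\xi) = 2M\omega_I$, which, together with $2(x-r_-)/(r_+-r_-) > 2$ for $x > r_+ > r_-$, shows that the exponential factor dominates the $e^{\omega_I r}$ and the $\sim r^{2M\omega_I}$ growth for large $r$. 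For $x\to r_+$ this finishes the job at once: on a fixed interval $(r_+,r_++1)$ the quantity $2\omega_I(x-r_-)/(r_+-r_-)$ exceeds $2\omega_I$, so the exponential is dominated by the $x$-independent integrable function $e^{-2\omega_I(r-r_-)}(r-r_-)^{\mathrm{Re}(\eta)}(r-r_+)^{\mathrm{Re}(\xi)}e^{\omega_I r}|R(r)|$ (the extra factor $\tfrac{2i\omega}{r_+-r_-}(r-r_-)$ appearing in $I'$ only multiplies the tail by $r$, which the exponential absorbs), whence $|I(x)|,\,|I'(x)| = O(1)$.

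The hard part will be $x\to\infty$, where $I$ alone must carry the entire factor $e^{-2\omega_I x}$: the prefactor already grows like $e^{\omega_I x}$, so $|I(x)| \lesssim e^{-2\omega_I x}$ is exactly what is needed to reach the stated $e^{-\omega_I x}$. The mechanism I would use is that the modulus of Whiting's kernel, $e^{-\frac{2\omega_I(x-r_-)}{r_+-r_-}(r-r_-)}$, attains its minimum over $r\ge r_+$ at the lower endpoint $r = r_+$, with value $e^{-2\omega_I(x-r_-)}$: factor out precisely this value, then bound the remaining $e^{-\frac{2\omega_I(x-r_-)}{r_+-r_-}(r-r_+)}$ by $e^{-2\omega_I(r-r_+)}$ (valid since $(x-r_-)/(r_+-r_-)\ge 1$), which leaves the $x$-independent integral $\int_{r_+}^\infty e^{-2\omega_I(r-r_+)}(r-r_-)^{\mathrm{Re}(\eta)}(r-r_+)^{\mathrm{Re}(\xi)}e^{\omega_I r}|R(r)|\,dr$. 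This converges — it is $O((r-r_+)^{\mathrm{Re}(\xi)})$ near $r_+$ and, using $\mathrm{Re}(\eta)+\mathrm{Re}(\xi) = 2M\omega_I$ and $|R|\le C/r$, $O(r^{2M\omega_I-1}e^{-\omega_I r})$ for large $r$ — so $|I(x)| \lesssim e^{-2\omega_I(x-r_-)} \lesssim e^{-2\omega_I x}$, and the same estimate with the additional polynomial factor from $I'$ gives $|I'(x)| \lesssim e^{-2\omega_I x}$. Combining with the prefactor bounds and the conversion factor yields all four estimates. The one genuinely delicate point is this last balancing: one must notice that Whiting's kernel already supplies \emph{all} of the decay $e^{-2\omega_I x}$, and retain enough of the residual exponential to beat the $e^{\omega_I r}$ and the slowly decaying ($\sim r^{2M\omega_I-1}$) tail produced by the powers $\eta,\xi$ and the behaviour of $R$ at infinity, rather than crudely discarding it.
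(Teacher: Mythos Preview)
Your proposal is correct and matches the paper's intended approach. The paper does not actually write out a proof of this proposition: in section~\ref{intTransProof} it explicitly restricts to $\omega\in\mathbb{R}\setminus\{0\}$ and remarks that for $\text{Im}(\omega)>0$ ``the proofs are much easier and follow from the same sort of reasoning,'' having already described the proposition as a ``quite crude analysis.'' Your direct modulus estimate---exploiting absolute convergence of the integral, the identities $\text{Re}(\xi)=\tfrac{2Mr_+\omega_I}{r_+-r_-}>0$ and $\text{Re}(\eta)+\text{Re}(\xi)=2M\omega_I$, and the factorisation of the kernel modulus as $e^{-2\omega_I(x-r_-)}\cdot e^{-\frac{2\omega_I(x-r_-)}{r_+-r_-}(r-r_+)}$ to extract all of the $e^{-2\omega_I x}$ decay while retaining an $x$-independent exponentially convergent majorant---is precisely the ``crude'' argument the paper has in mind, as opposed to the integration-by-parts machinery (Lemma~\ref{propTranFunc}) needed when $\omega$ is real and the integral is only conditionally convergent.

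One minor remark: you mention establishing ``the $\tilde u\in H^1_{\mathrm{loc}}$ assertion,'' but that claim belongs to Proposition~\ref{eqn}, not to the present proposition; for $\text{Im}(\omega)>0$ your domination argument in fact gives $\tilde u\in C^1$ directly, which is stronger and of course implies it.
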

When $\omega \in \mathbb{R}\setminus\{0\}$ we need to be a little more precise.
\begin{prop}\label{asymptotics}If $\omega \in \mathbb{R}\setminus\{0\}$, then
\begin{enumerate}
    \item $\tilde{u}$ is uniformly bounded.
    \item $\left|\tilde{u}(\infty)\right|^2 = \frac{(r_+-r_-)^2\left|\Gamma\left(2\xi+1\right)\right|^2}{8M\omega^2 r_+}\left|u(-\infty)\right|^2$.
    \item $\tilde{u}'$ is uniformly bounded.
    \item $\tilde{u}' - i\omega \tilde{u} = O(x^{-1})$ at $x^* = \infty$.
    \item $\tilde{u}' + \frac{i\omega(r_+-r_-)}{r_+}\tilde{u} = O(x-r_+)$ at $x^* = -\infty$.
\end{enumerate}
\end{prop}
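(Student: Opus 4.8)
The plan is to extract the leading-order behaviour of the integral
\[
I(x) := \int_{r_+}^{\infty} e^{\frac{2i\omega}{r_+-r_-}(x-r_-)(r-r_-)}(r-r_-)^{\eta}(r-r_+)^{\xi}e^{-i\omega r}R(r)\,dr
\]
at the two endpoints $x \to r_+$ and $x \to \infty$, and then read off the stated asymptotics for $\tilde{u}(x^*) = (x^2+a^2)^{1/2}(x-r_+)^{-2iM\omega}e^{-i\omega x}I(x)$ together with its $x^*$-derivative. The crucial inputs are the boundary conditions on $R$ from Definition \ref{mode}: near $r_+$ one has $R \sim (r-r_+)^{\xi}$ (note $\xi = \frac{i(am-2Mr_+\omega)}{r_+-r_-}$, which is purely imaginary when $\omega$ is real, so $|R|$ is bounded there), and near $r = \infty$ one has $R \sim e^{i\omega r^*}/r$. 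Since $\omega$ is real the oscillatory factors no longer provide absolute convergence, so the integral must be understood as an oscillatory/improper integral; the key observation (already needed for Proposition \ref{eqn}, which I may invoke) is that $e^{-i\omega r}R(r) = O(r^{-2})$ at infinity — because $r^* - r \to \text{const}$ there, the phases $e^{i\omega r^*}$ and $e^{i\omega r}$ cancel up to a bounded factor and the $1/r$ from $R$ combines with the $1/r$ from $dr^* = dr(1 + O(r^{-2}))$ versus the remaining decay in the expansion (\ref{infinityBound}) — making $I(x)$ and, after differentiation under the integral sign, $I'(x)$ absolutely convergent, uniformly in $x$ on compact sets. This gives item 1 (uniform boundedness of $\tilde{u}$) essentially immediately, and item 3 once one checks that $I'$ has the same structure.

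For the refined asymptotics at $x \to \infty$ (items 2 and 4): in the exponent, $e^{\frac{2i\omega}{r_+-r_-}(x-r_-)(r-r_-)}$ becomes, for large $x$, a rapidly oscillating kernel in $r$. I expect the standard stationary-phase/endpoint argument to show that the dominant contribution comes from a neighbourhood of $r = r_+$ (the left endpoint, where the amplitude $(r-r_-)^{\eta}(r-r_+)^{\xi}e^{-i\omega r}R(r)$ is nonsmooth because of the factor $(r-r_+)^{\xi+\xi} = (r-r_+)^{2\xi}$ coming from $R \sim (r-r_+)^{\xi}$). Using the Euler-integral evaluation $\int_0^{\infty} s^{\alpha}e^{-\beta s}\,ds = \Gamma(\alpha+1)\beta^{-\alpha-1}$ (as Whiting does in the $\text{Im}(\omega) > 0$ case, then continued to the real axis) one gets $I(x) \sim c\, x^{-(2\xi+\eta+1)} e^{\text{(phase)}}\, u(-\infty)$-type behaviour, where the constant $c$ is an explicit Gamma-function combination and $u(-\infty)$ enters because near $r_+$, $R(r)(r-r_+)^{-\xi} \to (r_+^2+a^2)^{-1/2} u(-\infty)$ by Definition \ref{theHor} (up to the normalization). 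Multiplying by $(x^2+a^2)^{1/2}(x-r_+)^{-2iM\omega}e^{-i\omega x}$ and tracking the powers — here one uses $\eta + \xi = -2iM\omega$ and $\text{Re}(\xi) = \text{Re}(\eta) = 0$ so all these powers of $x$ have modulus $1$ — the powers conspire to give a finite nonzero limit for $|\tilde u(x)|$, and a careful bookkeeping of the phases and the subleading $O(x^{-1})$ correction yields item 4, $\tilde{u}' - i\omega\tilde{u} = O(x^{-1})$, hence a genuine outgoing wave at infinity. The constant $\frac{(r_+-r_-)^2}{4M\omega^2 r_+}$ in item 2 should drop out of the modulus-squared of the Gamma-function prefactor together with the $(r_+^2+a^2)$ and $(r_+-r_-)$ bookkeeping; I would verify this by a direct computation of $|\Gamma(\cdot)|^2$ using $|\Gamma(1+iy)|^2 = \pi y/\sinh(\pi y)$ if needed, but more likely it simplifies algebraically.

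For $x \to r_+$ (item 5): now $(x^2+a^2)^{1/2}(x-r_+)^{-2iM\omega}e^{-i\omega x}$ is smooth and bounded with modulus $\to (r_+^2+a^2)^{1/2}$, and $I(x)$ is smooth in $x$ down to $x = r_+$ because the kernel $e^{\frac{2i\omega}{r_+-r_-}(x-r_-)(r-r_-)}$ depends smoothly on $x$ and the $r$-integral converges uniformly with all $x$-derivatives. So $\tilde{u}$ and $\tilde{u}'$ extend smoothly to $x = r_+$; item 5 then says the extension satisfies a first-order ODE relation at the endpoint, which I would obtain by differentiating $\tilde{u}$ directly: $\tilde{u}' = (\text{d}x^*/\text{d}x)^{-1}\frac{d}{dx^*}[\cdots]$, evaluate the logarithmic derivative of the prefactor at $x = r_+$ (which produces the $-2iM\omega/(x-r_+)$ term, cancelled against a matching term from differentiating $e^{\frac{2i\omega}{r_+-r_-}(x-r_-)(r-r_-)}$ inside the integral), and collect the surviving constant, which should be exactly $-\frac{i\omega(r_+-r_-)}{r_+}$ — the horizon frequency factor $\frac{r_+-r_-}{r_+^2+a^2} \cdot (r_+^2+a^2)/r_+$, reflecting that $dr^*/dr \to \infty$ like $(r_+^2+a^2)/\Delta$ at $r_+$. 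The main obstacle is making the oscillatory-integral endpoint asymptotics at $x = \infty$ (items 2 and 4) fully rigorous for real $\omega$: one must justify differentiation under the integral sign, the uniform convergence, and the extraction of the leading Gamma-function term without the safety net of absolute convergence — I expect this to require either an integration-by-parts regularization in $r$ near infinity (using the $O(r^{-2})$ decay of $e^{-i\omega r}R$) or an analytic-continuation argument from $\text{Im}(\omega) > 0$, where Proposition \ref{asymptoticsEasy} already gives crude control, combined with a Vitali-type argument to pass to the real axis. Everything else is bookkeeping of powers and phases, guided throughout by consistency with the conserved Wronskian and the microlocal energy current.
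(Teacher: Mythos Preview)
Your proposal rests on a false premise. You claim $e^{-i\omega r}R(r) = O(r^{-2})$ because ``$r^* - r \to \text{const}$''. But $\tfrac{dr^*}{dr} = 1 + \tfrac{2M}{r} + O(r^{-2})$, so $r^* - r = 2M\log r + O(1)$ and the phases $e^{i\omega r^*}$ and $e^{i\omega r}$ differ by the unimodular factor $r^{2iM\omega}$. Combined with $(r-r_-)^{\eta}(r-r_+)^{\xi}$ (also unimodular since $\eta,\xi$ are purely imaginary) and $R \sim e^{i\omega r^*}/r$, the integrand of $I(x)$ has modulus only $\sim r^{-1}$ at infinity and is \emph{not} absolutely integrable. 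This is precisely why the paper says that for real $\omega$ the transform ``at first only makes sense as an $L^2_{\text{loc}}$ function.'' Your claims that items 1 and 3 follow ``essentially immediately'', and that item 5 follows because ``the $r$-integral converges uniformly with all $x$-derivatives'', are therefore unjustified: neither $I(x)$ nor its $x$-derivatives are given by absolutely convergent integrals. The Euler-integral approach you propose for items 2 and 4 has the same defect: the model integral $\int_0^{\infty} s^{2\xi}e^{i\kappa s}\,ds$ with $2\xi$ purely imaginary and $\kappa$ real is divergent (the integrand has constant modulus at infinity), so the Gamma-function evaluation is not available on the real axis without a separate regularization argument that you have not supplied.

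The paper's route is different. It works with $z = x + iy$, $y > 0$, where all integrals converge absolutely, and integrates by parts in $r$ against the oscillating kernel $e^{A(z-r_-)(r-r_-)}$, splitting the integration at $r_+ + \epsilon$ with $\epsilon$ chosen as an inverse power of $x$. Each integration by parts gains a factor $(A(z-r_-))^{-1} \sim x^{-1}$; the boundary term at $r_++\epsilon$ from the first integration by parts is the leading contribution (of modulus $\sim x^{-1}$), while two apparently larger contributions from the second integration by parts are shown to cancel explicitly to $O(x^{-2}\log x)$. For $\partial_x\tilde g - 2i\omega\tilde g$ one repeats the scheme with one extra integration by parts. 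All bounds are uniform in $y$, so one can pass to $y = 0$, and the constant in item~2 drops out directly from the boundary term without any Gamma-function identity. The integration-by-parts scheme is also what supplies the control needed for item~5; smooth dependence of the kernel on $x$ alone does not suffice.
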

Here
\[\Gamma\left(z\right) := \int_0^{\infty}e^{-t}t^{z-1}dt\]
is the Gamma function. Recall the well known fact that the (extended) Gamma function is meromorphic, never vanishes, and only has poles at $0$, $-1$, $-2$, $\cdots$.

Let's see how these propositions restricted to the homogeneous case allow for immediate proofs of both mode stability in the upper half plane and on the real axis via the microlocal energy current:
\[\tilde{Q}_T := \text{Im}\left(\tilde{u}'\overline{\omega\tilde{u}}\right).\]
\begin{proof}(Mode Stability, Theorem \ref{modeStability}) Suppose we had a mode solution with corresponding $(u,S_{\omega ml},\lambda_{\omega ml})$ and $\omega = \omega_R + i\omega_I$ with $\omega_I > 0$. Let $\tilde{u}$ be defined by (\ref{white}). Proposition \ref{asymptoticsEasy} implies that $\tilde{Q}_T\left(\pm\infty\right) = 0$. We proceed as in our discussion of Schwarzschild from section \ref{modeSchw} with $\tilde{u}$ replacing $u$:
\[0 = -\tilde{Q}_T\big|^{\infty}_{-\infty} = -\int_{-\infty}^{\infty}\left(\tilde{Q}_T\right)'dr^* = \int_{-\infty}^{\infty}\left(\omega_I\left|\tilde{u}'\right|^2 + \text{Im}\left(\Phi\overline{\omega}\right)\left|\tilde{u}\right|^2\right)dr^*.\]
Hence, if we can show that $\text{Im}\left(\Phi\overline{\omega}\right) \geq 0$, we may conclude that $\tilde{u}$ vanishes.

An easy computation using the formula from proposition \ref{eqn} gives
\[\text{Im}\left(\Phi\overline{\omega}\right) = \omega_I\left(\frac{(x-r_-)}{(x^2+a^2)^2}\Psi_0 +  \frac{(x-r_+)(x-r_-)}{(x^2+a^2)^4}\Psi_1\right) - \] \[\frac{(x-r_+)(x-r_-)}{(x^2+a^2)^2}\text{Im}\left(\left(\lambda_{\omega ml}+a^2\omega^2\right)\overline{\omega}\right),\]
\[\Psi_0 := \left|\omega\right|^2(x-r_+)^2(x-r_-) + \frac{8M^2\left|\omega\right|^2}{r_+-r_-}(x-r_-)(x-r_+) + 4M^2\left|\omega\right|^2(x-r_-),\]
\[\Psi_1 := a^2(x-r_+)(x-r_-) + 2Mx(x^2-a^2).\]
All of these terms are clearly positive except for $-\text{Im}\left(\left(\lambda_{\omega ml}+a^2\omega^2\right)\overline{\omega}\right)$. For this term we need to return to $S_{\omega ml}$'s equation (\ref{sml}):
\[\frac{1}{\sin\theta}\frac{d}{d\theta}\left(\sin\theta\frac{dS_{\omega ml}}{d\theta}\right) - \left(\frac{m^2}{\sin^2\theta}+a^2\omega^2\sin^2\theta\right)S_{\omega ml} + \left(\lambda_{\omega ml}+a^2\omega^2\right)S_{\omega ml} = 0.\]
Now multiply the equation by $\overline{\omega S_{\omega ml}}\sin\theta$, integrate by parts, and take the imaginary part. There are no boundary terms due to $S_{\omega ml}$'s boundary conditions,\footnote{Recall that the boundary conditions (\ref{boundSml}) required that $e^{im\phi}S_{\omega ml}(\theta)$ extend smoothly to $\mathbb{S}^2$. More explicitly, let $x := \cos\theta$; then an asymptotic analysis of the angular ODE shows that the boundary condition (\ref{boundSml}) is equivalent to $S_{\omega ml} \sim (x\pm 1)^{|m|/2}\text{ as }x \to \mp 1$.} and we find
\[\omega_I\int_0^{\pi}\left(\left|\frac{dS_{\omega ml}}{d\theta}\right|^2 + \left(\frac{m^2}{\sin^2\theta} + a^2\left|\omega\right|^2\sin^2\theta\right)\left|S_{\omega ml}\right|^2\right)\sin\theta d\theta = \] \[-\int_0^{\pi}\left(\text{Im}\left(\left(\lambda_{\omega ml}+a^2\omega^2\right)\overline{\omega}\right)\right)\left|S_{\omega ml}\right|^2\sin\theta d\theta \Rightarrow \]
\[-\text{Im}\left(\left(\lambda_{\omega ml}+a^2\omega^2\right)\overline{\omega}\right) \geq 0.\]
We conclude that $\text{Im}\left(\Phi\overline{\omega}\right)$ is positive, and hence that $\tilde{u}$ must vanish.

In terms of $R$, this implies that
\[\tilde{R}(x) := \int_{r_+}^{\infty}e^{\frac{2i\omega}{r_+-r_-}(x-r_-)(r-r_-)}(r-r_-)^{\eta}(r-r_+)^{\xi}e^{-i\omega r}R(r)dr\]
vanishes for all $x \in (r_+,\infty)$. To see that this implies that $R$ vanishes, we first extend $R$ by $0$ to all of $\mathbb{R}$ and note that the Fourier transform of $(r-r_-)^{\eta}(r-r_+)^{\xi}e^{-i\omega r}R(r)$ is, up to a change of variables,
\[\hat{R}(z) := \int_{-\infty}^{\infty}e^{2i\left|\omega\right|^2z(r-r_-)}(r-r_-)^{\eta}(r-r_+)^{\xi}e^{-i\omega r}R(r)dr.\]
In view of the support of $R$, $\hat{R}$ extends to a holomorphic function on the upper half plane. The vanishing of $\tilde{R}$ for $x \in (r_+,\infty)$ implies that $\hat{R}$ vanishes along the line $\{\frac{y}{\overline{\omega}} : y \in (1,\infty)\}$. Analyticity implies that $\hat{R}$ and hence $R$ itself vanishes.
\end{proof}
Note that the above proof occurs completely at the level of $\tilde{u}$ and $S_{\omega ml}$. In particular, we neither need Whiting's differential transformations of $S_{\omega ml}$ (see section IV of \cite{n1}) nor a physical space argument with a new metric (see section VI of \cite{n1}).
\begin{proof}(Mode Stability on the Real Axis, Theorem \ref{realModeStability})
Suppose we have a mode solution with corresponding $(u,S_{\omega ml},\lambda_{\omega ml})$ and $\omega \in \mathbb{R}\setminus\{0\}$. Let $\tilde{u}$ be defined by (\ref{white}). Then, noting that $\Phi$ from proposition \ref{eqn} is real, we have conservation of energy:
\[\left(\tilde{Q}_T\right)' = 0 \Rightarrow \]
\[\tilde{Q}_T(\infty) - \tilde{Q}_T(-\infty) = 0.\]
Now the boundary conditions from proposition \ref{asymptotics} imply that we get a useful estimate out of this:
\[\tilde{Q}_T(\infty) - \tilde{Q}_T(-\infty) = \]
\[\frac{1}{2}\left(\omega^2|\tilde{u}(\infty)|^2 + |\tilde{u}'(\infty)|^2 + \omega^2\frac{r_+-r_-}{r_+}|\tilde{u}(-\infty)|^2 + \frac{r_+}{r_+-r_-}|\tilde{u}'(-\infty)|^2\right).\]
The unique continuation lemma from section \ref{vanish1} implies that $\tilde{u}$ must vanish.

In terms of $R$, we see that
\[\tilde{R}(y) := \int_{-\infty}^{\infty}e^{2i\omega y(r-r_-)}(r-r_-)^{\eta}(r-r_+)^{\xi}e^{-i\omega r}R(r)dr\]
vanishes for $y \in (1,\infty)$, where we have extended $R$ by $0$ so that it is defined on all of $\mathbb{R}$. However, it is well known that the Fourier transform of a non-trivial function supported in $(0,\infty)$ cannot vanish on an open set.\footnote{This follows from holomorphically extending to the upper half plane and the Schwarz reflection principle.}

As an alternative to this argument, one may instead use the fact from proposition \ref{asymptotics} that
\[\left|\tilde{u}(\infty)\right|^2 = \frac{(r_+-r_-)^2\left|\Gamma\left(2\xi+1\right)\right|^2}{8M\omega^2 r_+}\left|u(-\infty)\right|^2\]
to conclude that $u(-\infty)$ must vanish. Proposition \ref{energyEst} then implies that $u(\infty)$ and hence $u$ vanishes (again using the unique continuation lemma from section \ref{vanish1}).
\end{proof}
Note that this proof is even simpler than the proof of mode stability in the upper half plane since we only need to refer to $\tilde{u}$.

To produce quantitative estimates for the Wronskian we shall need to work a little harder than we did for the qualitative statements. Before proving Theorem \ref{qMSRA} let's recall some notation and prove two propositions and a lemma. Let $\mathscr{A}$ be as in the statement \ref{qMSRA}, let $(\omega,m,l) \in \mathscr{A}$, and $u$ solve (\ref{inhomoEqn2}) with a non-zero, smooth, compactly supported right hand side (\ref{H}). Define $\tilde{u}$ and $\tilde{H}$ via (\ref{white}) and (\ref{tildeH}). Then we have
\begin{prop}\label{firstBoundEstimate}For $(\omega,m,l) \in \mathscr{A}$, $u$ solving satisfying (\ref{inhomoEqn2}) with a smooth, compactly supported right hand side (\ref{H}), and $\epsilon > 0$ we have
\[\left|u(-\infty)\right|^2 \lesssim (4\epsilon)^{-1}\int_{r_+}^{\infty}\left|F(r)\right|^2r^4dr + \epsilon\int_{r_+}^{\infty}\left|R(r)\right|^2dr.\]
\end{prop}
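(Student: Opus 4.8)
The plan is to rerun the microlocal--energy--current argument that proved Theorem \ref{realModeStability}, but now carrying the forcing term through. First I would form $\tilde{u}$ and $\tilde{H}$ from $R$ and $\hat{F}=(r^2+a^2)F$ via Whiting's transformation (\ref{white}) and (\ref{tildeH}); by Proposition \ref{eqn}, $\tilde{u}\in H^1_{\text{loc}}$ satisfies $\tilde{u}''+\Phi\tilde{u}=\tilde{H}$ with $\Phi$ \emph{real} (here $\omega\in\mathbb{R}\setminus\{0\}$). Differentiating $\tilde{Q}_T:=\text{Im}(\tilde{u}'\overline{\omega\tilde{u}})$ in $x^*$, the reality of $\Phi$ and $\omega$ kills the $|\tilde{u}|^2$ and $|\tilde{u}'|^2$ contributions and leaves $(\tilde{Q}_T)'=\omega\,\text{Im}(\tilde{H}\overline{\tilde{u}})$; integrating over $x^*\in(-\infty,\infty)$ then gives $\tilde{Q}_T(\infty)-\tilde{Q}_T(-\infty)=\omega\int_{-\infty}^{\infty}\text{Im}(\tilde{H}\overline{\tilde{u}})\,dx^*$, the boundary limits existing because Proposition \ref{asymptotics} controls $\tilde{u},\tilde{u}'$ up to $x^*=\pm\infty$.

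For the left-hand side, plugging the asymptotics of Proposition \ref{asymptotics}(4),(5) into the currents at $\pm\infty$ gives $\tilde{Q}_T(\infty)-\tilde{Q}_T(-\infty)=\tfrac12\big(\omega^2|\tilde{u}(\infty)|^2+|\tilde{u}'(\infty)|^2+\omega^2\tfrac{r_+-r_-}{r_+}|\tilde{u}(-\infty)|^2+\tfrac{r_+}{r_+-r_-}|\tilde{u}'(-\infty)|^2\big)$, a sum of four manifestly non-negative terms (since $|a|<M$ forces $r_+-r_->0$). In particular it dominates $\omega^2|\tilde{u}(\infty)|^2$, which by Proposition \ref{asymptotics}(2) equals $\tfrac{(r_+-r_-)^2}{4Mr_+}|u(-\infty)|^2$ — note the $\omega^2$ cancels, so this is a fixed positive multiple (depending only on $a,M$) of the quantity we want to bound. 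Thus it remains only to estimate the right-hand side $\omega\int_{-\infty}^{\infty}\text{Im}(\tilde{H}\overline{\tilde{u}})\,dx^*$.

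The key point, which I expect to be the main obstacle, is an exact cancellation of conformal factors. Writing $\tilde{H}=\tfrac{\Delta}{(x^2+a^2)^2}\tilde{G}$ and noting that the prefactors $(x^2+a^2)^{1/2}(x-r_+)^{-2iM\omega}e^{-i\omega x}$ in both $\tilde{G}$ and $\tilde{u}$ are unimodular for real $\omega$, one checks that $\tilde{H}\overline{\tilde{u}}\,dx^*$ (with $dx^*=\tfrac{x^2+a^2}{\Delta}dx$) collapses to $\mathcal{T}[\hat{F}](x)\,\overline{\mathcal{T}[R](x)}\,dx$, where $\mathcal{T}$ is the bare integral operator $g\mapsto\int_{r_+}^{\infty}e^{\frac{2i\omega}{r_+-r_-}(x-r_-)(r-r_-)}(r-r_-)^{\eta}(r-r_+)^{\xi}e^{-i\omega r}g(r)\,dr$ sitting inside (\ref{white}). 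After the linear substitution $\sigma=\tfrac{2\omega}{r_+-r_-}(x-r_-)$, $\mathcal{T}[g]$ becomes a genuine Fourier transform (in $\sigma$) of $(r-r_-)^{\eta}(r-r_+)^{\xi}e^{-i\omega r}g(r)$ extended by zero; since $\omega$ is real, $\eta$ and $\xi$ are purely imaginary and all three weights have modulus one on $(r_+,\infty)$, so Plancherel yields $\int_{r_+}^{\infty}|\mathcal{T}[g]|^2\,dx\lesssim\int_{r_+}^{\infty}|g|^2\,dr$ (that $x$ only covers a half-line in $\sigma$ helps rather than hurts). Applying Cauchy--Schwarz to $\int\mathcal{T}[\hat{F}]\overline{\mathcal{T}[R]}\,dx$, this Plancherel bound to each factor — where $\int|\hat{F}|^2\,dr=\int(r^2+a^2)^2|F|^2\,dr\approx\int|F|^2r^4\,dr$ — and finally $ab\le(4\epsilon)^{-1}a^2+\epsilon b^2$, one gets $\big|\omega\int_{-\infty}^{\infty}\text{Im}(\tilde{H}\overline{\tilde{u}})\,dx^*\big|\lesssim(4\epsilon)^{-1}\int_{r_+}^{\infty}|F|^2r^4\,dr+\epsilon\int_{r_+}^{\infty}|R|^2\,dr$; combining with the lower bound on $\omega^2|\tilde{u}(\infty)|^2=\tfrac{(r_+-r_-)^2}{4Mr_+}|u(-\infty)|^2$ from the second paragraph gives the proposition. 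The delicate part is exactly this cancellation-and-Plancherel passage — verifying that the conformal factors cancel against $dx^*$ leaving no stray powers of $x$, and that the weights $(r-r_-)^{\eta},(r-r_+)^{\xi}$ are genuinely unimodular: it is here that the realness of $\omega$ is indispensable (for $\text{Im}(\omega)>0$ it fails and one falls back on the cruder Proposition \ref{asymptoticsEasy}).
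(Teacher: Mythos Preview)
Your proposal is correct and follows essentially the same route as the paper's own argument: both apply the microlocal energy current to $\tilde u$ in the inhomogeneous setting, evaluate the boundary terms via Proposition~\ref{asymptotics}, use item~(2) there to pass from $|\tilde u(\infty)|^2$ to $|u(-\infty)|^2$, and then bound $\omega\int\operatorname{Im}(\tilde H\overline{\tilde u})\,dx^*$ by ``changing variables and applying Plancherel'' together with the $ab\le(4\epsilon)^{-1}a^2+\epsilon b^2$ inequality. Your detailed unpacking of the Plancherel step---the cancellation of the unimodular prefactors against $dx^*$ leaving $\mathcal T[\hat F]\,\overline{\mathcal T[R]}\,dx$, and the fact that $\eta,\xi$ are purely imaginary for real $\omega$---is exactly what the paper's one-line reference to Plancherel is encoding.
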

Remark: The implied constants in our $\lesssim$'s will be allowed to depend on the frequency parameters; however, the dependence will always be ``quantitative'' in the sense of theorem \ref{qMSRA}.
\begin{proof}We have
\[\left(\tilde{Q}_T\right)' = \omega\text{Im}\left(\tilde{H}\overline{\tilde{u}}\right) \Rightarrow \]
\[\tilde{Q}_T(\infty) - \tilde{Q}_T(-\infty) = \omega\int_{-\infty}^{\infty}\text{Im}\left(\tilde{H}\overline{\tilde{u}}\right)dx^*.\]
As above, the boundary conditions from proposition \ref{asymptotics} imply that we get a useful estimate:
\[\tilde{Q}_T(\infty) - \tilde{Q}_T(-\infty) = \]
\[\frac{1}{2}\left(\omega^2|\tilde{u}(\infty)|^2 + |\tilde{u}'(\infty)|^2 + \omega^2\frac{r_+-r_-}{r_+}|\tilde{u}(-\infty)|^2 + \frac{r_+}{r_+-r_-}|\tilde{u}'(-\infty)|^2\right).\]
For any $\epsilon > 0$ changing variables and applying Plancherel implies
\[\int_{-\infty}^{\infty}\text{Im}\left(\tilde{H}\overline{\tilde{u}}\right)dr^* \lesssim (4\epsilon)^{-1}\int_{r_+}^{\infty}\left|F(r)\right|^2r^4dr + \epsilon\int_{r_+}^{\infty}\left|R(r)\right|^2dr.\]
To conclude the proof we simply recall that proposition \ref{asymptotics} gives
\[\left|\tilde{u}(\infty)\right|^2 = \frac{(r_+-r_-)^2\left|\Gamma\left(2\xi+1\right)\right|^2}{8M\omega^2 r_+}\left|u(-\infty)\right|^2.\]
\end{proof}

Next, we would like to bootstrap this estimate by working directly with $u$'s/$R$'s ODE to estimate
\[\int_{r_+}^{\infty}\left|R(r)\right|^2dr\]
and then obtain
\begin{prop}\label{secondBoundEstimate}For $(\omega,m,l) \in \mathscr{A}$ and $u$ solving satisfying (\ref{inhomoEqn2}) with a smooth, compactly supported right hand side (\ref{H}), we have
\[\left|u(-\infty)\right|^2 \lesssim \int_{r_+}^{\infty}\left|F(r)\right|^2r^4dr.\]
\end{prop}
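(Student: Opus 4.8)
The plan is to bootstrap Proposition \ref{firstBoundEstimate} by absorbing the $\epsilon\int_{r_+}^\infty |R|^2\,dr$ term into the left-hand side. The key observation is that the full radial ODE (\ref{inhomoEqn2}), now treated directly (not after Whiting's transform), must admit a weighted energy estimate of the schematic form
\[
\int_{r_+}^{\infty}|R(r)|^2\,dr \;\lesssim\; |u(-\infty)|^2 + |u(\infty)|^2 + \int_{r_+}^\infty |F(r)|^2 r^4\,dr,
\]
with constant depending only on $C_{\mathscr A}$, $a$, $M$. Granting such an estimate (this is precisely what section \ref{odeEstimates} is advertised to prove), we combine it with Proposition \ref{firstBoundEstimate} and the identity $|u(\infty)|^2 \lesssim |u(-\infty)|^2$ from the microlocal energy estimate (Proposition \ref{energyEst}, applied in the inhomogeneous setting where it carries an extra $\int |F|^2 r^4$ term on the right). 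This yields
\[
|u(-\infty)|^2 \;\lesssim\; (4\epsilon)^{-1}\!\int_{r_+}^\infty |F|^2 r^4\,dr \;+\; \epsilon\left(|u(-\infty)|^2 + \int_{r_+}^\infty |F|^2 r^4\,dr\right),
\]
and choosing $\epsilon$ small enough (depending only on $C_{\mathscr A}$, $a$, $M$) absorbs the $\epsilon|u(-\infty)|^2$ term to the left, giving the claimed bound $|u(-\infty)|^2 \lesssim \int_{r_+}^\infty |F|^2 r^4\,dr$.

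The order of steps is thus: first, record the inhomogeneous microlocal energy identity $(Q_T)' = \omega\,\mathrm{Im}(H\overline{u})$ and integrate it to relate $|u(\infty)|^2$ to $|u(-\infty)|^2$ modulo an $F$-term, using the boundary conditions (\ref{horizonBound}), (\ref{infinityBound}) to evaluate $Q_T(\pm\infty)$ (on the real axis $Q_T(-\infty) = -\omega(am - 2Mr_+\omega)|u(-\infty)|^2$ and $Q_T(\infty) = \omega^2|u(\infty)|^2$, so in the superradiant regime the two boundary quantities are comparable once the sign issue is handled — and in fact it is the transformed current $\tilde Q_T$ of the previous proof that makes both boundary terms manifestly coercive, so one may prefer to keep working with $\tilde u$). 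Second, invoke the ODE energy estimate from section \ref{odeEstimates} to bound $\int_{r_+}^\infty |R|^2\,dr$ by the boundary data plus $\int|F|^2 r^4$. Third, feed both into Proposition \ref{firstBoundEstimate} and absorb.

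The main obstacle is establishing the weighted $L^2$ bound on $R$ over the whole exterior $(r_+,\infty)$ with a constant depending only on the bounded-frequency data. Near the horizon and near infinity one has oscillatory (non-decaying) behavior of $R$, so the weight $r^4$ (equivalently, working with $u$ in the $r^*$ variable on a compact set plus controlling the asymptotic regions via the boundary amplitudes) is essential; the estimate cannot be purely local. Concretely one expects to use a current of the form $f(r^*)\,\mathrm{Re}(u'\overline{u}) + g(r^*)|u|^2$ with carefully chosen $f,g$ — a standard Morawetz/virial-type multiplier — whose divergence controls $|u|^2$ pointwise up to the (already-controlled) boundary terms and the inhomogeneity; the quantitative dependence on $C_{\mathscr A}$ enters through how the potential $V$ and the frequency $\omega^2$ compete, and verifying that $f,g$ can be chosen with constants depending only on $C_{\mathscr A},a,M$ (not on the specific $(\omega,m,l)$) is where the real work lies. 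Since the problem is posed in a bounded-frequency regime, trapping — which would ordinarily obstruct such a Morawetz estimate with an $\omega$-independent constant — is not an issue here: the trapped frequencies form a thin set and the degeneration is harmless after allowing the constant to depend on $C_{\mathscr A}$.
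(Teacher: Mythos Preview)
Your proposal is correct and follows essentially the same route as the paper: bootstrap Proposition~\ref{firstBoundEstimate} by first proving $\int_{r_+}^\infty |R|^2\,dr \lesssim |Q_T(-\infty)| + |Q_T(\infty)| + \int |F|^2 r^4\,dr$ via virial-type currents, then relate $|Q_T(\infty)|$ to $|Q_T(-\infty)|$ through the inhomogeneous identity $(Q_T)' = \omega\,\mathrm{Im}(H\overline u)$ (note the error here is $\int (r^2+a^2)|F||R|\,dr$, so after Cauchy--Schwarz you pick up an additional $\epsilon\int|R|^2$ term, not just $\int|F|^2r^4$---but this absorbs the same way), and finally take $\epsilon$ small. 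The one point you gloss over is the near-horizon part of the $L^2$ estimate: the paper splits into two cases depending on whether $\omega_0 = \omega - am/(2Mr_+)$ is bounded away from zero (where a reversed virial current with weight vanishing at $r^*=+\infty$ works) or not (where $\frac{dV_0}{dr}(r_+)>0$ and a red-shift current $z|u'+i\omega_0 u|^2 - zV_0|u|^2$ with $z\sim -V_0^{-1}$ near $r_+$ is used instead); this dichotomy, rather than trapping, is the actual obstacle in the bounded-frequency regime.
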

Remark: It is important to observe that there are too many powers of $r$ on the right hand side for the above proposition to be directly useful for Theorem \ref{boundMicroLocalEnergy}.
\begin{proof}Following \cite{n2} and \cite{n4}, the ODE techniques used in the proof of this proposition have become fairly standard, e.g. see \cite{n33} and \cite{n36}; hence, in order to focus on the main new ideas we have placed the proof in section \ref{odeEstimates}.
\end{proof}
Next, we switch gears a little and directly construct solutions to the inhomogeneous radial ODE via the following lemma.
\begin{lemm}\label{solFormula}Let $H(x^*)$ be compactly supported. For any $(\omega,m,l) \in \mathscr{A}$, define
\begin{align*}
u(r^*) :=\ &W^{-1}\Bigg(u_{\text{out}}(r^*)\int_{-\infty}^{r^*}u_{\text{hor}}(x^*)H(x^*)dx^* \\
&+ u_{\text{hor}}(r^*)\int_{r^*}^{\infty}u_{out}(x^*)H(x^*)dx^*\Bigg).
\end{align*}
Then
\[u'' + \left(\omega^2-V\right)u = H,\]
and $u$ satisfies the boundary conditions of a mode solution (\ref{horizonBound}) and (\ref{infinityBound}).
\end{lemm}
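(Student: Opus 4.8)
The plan is to recognize the asserted formula as the standard variation-of-parameters (Green's function) construction built from the two distinguished homogeneous solutions $u_{\text{hor}}$ and $u_{\text{out}}$, and to check everything by direct differentiation. First I would note that since $(\omega,m,l)\in\mathscr{A}$, which by the setup of Theorem \ref{qMSRA} sits inside $\mathbb{R}\times\mathbb{Z}\times\mathbb{Z}$ with $|\omega|^{-1}$ bounded, we have $\omega\in\mathbb{R}\setminus\{0\}$; hence Theorem \ref{realModeStability} guarantees that $u_{\text{hor}}$ and $u_{\text{out}}$ are linearly independent, so $W(\omega,m,l)\neq 0$ and $W^{-1}$ makes sense. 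Because the potential $V$ is smooth on $r^*\in(-\infty,\infty)$, the solutions $u_{\text{hor}}$ and $u_{\text{out}}$ are smooth, and since $H$ is continuous with compact support the two functions $I_-(r^*):=\int_{-\infty}^{r^*}u_{\text{hor}}(x^*)H(x^*)\,dx^*$ and $I_+(r^*):=\int_{r^*}^{\infty}u_{\text{out}}(x^*)H(x^*)\,dx^*$ are well-defined and smooth, so $u=W^{-1}(u_{\text{out}}I_-+u_{\text{hor}}I_+)$ is smooth.

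Next I would verify the inhomogeneous equation. Differentiating $u$ once, the terms produced by differentiating the limits of integration are $W^{-1}(u_{\text{out}}u_{\text{hor}}H-u_{\text{hor}}u_{\text{out}}H)=0$, leaving $u'=W^{-1}(u_{\text{out}}'I_-+u_{\text{hor}}'I_+)$. Differentiating once more, the new boundary terms assemble into $W^{-1}(u_{\text{out}}'u_{\text{hor}}-u_{\text{hor}}'u_{\text{out}})H=H$ by the definition of the Wronskian, while the remaining terms are $W^{-1}(u_{\text{out}}''I_-+u_{\text{hor}}''I_+)=-(\omega^2-V)u$ because $u_{\text{hor}}$ and $u_{\text{out}}$ each solve the homogeneous equation. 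Thus $u''+(\omega^2-V)u=H$.

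Finally, the boundary conditions follow from the compact support of $H$. If $\supp(H)\subset[A,B]$ in the $r^*$ variable, then for $r^*<A$ one has $I_-(r^*)=0$ and $I_+(r^*)=\int_{\mathbb{R}}u_{\text{out}}H\,dx^*$ is a constant, so $u$ is a constant multiple of $u_{\text{hor}}$ near $r^*=-\infty$; by Definition \ref{theHor} it then has the asymptotics $(r-r_+)^{i(am-2Mr_+\omega)/(r_+-r_-)}$ there, and since $(r^2+a^2)^{1/2}$ is smooth and nonvanishing at $r=r_+$, $R=(r^2+a^2)^{-1/2}u$ satisfies (\ref{horizonBound}). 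Symmetrically, for $r^*>B$ one has $I_+(r^*)=0$, so $u$ is a constant multiple of $u_{\text{out}}$ near $r^*=\infty$, and Definition \ref{theOut} together with the asymptotic analysis of appendix \ref{odeFacts} (which supplies the full expansion in inverse powers of $r$) yields (\ref{infinityBound}). The computation is entirely routine; the only steps that require any attention are invoking mode stability on the real axis so that $W^{-1}$ is defined, and translating the $r^*=\pm\infty$ asymptotics of $u_{\text{hor}}$ and $u_{\text{out}}$ into the precise $\sim$-conditions for $R$ via the definition of $\sim$ in appendix \ref{odeFacts}; I do not anticipate a genuine obstacle.
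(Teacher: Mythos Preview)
Your proof is correct and is exactly the ``simple computation'' the paper alludes to: the standard variation-of-parameters verification, followed by reading off the boundary behavior from the compact support of $H$. Your explicit remark that $W^{-1}$ is well-defined by Theorem~\ref{realModeStability} (which is proved in the paper before this lemma is used) is a worthwhile observation that the paper leaves implicit.
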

\begin{proof}This is a simple computation.
\end{proof}
Finally, we can prove Theorem \ref{qMSRA}.
\begin{proof}
Define $\tilde{u}$ via lemma \ref{solFormula}. Then we have
\[\left|u(-\infty)\right|^2 = \left|W\right|^{-2}\left|\int_{-\infty}^{\infty}u_{\text{out}}(x^*)H(x^*)dx^*\right|^2.\]
Combining this with proposition \ref{secondBoundEstimate} gives
\[\left|W\right|^{-2} \lesssim \frac{\int_{r_+}^{\infty}\left|(r^2+a^2)^{1/2}\Delta^{-1}H\right|^2r^4dr}{\left|\int_{-\infty}^{\infty}u_{out}(x^*)H(x^*)dx^*\right|^2}\]
Of course, $W$ is independent of $H$, so it remains to pick any particular compactly supported $H$ we want so that the right hand side is finite. Since for sufficiently large $x$, $\left|u_{\text{out}}-e^{i\omega x^*}\right| \leq \frac{C}{x}$ for an explicit constant $C$ (appendix \ref{odeFacts}), it is certainly possible to find such an $H$. Thus, we have produced a quantitative bound for $W^{-1}$.
\end{proof}
\section{Proof of the Energy Flux Bound and Integrated Local Energy Decay}\label{wronskianReduct}
In this section we shall show that Theorem \ref{qMSRA} (quantitative mode stability on the real axis) implies Theorem \ref{boundMicroLocalEnergy} (boundedness of the energy flux and integrated local energy decay in the bounded-frequency regime).
\subsection{Some Exponential Damping, Boundary Conditions, and a Representation Formula}\label{theBoundary}
We shall use the notation introduced for the statement of Theorem \ref{boundMicroLocalEnergy}. In order to avoid dealing with certain technical issues near null infinity, it turns out to be easier for the proof to work with
\[\psi_{\epsilon} := e^{-\epsilon t}\psi\text{ for }\epsilon \geq 0.\]
Recall that before the statement of Theorem \ref{boundMicroLocalEnergy} we defined a cutoff $\chi$ such that $\chi$ is $0$ in the past of $\Sigma_{0}$ and identically $1$ in the future of $\Sigma_1$. We then define
\[\psi_{\epsilon,\text{\Rightscissors}} := \chi\psi_{\epsilon},\]
\[E_{\epsilon} := e^{-\epsilon t}\left(\left(\Box_g\chi\right)\psi + 2\nabla^{\mu}\chi\nabla_{\mu}\psi\right),\]
\[\omega_{\epsilon} := \omega + i\epsilon.\]
Next, we let $F_{\epsilon}$ be the projection onto the oblate spheroidal harmonics of the Fourier transform of $(r^2+a^2)^{-1}\rho^2E_{\epsilon}$, i.e.
\[F_{\epsilon} := \int_0^{\pi}\int_0^{2\pi}\int_{-\infty}^{\infty}(r^2+a^2)^{-1}\rho^2E_{\epsilon}e^{i\omega t}e^{-im\phi}S_{\omega ml}\sin\theta\ dt\ d\phi\ d\theta.\]
Then let $u_{\epsilon}(r^*)$ similarly be the projection onto the oblate spheroidal harmonics of the Fourier transform of $(r^2+a^2)^{1/2}\psi_{\epsilon}$, and
\[H_{\epsilon}(r^*) = \frac{\Delta}{(r^2+a^2)^{1/2}}F_{\epsilon}.\]
We get
\begin{equation}
u_{\epsilon}'' + \left(\omega_{\epsilon}^2-V_{\epsilon}\right)u_{\epsilon} = H_{\epsilon},
\end{equation}
\begin{align*}
V :=\ &\frac{4Mram\omega_{\epsilon} - a^2m^2 + \Delta(\lambda_{\omega_{\epsilon} ml} + a^2\omega_{\epsilon}^2)}{(r^2+a^2)^2}\\
&+ \frac{\Delta}{(r^2+a^2)^4}\left(a^2\Delta + 2Mr(r^2-a^2)\right).
\end{align*}

For notational ease, we shall introduce one last set of definitions. Recalling the notations established in definition \ref{theHor} and \ref{theOut}, we set
\[u_{\text{hor},\epsilon}(r^*) := u_{\text{hor}}(r^*,\omega_{\epsilon},m,l),\]
\[u_{\text{out},\epsilon}(r^*) := u_{\text{out}}(r^*,\omega_{\epsilon},m,l),\]
\[W_{\epsilon} := u_{\text{out},\epsilon}'u_{\text{hor},\epsilon} - u_{\text{hor},\epsilon}'u_{\text{out},\epsilon}.\]
These will satisfy
    \begin{enumerate}
        \item $u_{\text{hor},\epsilon}'' + \left(\omega_{\epsilon}^2-V_{\epsilon}\right)u_{\text{hor},\epsilon} = 0$.
        \item $u_{\text{hor},\epsilon} \sim (r-r_+)^{\frac{i(am-2Mr_+\omega_{\epsilon})}{r_+-r_-}}\text{ near }r^* = -\infty$.
        \item $\left|\left((r(\cdot)-r_+)^{\frac{-i(am-2Mr_+\omega_{\epsilon})}{r_+-r_-}}u_{\text{hor},\epsilon}\right)\left(-\infty\right)\right|^2 = 1$.
        \item $u_{\text{out},\epsilon}'' + \left(\omega_{\epsilon}^2-V_{\epsilon}\right)u_{\text{out},\epsilon} = 0$.
        \item $u_{\text{out},\epsilon} \sim e^{i\omega_\epsilon r^*}\text{ near }r^* = \infty$
        \item $\left|\left(e^{-i\omega_{\epsilon} \left(\cdot\right)}u_{\text{out},\epsilon}\right)\left(\infty\right)\right|^2 = 1$.
        \item $W_{\epsilon} \neq 0$\text{ by mode stability.}
    \end{enumerate}

The following representation formula is a useful starting point.
\begin{prop}\label{aFormula}For every $\epsilon > 0$,
\begin{align}\label{theFormula}
u_{\epsilon}(r^*) = W_{\epsilon}^{-1}\Bigg(&u_{\text{out},\epsilon}(r^*)\int_{-\infty}^{r^*}u_{\text{hor},\epsilon}(x^*)H_{\epsilon}(x^*)dx^*\\
&+ u_{\text{hor},\epsilon}(r^*)\int_{r^*}^{\infty}u_{\text{out},\epsilon}(x^*)H_{\epsilon}(x^*)dx^*\Bigg).
\end{align}
\end{prop}
\begin{proof}As explained in appendix \ref{admissibleBounded}, standard arguments show that the admissibility assumption implies that $\left\vert\left\vert \chi\psi\right\vert\right\vert_{L^{\infty}} =: \beta < \infty$. Hence, we will have
\begin{equation}\label{expDecay}
\left|\psi_{\epsilon}\right| \lesssim \beta e^{-\epsilon t}.
\end{equation}

Along the support of $\psi_{\epsilon}$, there exists a constant $B$ such that $t \geq \left|r^*\right| - B$. We conclude that $\psi_{\epsilon}$ in fact satisfies
\begin{equation}\label{expDecay2}
\left|\psi_{\epsilon}\right| \lesssim \beta \exp\left(-\frac{\epsilon\left( \left|r^*\right| + t\right)}{2}\right).
\end{equation}
It is easy to see from this that $u_{\epsilon}$ is exponentially decreasing as $r^* \to \pm\infty$ (remember that $\epsilon > 0$). Since the argument of appendix \ref{admissibleBounded} also applies to the derivatives of $\psi$, we may also conclude that $H_{\epsilon}$ is exponentially decreasing as $r^* \to \pm\infty$. Hence, we can define
\begin{align*}
\hat{u}_{\epsilon}(r^*) := W_{\epsilon}^{-1}\Bigg(&u_{\text{out},\epsilon}(r^*)\int_{-\infty}^{r^*}u_{\text{hor},\epsilon}(x^*)H_{\epsilon}(x^*)dx^*\\
&+ u_{\text{hor},\epsilon}(r^*)\int_{r^*}^{\infty}u_{\text{out},\epsilon}(x^*)H_{\epsilon}(x^*)dx^*\Bigg).
\end{align*}

Now, a simple computation shows that
\[\left(\hat{u}_{\epsilon}-u_{\epsilon}\right)'' + \left(\omega_{\epsilon}^2-V\right)\left(\hat{u}_{\epsilon}-u_{\epsilon}\right) = 0.\]
Furthermore, $\hat{u}_{\epsilon}-u_{\epsilon}$ is exponentially decreasing as $r^* \to \pm\infty$. From ODE theory (appendix \ref{odeFacts}), $\hat{u}_{\epsilon}-u_{\epsilon}$ must be asymptotic to a linear combination of
\[\left\{(r-r_+)^{\frac{i(am-2Mr_+\omega_{\epsilon})}{r_+-r_-}},(r-r_+)^{\frac{-i(am-2Mr_+\omega_{\epsilon})}{r_+-r_-}}\right\}.\]
The only possible choice is
\[\hat{u}_{\epsilon}-u_{\epsilon} \sim (r-r_+)^{\frac{i(am-2Mr_+\omega_{\epsilon})}{r_+-r_-}}\text{ at }r^* = -\infty.\]
Next, ODE theory (appendix \ref{odeFacts}) implies that near infinity, $\hat{u}_{\epsilon}-u_{\epsilon}$ must be asymptotic to a linear combination of
\[\left\{e^{i\omega_{\epsilon} r^*},e^{-i\omega_{\epsilon} r^*}\right\}.\]
The exponential decay of $\hat{u}_{\epsilon}-u_{\epsilon}$ singles out
\[\hat{u}_{\epsilon}-u_{\epsilon} \sim e^{i\omega_{\epsilon} r^*}\text{ at }r^* = \infty.\]
Thus, $\hat{u}_{\epsilon}-u_{\epsilon}$ satisfies the boundary conditions of a mode solution. Finally, mode stability in the upper half plane implies that $\hat{u}_{\epsilon} = u_{\epsilon}$.
\end{proof}
It will be convenient to use the above formula when $\epsilon = 0$; however, it must be understood in an $L^2$ sense. First we need two lemmas.
\begin{lemm}\label{errorEnergy}
\[\limsup_{\epsilon\to 0}\int_{\mathscr{B}}\sum_{m,l}\int_{r_+}^{\infty}\left|F_{\epsilon}\right|^2r^2\, dr\, d\omega = \int_{\mathscr{B}}\sum_{m,l}\int_{r_+}^{\infty}\left|F\right|^2r^2\, dr\, d\omega \lesssim \int_{\Sigma_0}\left|\partial\psi\right|^2\]
\end{lemm}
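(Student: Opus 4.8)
The plan is to unwind the definition of $F_\epsilon$ in terms of the error $E_\epsilon$ and estimate $E_\epsilon$ by the energy of $\psi$ across $\Sigma_0$, uniformly in $\epsilon$. Recall that $F_\epsilon$ is the projection onto the oblate spheroidal harmonics of the $t$-Fourier transform of $(r^2+a^2)^{-1}\rho^2 E_\epsilon$, where $E_\epsilon = e^{-\epsilon t}\big((\Box_g\chi)\psi + 2\nabla^\mu\chi\nabla_\mu\psi\big)$. The decisive structural point is that $\chi$ is a fixed cutoff which transitions from $0$ to $1$ entirely within the compact-in-$r^*$ slab between $\Sigma_0$ and $\Sigma_1$; hence $\Box_g\chi$ and $\nabla\chi$ are smooth, uniformly bounded, and supported in that fixed compact region. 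Consequently $E_\epsilon$ is controlled pointwise (for $\epsilon \le 1$, say) by $e^{-\epsilon t}$ times a bounded combination of $\psi$ and its first derivatives, supported in the region $\{\Sigma_0 \le \cdot \le \Sigma_1\}$.

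First I would apply the Plancherel theorem in $t$ and the completeness of $\{S_{ml}\}$ in $L^2(\sin\theta\, d\theta)$ to rewrite the left-hand side as a physical-space spacetime integral:
\[
\int_{\mathscr{B}}\sum_{m,l}\int_{r_+}^\infty |F_\epsilon|^2 r^2\, dr\, d\omega \lesssim \int_{-\infty}^\infty\!\!\int_{r_+}^\infty\!\!\int_{\mathbb S^2} \frac{\rho^4}{(r^2+a^2)^2}\,|E_\epsilon|^2\, r^2\, \sin\theta\, d\theta\, d\phi\, dr\, dt,
\]
where I have dropped the frequency cutoffs $\mathscr{B},\mathscr{C}$ since removing them only enlarges the integral (projections are bounded on $L^2$). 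On the support of $\nabla\chi$ the weights $\rho^4/(r^2+a^2)^2$ and $r^2$ are bounded above and below, so this is comparable to $\int e^{-2\epsilon t}\big(|\psi|^2 + |\partial\psi|^2\big)$ over the fixed compact slab $\mathcal R := \operatorname{supp}(\nabla\chi)$, the derivatives being coordinate derivatives, which by boundedness of the metric coefficients on $\mathcal R$ are comparable to the geometric gradient.

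Next I would bound $\int_{\mathcal R} e^{-2\epsilon t}(|\psi|^2 + |\partial\psi|^2)$ by $\int_{\Sigma_0}|\partial\psi|^2$ uniformly in $\epsilon$. Since $e^{-2\epsilon t} \le 1$ on the relevant range (one may arrange $t \ge -B$ on $\mathcal R$ and absorb the constant $e^{2\epsilon B}\le e^{2B}$ for $\epsilon\le 1$), it suffices to bound the $\epsilon$-free quantity $\int_{\mathcal R}(|\psi|^2 + |\partial\psi|^2)$. The slab $\mathcal R$ lies to the future of $\Sigma_0$ and is foliated by translates $\Sigma_s$, $0 \le s \le 1$, of $\Sigma_0$ under the flow of $\partial_t$; applying the energy identity for the (globally timelike) vector field whose flux defines $|\partial\psi|^2$ between $\Sigma_0$ and $\Sigma_s$ — the bulk error term being controlled by the deformation tensor, which is smooth and compactly supported since $\partial_t$ is Killing away from the far region and the slab is compact — gives $\int_{\Sigma_s}|\partial\psi|^2 \lesssim \int_{\Sigma_0}|\partial\psi|^2$ for each $s\in[0,1]$, and then a coarea/Fubini argument integrates this over $s$ to control the spacetime integral of $|\partial\psi|^2$. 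The zeroth-order term $\int_{\mathcal R}|\psi|^2$ is handled by a one-dimensional Poincaré/fundamental-theorem-of-calculus estimate in $t$ along the integral curves of $\partial_t$ starting on $\Sigma_0$, i.e.\ $|\psi(s,\cdot)|^2 \lesssim |\psi|_{\Sigma_0}^2 + \int_0^s |\partial_t\psi|^2\, ds'$, bounded again by the energy; alternatively one invokes that $\chi\psi$ is uniformly bounded (as noted in the proof of Proposition \ref{aFormula}) so $|\psi|$ is controlled on $\mathcal R$ directly. Finally, take $\limsup_{\epsilon\to 0}$: every constant produced is independent of $\epsilon$ (each contains at worst a harmless factor $e^{2B}$), so the $\limsup$ is bounded by $C(a,M)\int_{\Sigma_0}|\partial\psi|^2$, which is the claim.

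I expect the main obstacle to be making the energy estimate on the compact slab genuinely uniform in $\epsilon$ while being careful that the Fourier transform in $t$ is well-defined — this is exactly where the exponential damping $e^{-\epsilon t}$ and the a priori future square-integrability of $\psi$ enter, guaranteeing $E_\epsilon \in L^2_t$ so that Plancherel applies — and, relatedly, correctly tracking the various $r$- and $\theta$-weights $\rho^2, (r^2+a^2)$ to see they are all comparable to constants on the fixed compact support of $\nabla\chi$. The energy identity itself is standard once one observes that $\nabla\chi$ is supported in a region where $\partial_t$ is timelike (or at any rate the identity is applied with a globally timelike field as in the statement of Theorem \ref{boundMicroLocalEnergy}), so no superradiance enters here; the subtlety is purely bookkeeping of constants and the $\epsilon\to 0$ limit.
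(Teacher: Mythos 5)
There is a genuine gap, and it sits at the very foundation of your argument: you assert that $\nabla\chi$ and $\Box_g\chi$ are ``supported in that fixed compact region,'' so that all $r$-weights are comparable to constants on $\mathcal R = \operatorname{supp}(\nabla\chi)$. This is false. The cutoff transitions between $\Sigma_0$ and its time-$1$ translate $\Sigma_1$, and these are hyperboloidal hypersurfaces terminating on the horizon and on \emph{future null infinity}; the slab between them is a region of bounded extent in the retarded time $u$ but with $r$ ranging over all of $(r_+,\infty)$. So $\operatorname{supp}(\nabla\chi)$ is non-compact, the volume weight $r^2$ is unbounded there, and your reduction of the left-hand side to $\int_{\mathcal R} e^{-2\epsilon t}(|\psi|^2+|\partial\psi|^2)$ over a ``compact slab,'' as well as the Poincar\'e-in-$t$ and ``$\chi\psi$ is pointwise bounded'' arguments for the zeroth-order term, do not go through: a pointwise bound on $\psi$ gives no $L^2$ control over an infinite-volume region, and the crude estimate $|E_\epsilon|\lesssim |\psi|+|\partial\psi|$ is off by powers of $r$ once the $r^2$ volume weight is reinstated.

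The paper's proof is designed precisely around this non-compactness. After the Plancherel step (which you do correctly), it exploits the hyperboloidal asymptotics of $\Sigma_0$: writing things in a null frame one finds $|\Box_g\chi|^2 \lesssim 1_{\operatorname{supp}(\nabla\chi)}\,r^{-2}$, and in $\nabla^\mu\chi\nabla_\mu\psi_\epsilon$ the good derivative $(\partial_t+\partial_{r^*})\psi_\epsilon$ appears with an $O(1)$ coefficient while the bad derivative and angular derivatives appear with $O(r^{-1})$ and $O(r^{-2})$ coefficients, matching exactly the degenerate weights in the flux $\left|\partial\psi\right|^2_{\Sigma_0}$. The remaining zeroth-order contribution $\int 1_{\operatorname{supp}(\nabla\chi)}|\psi_\epsilon|^2/r^2\cdot r^2$ over the null slab is then controlled by the Hardy inequality of appendix \ref{hardy} in the $v$-variable, which converts it into derivative terms; this is the content of the footnote remarking that if $\Sigma_0$ were asymptotically flat, zeroth-order terms would necessarily appear on the right-hand side. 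Only after these structural steps does the finite-in-time non-degenerate energy estimate (the one ingredient your sketch shares with the paper) and the limit $\epsilon\to 0$ close the argument. Without the null-frame decay of $\Box_g\chi$ and $\nabla\chi$ and without the Hardy inequality, your proposal cannot produce a right-hand side consisting only of $\int_{\Sigma_0}\left|\partial\psi\right|^2$.
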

In particular, even though there are $0$th order terms in $F$, there are only derivatives of $\psi$ on the right hand side.
\begin{proof}By Plancherel,
\[\limsup_{\epsilon\to 0}\int_{\mathscr{B}}\sum_{m,l}\int_{r_+}^{\infty}\left|F_{\epsilon}\right|^2r^2\, dr\, d\omega \lesssim\] \[\limsup_{\epsilon\to 0}\int_{(t,r,\theta,\phi)}\left(\left|e^{-\epsilon t}\left(\Box_g\chi\right)\psi\right|^2 + \left|e^{-\epsilon t}\nabla^{\mu}\chi\nabla_{\mu}\psi\right|^2\right)r^2\sin\theta\, dt\, dr\, d\theta\, d\phi.\]
We will consider the two terms on the right hand side separately.

For the second term, we simply observe that the asymptotic behavior of $\Sigma_0$ implies
\[\left|e^{-\epsilon t}\nabla^{\mu}\chi\nabla_{\mu}\psi\right|^2 \lesssim \] \[e^{-2\epsilon t}1_{\text{supp}\left(\nabla\chi\right)}\Big(\left|\left(\partial_t+\partial_{r^*}\right)\psi\right|^2 + O(r^{-2})\left|\left(\partial_t-\partial_{r^*}\right)\psi\right|^2 +\] \[O(r^{-2})\left(\left|\partial_{\theta}\psi\right|^2 + \left|\partial_{\phi}\psi\right|^2\right)\Big)\]
where $1_{\text{supp}\left(\nabla\chi\right)}$ denotes the indicator function on the support of $\nabla\chi$.

For the first term, first pick a null frame $(L,\underline{L},E_1,E_2)$ where
\[g(L,L) = g(\underline{L},\underline{L}) = g(E_1,E_2) = g(L,E_i) = g(\underline{L},E_i) = 0,\]
\[g(L,\underline{L}) = -2,\]
\[g(E_1,E_1) = g(E_2,E_2) = 1,\]
\[L = \partial_t+\partial_{r^*} + O(r^{-1}),\]
\[\underline{L} = \partial_t - \partial_{r^*} + O(r^{-1}).\]
Expanding $\Box_g$ in this null frame (see \cite{n8}) gives
\[\left|\Box_g\chi\right|^2 = \left|-\underline{L}L\chi + E^2_1\chi + E^2_2\chi + \left(\nabla_{\underline{L}}L - \nabla_{E_1}E_1 - \nabla_{E_2}E_2\right)\chi\right|^2 \lesssim\]
\[1_{\text{supp}\left(\nabla\chi\right)}r^{-2}.\]
In summary, we have
\[\left|e^{-\epsilon t}\left(\Box_g\chi\right)\psi\right|^2 + \left|e^{-\epsilon t}\nabla^{\mu}\chi\nabla_{\mu}\psi\right|^2 \lesssim 1_{\text{supp}\left(\nabla\chi\right)}\left(\frac{\left|e^{-\epsilon t}\psi\right|^2}{r^2} + e^{-2\epsilon t}\left|\partial\psi\right|^2\right) \Rightarrow \]
\[\limsup_{\epsilon\to 0}\int_{\mathscr{B}}\sum_{m,l}\int_{r_+}^{\infty}\left|F_{\epsilon}\right|^2r^2dr\, d\omega \lesssim\] \[\limsup_{\epsilon\to 0}\int_{(t,r,\theta,\phi)}1_{\text{supp}\left(\nabla\chi\right)}\left(\frac{\left|e^{-\epsilon t}\psi\right|^2}{r^2} + e^{-2\epsilon t}\left|\partial\psi\right|^2\right)r^2\sin\theta dt\, dr\, d\theta\, d\phi \lesssim \]
\begin{equation}\label{stuff2}
\int_{(t,r,\theta,\phi)}1_{\text{supp}\left(\nabla\chi\right)}\left|\partial\psi\right|^2r^2\sin\theta dt\, dr\, d\theta\, d\phi.
\end{equation}

In the last inequality we used a Poincar\'e inequality to control the $0$th order term (note that due to the support of $\nabla\chi$, for each $t$, the $r$ integration occurs over a region of bounded size). Finally, keeping in mind that $\nabla\chi$ is only supported in between $\Sigma_0$ and $\Sigma_1$, we observe that (\ref{stuff2}) is controlled by a constant times
\[\sup_{s \in [0,1]}\int_{\Sigma_s}\left|\partial\psi\right|^2 \lesssim \int_{\Sigma_0}\left|\partial\psi\right|^2.\]
The last inequality uses a \emph{finite-in-time} non-degenerate energy estimate (see \cite{n8}).
\end{proof}
\begin{lemm}\label{cancellation}
\[\left\vert\left\vert\int_{r^*}^{\infty}u_{\text{hor}}(x^*)H(x^*)dx^*\right\vert\right\vert_{L^2_{\omega\in\mathscr{B},(m,l)\in\mathscr{C}}} \leq B\left(r^*,C_{\mathscr{B}},C_{\mathscr{C}}\right) \int_{\Sigma_0}\left|\partial\psi\right|^2,\]

and
\[\lim_{\epsilon\to 0}\int_{r^*}^{\infty}u_{\text{hor},\epsilon}(x^*)H_{\epsilon}(x^*)dx^* =\int_{r^*}^{\infty}u_{\text{hor}}(x^*)H(x^*)dx^* \text{ in }L^2_{\omega\in\mathscr{B},(m,l)\in\mathscr{C}}.\]
\end{lemm}
\begin{proof}
We start with the first assertion. Note that a naive application of Cauchy-Schwarz followed by Plancherel would produce too many powers of $x^*$; however, if we somehow gained a power of $x^{-1}$ we could always use the inequality
\[\left|\int_{r^*}^{\infty}u_{\text{out}}(x^*)H(x^*)x^{-1}dx^*\right|^2 \lesssim \int_{r\left(r^*\right)}^{\infty}\left|F\right|^2r^2dr.\]
After integrating in $\omega$ and summing in $(m,l)$, this can be controlled by Lemma \ref{errorEnergy}. We will denote by $G$ all terms that can be controlled by this sort of brute force Cauchy-Schwarz inequality. Let's return to the troublesome term. We start by observing that
\[\int_{\mathscr{B}}\sum_{(m,l)\in\mathscr{C}}\left|\int_{r^*}^{\infty}u_{\text{out}}(x^*)H(x^*)dx^*\right|^2d\omega = \] \[\int_{\mathscr{B}}\sum_{(m,l)\in\mathscr{C}}\left(\left|\int_{r^*}^{\infty}e^{i\omega x^*}H(x^*)dx^*\right|^2+ G\right)d\omega.\]
The plan is to take advantage of the oscillations in $\omega$ by a suitable application of Plancherel. However, we will first need to account for all of the $\omega$ dependence in $H$. Let's introduce the variables
\[u := \frac{1}{2}(t-r^*)\]
\[v := \frac{1}{2}(t+r^*).\]
From the definitions of the cutoff and the triangle inequality, it follows that
\[\left|\int_{r^*}^{\infty}e^{i\omega x^*}H(x^*)dx^*\right|^2 \lesssim\]
\[\left|\int_{\mathbb{S}^2}\int_{-\infty}^{\infty}\int_{A}^{\infty}e^{2i\omega v}\left(\partial_u\chi\right)\left(\partial_v\psi\right)e^{-im\phi}S_{\omega ml}(\theta,\omega)r\sin\theta dv\, du\, d\theta\,  d\phi\right|^2 + \]
\[\left|\int_{\mathbb{S}^2}\int_{-\infty}^{\infty}\int_{A}^{\infty}e^{2i\omega v}\left(\Box_g\chi\right)\psi e^{-im\phi}S_{\omega ml}(\theta,\omega)r\sin\theta dv\, du\, d\theta\, d\phi\right|^2+ G\]
Here $A$ denotes a large fixed constant possibly depending on $r^*$. Let's focus on the first term on the right hand side since the second term will be treated similarly. Using Plancherel relative to the orthonormal basis $\{e^{im\phi}S_{\omega ml}(\theta)\}$ of $L^2(\sin\theta d\theta d\phi)$ gives
\[\int_{\mathscr{B}}\sum_{(m,l)\in\mathscr{C}}\Bigg|\int_{\mathbb{S}^2}\int_{-\infty}^{\infty}\int_{A}^{\infty}e^{2i\omega v}\left(\partial_u\chi\right)\left(\partial_v\psi\right)e^{-im\phi}\times\]
\[S_{\omega ml}(\theta,\omega)r\sin\theta dv\, du\, d\theta\, d\phi\Bigg|^2d\omega  \]
\begin{equation}\label{blah}
\lesssim\int_{\mathbb{S}^2}\int_{\mathscr{B}}\left|\int_{-\infty}^{\infty}\int_{A}^{\infty}e^{2i\omega v}\left(\partial_u\chi\right)\left(\partial_v\psi\right)rdv\ du\right|^2d\omega\, \sin\theta d\theta\, d\phi.
\end{equation}
Due to the support of $\partial_u\chi$, the $u$ integration occurs over a region of uniformly bounded size. Hence, Cauchy-Schwarz in the $u$ integral implies that (\ref{blah}) is controlled by
\[\int_{\mathbb{S}^2}\int_{-\infty}^{\infty}\int_{-\infty}^{\infty}\left|\int_{A}^{\infty}e^{2i\omega v}\partial_u\chi\partial_v\psi rdv\right|^2d\omega\, du\, \sin\theta d\theta\, d\phi  \lesssim \]
\[\int_{\mathbb{S}^2}\int_{-\infty}^{\infty}\int_{A}^{\infty}\left|\partial_u\chi\partial_v\psi\right|^2 r^2 du\, dv\, \sin\theta d\theta\, d\phi\]
Now we can just appeal to (the proof of) Lemma \ref{errorEnergy}. For the term
\[\left|\int_{\mathbb{S}^2}\int_{-\infty}^{\infty}\int_{A}^{\infty}e^{2i\omega v}\left(\Box_g\chi\right)\psi e^{-im\phi}S_{\omega ml}(\theta,\omega)r\sin\theta dv\, du\, d\theta\, d\phi\right|^2\]
we can carry out exactly the same procedure except that we add a Poincar\'e inequality (just as in proposition \ref{errorEnergy}) at the end so that we can close the estimate at the level of derivatives of $\psi$. In conclusion, we have
\[\int_{\mathscr{B}}\sum_{(m,l)\in\mathscr{C}}\left|\int_{A}^{\infty}u_{\text{out}}(x^*)H(x^*)dx^*\right|^2d\omega \lesssim \int_{\Sigma_0}\left|\partial\psi\right|^2.\]

It is now clear that the second assertion of the lemma can be proved by essentially repeating the above argument with the difference of \[\int_{r^*}^{\infty}u_{\text{hor}}(x^*)H(x^*)dx^*\]
and
\[\int_{r^*}^{\infty}u_{\text{hor},\epsilon}(x^*)H_{\epsilon}(x^*)dx^*.\]
\end{proof}
Now we are ready to prove the following.
\begin{lemm}\label{niceLemma}Let $\mathscr{B} \subset \mathbb{R}$ and $\mathscr{C} \subset \{(m,l) \in \mathbb{Z}\times \mathbb{Z} : l \geq |m|\}$ be such that
\[C_{\mathscr{B}} := \sup_{\omega \in \mathscr{B}}\left(\left|\omega\right| + \left|\omega\right|^{-1}\right) < \infty,\]
\[C_{\mathscr{C}} := \sup_{m,l \in \mathscr{C}}\left(\left|m\right| + \left|l\right|\right) < \infty.\]
Then, for each $r^* \in (-\infty,\infty)$, the formula
\begin{align}\label{superFormula}
u(r^*) = W^{-1}\Bigg(&u_{\text{out}}(r^*)\int_{-\infty}^{r^*}u_{\text{hor}}(x^*)H(x^*)dx^*\\
&+ u_{\text{hor}}(r^*)\int_{r^*}^{\infty}u_{\text{out}}(x^*)H(x^*)dx^*\Bigg).
\end{align}
holds in $L^2_{\omega \in \mathscr{B}}l^2_{(m,l) \in \mathscr{C}}$.
\end{lemm}
\begin{proof}We start with the formula (\ref{theFormula}) and justify the $\epsilon \to 0$ limit term by term. Of course, the convergence of $u_{\epsilon}$ to $u$ is simply a consequence of Plancherel. Next, we observe the following facts (see appendix \ref{odeFacts}):
\begin{enumerate}
    \item For any $A_0 > -\infty$, $u_{\text{out},\epsilon} \to u_{\text{out}}$ in $L^{\infty}_{r^* \in [A_0,\infty),\omega \in \mathscr{B},(m,l)\in\mathscr{C}}$.
    \item For any $A_1 < \infty$, $u_{\text{hor},\epsilon} \to u_{\text{hor}}$ in $L^{\infty}_{r^* \in (-\infty,A_1],\omega \in \mathscr{B},(m,l)\in\mathscr{C}}$.
    \item $W_{\epsilon}^{-1} \to W^{-1}$ in $L^{\infty}_{\omega \in \mathscr{B},(m,l) \in \mathscr{C}}$.
\end{enumerate}
Thus, it suffices to restrict attention to the two integrals. The term
\[\int_{r^*}^{\infty}u_{\text{out}}(x^*)H(x^*)dx^*\]
has already been treated in lemma \ref{cancellation}. For the other term, it is sufficient to oberserve
\[\int_{\omega \in \mathscr{B}}\sum_{(m,l) \in \mathscr{C}}\left|\int_{-\infty}^{r^*}u_{\text{hor}}(x^*)H(x^*)dx^*\right|^2\ d\omega = \]
\[\int_{\omega \in \mathscr{B}}\sum_{(m,l) \in \mathscr{C}}\left|\int_{r_+}^{r\left(r^*\right)}u_{\text{hor}}(x^*\left(x\right))F\left(x\right)\left(x^2+a^2\right)^{1/2}dx\right|^2\ d\omega \lesssim_{r^*} \]
\begin{equation}\label{finalThing}
\int_{\omega \in \mathscr{B}}\sum_{(m,l) \in \mathscr{C}}\int_{r_+}^{r\left(r^*\right)}\left|F\left(x\right)\right|^2dx\ d\omega.
\end{equation}
We have used the facts
\[H(r^*) = \Delta(r^2+a^2)^{-1/2}F(r),\]
\[dr^* = \frac{(r^2+a^2)}{\Delta}dr.\]
We may control (\ref{finalThing}) via lemma \ref{errorEnergy}. Given this, it is easy to justify the limit as $\epsilon \to 0$.
\end{proof}
In the same fashion, one may prove the following.
\begin{lemm}\label{moreFormulas}
\begin{align*}
u'(r^*) = W^{-1}\Bigg(&u'_{\text{out}}(r^*)\int_{-\infty}^{r^*}u_{\text{hor}}(x^*)H(x^*)dx^*\\
&+ u'_{\text{hor}}(r^*)\int_{r^*}^{\infty}u_{\text{out}}(x^*)H(x^*)dx^*\Bigg),
\end{align*}
and
\[\lim_{r^*\to\infty}\left(u' - i\omega u\right) = 0.\]
Both equalities are understood in the same way as in lemma \ref{niceLemma}.
\end{lemm}

\subsection{The Estimate}
We keep the notation introduced in the previous section. Also, recall the definition of $\left|\partial \psi\right|^2$ given in the statement of Theorem \ref{boundMicroLocalEnergy} and appendix \ref{theFirstt}.

We now prove Theorem \ref{boundMicroLocalEnergy}.
\begin{proof}By Plancherel, it suffices to prove
\[\int_{\mathscr{B}}\sum_{(m,l) \in \mathscr{C}}\left(\left(\left|u(-\infty)\right|^2 +  \left|u(\infty)\right|^2\right) + \int_{r_0}^{r_1}\left(\left|u'\right|^2+ \left|u\right|^2\right)dr^*\right)d\omega \leq \]
\[B\left(r_0,r_1,C_{\mathscr{B}},C_{\mathscr{C}}\right) \int_{\Sigma_0}\left|\partial\psi\right|^2.\]

We begin with (\ref{superFormula}) which gives
\begin{align}\label{usefulEqn}
u(r^*) = W^{-1}\Bigg(&u_{\text{out}}(r^*)\int_{-\infty}^{r^*}u_{\text{hor}}(x^*)H(x^*)dx^*\\
&+ u_{\text{hor}}(r^*)\int_{r^*}^{\infty}u_{\text{out}}(x^*)H_(x^*)dx^*\Bigg).
\end{align}
The equality is in $L^2_{\omega \in\mathscr{B}}l^2_{(m,l)\in\mathscr{C}}$.

The following properties are simple consequences of the construction of $u_{\text{out}}$ and $u_{\text{hor}}$ (see appendix \ref{odeFacts}).
\begin{enumerate}
    \item $\left\vert\left\vert u_{\text{out}}\right\vert\right\vert_{L^{\infty}_{r^*,\omega \in \mathscr{B},(m,l)\in\mathscr{C}}} < \infty$.
    \item $\left\vert\left\vert u_{\text{hor}}\right\vert\right\vert_{L^{\infty}_{r^*,\omega \in \mathscr{B},(m,l)\in\mathscr{C}}} < \infty$.
\end{enumerate}

Thus, simply evaluating (\ref{usefulEqn}) at $r^* = -\infty$, integrating, and summing,  gives
\begin{equation}\label{eqn1}
\int_{\mathscr{B}}\sum_{(m,l)\in\mathscr{C}}\left|u(-\infty)\right|^2\ d\omega \leq
\end{equation}
\[\limsup_{r^*\to-\infty} \int_{\mathscr{B}}\sum_{(m,l)\in\mathscr{C}}W^{-2}\left|\int_{r^*}^{\infty}u_{\text{out}}(x^*)H(x^*)dx^*\right|^2\ d\omega.\footnote{The point being that one may easily show \[\lim_{r^*\to -\infty}\int_{\mathcal{B}}\sum_{\mathcal{C}}\left|\int_{-\infty}^{r^*}u_{\text{hor}}\left(x^*\right)H\left(x^*\right)dx^*\right|^2 = 0.\]}\]

Next, for $A$ much larger than $r_1$, we have
\begin{equation}\label{anLinftyEst}
\int_{\mathscr{B}}\sum_{(m,l)\in\mathscr{C}}\left\vert\left\vert u\right\vert\right\vert^2_{L^{\infty}\left(r_0,r_1\right)}\ d\omega \lesssim
\end{equation}
\begin{align*}
\int_{\mathscr{B}}\sum_{(m,l)\in\mathscr{C}} W^{-2}\Bigg(&\sup_{r^*\in\left[r_0,r_1\right]}\left|\int_{-\infty}^{r^*}u_{\text{hor}}(x^*)H(x^*)dx^*\right|^2\\ &+\sup_{r^* \in\left[r_0,r_1\right]}\left|\int_{r^*}^Au_{\text{out}}(x^*)H(x^*)dx^*\right|^2\\
&+\left|\int_A^{\infty}u_{\text{out}}(x^*)H(x^*)dx^*\right|^2\Bigg)\ d\omega.
\end{align*}

We have already used multiple times that
\begin{align*}
\sup_{r^*\in\left[r_0,r_1\right]}\left|\int_{-\infty}^{r^*}u_{\text{hor}}(x^*)H(x^*)dx^*\right|^2 &\lesssim \left(\int_{-\infty}^{r_1}\left|H\right|dx^*\right)^2\\
&= \left(\int_{r_+}^{r_1}\left|F(r)(r^2+a^2)^{1/2}\right|dr\right)^2\\
&\lesssim\int_{r_+}^{r_1}\left|F\right|^2dr.
\end{align*}
The constant will depend on $r_1$, but that does not concern us. Combining (the proof of) this estimate with (\ref{anLinftyEst}) gives
\begin{equation}\label{anLinftyEst2}
\int_{\mathscr{B}}\sum_{(m,l)\in\mathscr{C}}\left\vert\left\vert u\right\vert\right\vert^2_{L^{\infty}\left(r_0,r_1\right)}\ d\omega \lesssim
\end{equation}
\[\int_{\mathscr{B}}\sum_{(m,l)\in\mathscr{C}}\left(W^{-2}\left(\int_{r_+}^{r(A)}\left|F\right|^2dr + \left|\int_A^{\infty}u_{\text{out}}(x^*)H(x^*)dx^*\right|^2\right)\right)\ d\omega.\]

Of course, we may integrate this $L^{\infty}$ estimate to get
\begin{equation}\label{eqn2}
\int_{\mathscr{B}}\sum_{(m,l)\in\mathscr{C}}\int_{r_0}^{r_1}\left|u\right|^2dr^*\ d\omega \lesssim
\end{equation}
\[\int_{\mathscr{B}}\sum_{(m,l)\in\mathscr{C}}\left(W^{-2}\left(\int_{r_+}^{r(A)}\left|F\right|^2dr + \left|\int_A^{\infty}u_{\text{out}}(x^*)H(x^*)dx^*\right|^2\right)\right)\ d\omega.\]

Next, via lemma \ref{moreFormulas}, we may essentially differentiate (\ref{usefulEqn}), and proceed exactly as the proof of (\ref{eqn2}) in order to establish
\begin{equation}\label{eqn3}
\int_{r_0}^{r_1}\left|u'\right|^2dr^* \lesssim
\end{equation}
\[\int_{\mathscr{B}}\sum_{(m,l)\in\mathscr{C}}\left(W^{-2}\left(\int_{r_+}^{r(A)}\left|F\right|^2dr + \left|\int_A^{\infty}u_{\text{out}}(x^*)H(x^*)dx^*\right|^2\right)\right)\ d\omega.\]

Lastly, to control $\left|u(\infty)\right|^2$, we use the already introduced microlocal energy current and lemma \ref{moreFormulas} to conclude
\[\omega^2\left|u(\infty)\right|^2 = Q_T(\infty) = Q_T(-\infty) + \int_{-\infty}^{\infty}\left(Q_T\right)'dr^* \Rightarrow  \]
\begin{equation}\label{eqn4}
\int_{\mathscr{B}}\sum_{(m,l)\in\mathscr{C}}\left|u(\infty)\right|^2\ d\omega\lesssim
\end{equation}
\[\int_{\mathscr{B}}\sum_{(m,l)\in\mathscr{C}}\left(\omega(am-2Mr_+\omega)\left|u(-\infty)\right|^2 + \omega\int_{-\infty}^{\infty}\text{Im}\left(H\overline{u}\right)dr^*\right)\ d\omega.\]

After applying Plancherel, the proof of lemma \ref{errorEnergy}, lemma \ref{cancellation}, Theorem \ref{qMSRA}, and adding inequalities (\ref{eqn1}), (\ref{eqn2}), (\ref{eqn3}), and (\ref{eqn4}) together,  we get
\[\int_{\mathscr{B}}\sum_{(m,l) \in \mathscr{C}}\left(\left|u(-\infty)\right|^2 +  \left|u(\infty)\right|^2\right)d\omega + \int_{\mathscr{B}}\sum_{(m,l)\in\mathscr{C}}\int_{r_0}^{r_1}\left(\left|u'\right|^2+ \left|u\right|^2\right)dr^*\, d\omega \]
\[\lesssim  \int_{\Sigma_0}\left|\partial\psi\right|^2.\]
\end{proof}
Before concluding the section, we would like to emphasize that for the applications to \cite{n2}, it is crucial that we have arranged for the right hand side of this estimate be given by a non-degenerate energy flux through $\Sigma_0$.
\section{The Integral Transformation}\label{intTransProof}
In this section we will prove propositions \ref{eqn}, \ref{asymptoticsEasy}, and \ref{asymptotics}. For clarity of exposition we will restrict ourselves to $\omega \in \mathbb{R}\setminus\{0\}$; indeed, for $\text{Im}\left(\omega\right) > 0$ the proofs are much easier and follow from the same sort of reasoning as the real $\omega$ case. Furthermore, due to the symmetries of the radial ODE, we may restrict ourselves to $\omega > 0$.

It will be convenient to adopt the notation
\[A := \frac{2i\omega}{r_+-r_-}.\]
It will also be useful to consider the following functions
\[g(r) := (r-r_+)^{-\xi}(r-r_-)^{-\eta}e^{i\omega r}R(r),\]
\[\tilde{g}(z) := \int_{r_+}^{\infty}e^{A(z-r_-)(r-r_-)}(r-r_-)^{2\eta}(r-r_+)^{2\xi}e^{-2i\omega r}g(r)dr\]
\begin{equation}\label{theDef}
= \int_{r_+}^{\infty}e^{A(z-r_-)(r-r_-)}(r-r_-)^{\eta}(r-r_+)^{\xi}e^{-i\omega r}R(r)dr.
\end{equation}
Here $z = x+iy$ with $y \geq 0$. Recall the previously defined
\[\eta := \frac{-i(am-2Mr_-\omega)}{r_+-r_-},\]
\[\xi := \frac{i(am-2Mr_+\omega)}{r_+-r_-}.\]

If $y > 0$ then the integrals and their derivatives are all absolutely convergent; we immediately conclude that $\tilde{g}$ is holomorphic for $z$ in the upper half plane. When $y = 0$, then $\tilde{g}(x)$ is, a priori, only an $L^2$ function; however, in section \ref{defineRealAxis} we will show that $\tilde{g}\left(x\right)$ is in fact a $C^1$ function on $[r_+,\infty)$. Then, in section \ref{theNewEqn} we will verify $\tilde{g}$'s equation and show that $\tilde{g}$ is smooth on $(r_+,\infty)$. Finally, in section \ref{asymptoticAnalysis} we will carry out an asymptotic analysis of $\tilde{g}\left(x\right)$ as $x\to\infty$; in particular, we will identify $\lim_{x\to\infty}\left|x\tilde{g}(x)\right|$. Putting everything together will prove propositions \ref{eqn}, \ref{asymptoticsEasy}, and \ref{asymptotics}.
\subsection{Defining $\tilde{g}$ on the Real Axis}\label{defineRealAxis}
For any $y > 0$ and $\epsilon > 0$, we shall rewrite $\tilde{g}$ in the following way:
\begin{lemm}\label{intByPartsEst}

\[\tilde{g}(z) = \int_{r_+}^{r_++\epsilon}e^{A(z-r_-)(r-r_-)}(r-r_-)^{\eta}(r-r_+)^{\xi}e^{-i\omega r}R(r)dr\]
\begin{align*}
- \Bigg(&\left(A(z-r_-)\right)^{-1}e^{A(z-r_-)(r_+-r_-+\epsilon)}(r_+-r_-+\epsilon)^{\eta}\\
&\times \epsilon^{2\xi}e^{-i\omega(r_++\epsilon)}\epsilon^{-\xi}R(r_++\epsilon)\Bigg)
\end{align*}
\begin{align*}
+\Bigg(&\left(A(z-r_-)\right)^{-2}e^{A(z-r_-)(r_+-r_-+\epsilon)}\\
&\times\frac{d}{dr}\left((\cdot-r_-)^{\eta}(\cdot-r_+)^{\xi}e^{-i\omega \cdot}R(\cdot)\right)(r_++\epsilon)\Bigg)
\end{align*}
\begin{align*}
+\left(A(z-r_-)\right)^{-2}\int_{r_++\epsilon}^{\infty}\Bigg(&e^{A(z-r_-)(r-r_-)}\\
&\frac{d^2}{dr^2}\left((r-r_-)^{\eta}(r-r_+)^{\xi}e^{-i\omega r}R(r)\right)\Bigg)dr.
\end{align*}
\end{lemm}
\begin{proof}This follows by integrating by parts twice the expression (\ref{theDef}) in a straightforward manner.
\end{proof}
\begin{lemm}\label{regularg}The function $\tilde{g}(x)$ is continuous on $[r_+,\infty)$ and $O\left(x^{-1}\right)$ as $x\to\infty$.
\end{lemm}
\begin{proof}Recall that the boundary conditions for $R$, (\ref{horizonBound}) and (\ref{infinityBound}), imply
\begin{enumerate}
    \item $(r-r_+)^{-\xi}R(r)$ is smooth at $r_+$.
    \item $\frac{d^k}{dr^k}\left(e^{-i\omega r}R(r)\right) = O\left(r^{-k-1}\right)\text{ as }r\to\infty$.
\end{enumerate}
In particular, the integral in the last line of the formula from (\ref{intByPartsEst}) is absolutely convergent even when $y = 0$. Thus, even when $y = 0$, we may conclude that the right hand side of the formula is continuous in $x$.

In order to see the decay in $x$, set $\epsilon = x^{-1}$. By direct inspection one finds that each term is $O(x^{-1})$. Since the right hand side of the formula converges in $L^2$ as $y \downarrow 0$, by uniqueness of $L^2$ limits we conclude that $\tilde{g}(x)$ is equal to the formula. The lemma then follows.
\end{proof}

Now we turn to $\frac{\partial g}{\partial x}$. We have
\begin{lemm}\label{theDerivativeFormula}For any $y > 0$ and $\epsilon > 0$ we have
\[\frac{\partial \tilde{g}}{\partial x} - A(r_+-r_-)\tilde{g} = \]
\[-(z-r_-)^{-1}\int_{r_+}^{r_++\epsilon}e^{A(z-r_-)(r-r_-)}\frac{d}{dr}\left((r-r_-)^{\eta}(r-r_+)^{\xi+1}e^{-i\omega r}R(r)\right)dr + \]
\begin{align*}
\Big(&A^{-1}(z-r_-)^{-2}e^{A(z-r_-)(r_+-r_-+\epsilon)}\\
&\times\frac{d}{dr}\left((r-r_-)^{\eta}(r-r_+)^{\xi+1}e^{-i\omega r}R(r)\right)(r_++\epsilon)\Big)
\end{align*}
\begin{align*}
-\Big(&A^{-2}(z-r_-)^{-3}e^{A(z-r_-)(r_+-r_-+\epsilon)}\\
&\times\frac{d^2}{dr^2}\left((r-r_-)^{\eta}(r-r_+)^{\xi+1}e^{-i\omega r}R(r)\right)(r_++\epsilon)\Big)
\end{align*}
\begin{align*}
-A^{-2}(z-r_-)^{-3}\int_{r_++\epsilon}^{\infty}&e^{A(z-r_-)(r-r_-)}\\
&\times\frac{d^3}{dr^3}\left((r-r_-)^{\eta}(r-r_+)^{\xi+1}e^{-i\omega r}R(r)\right)dr.
\end{align*}
\end{lemm}
\begin{proof}This follows from a straightforward series of integration by parts on the expression
\[\frac{\partial \tilde{g}}{\partial x} - A(r_+-r_-)\tilde{g} = \]
\[A\int_{r_+}^{\infty}e^{A(z-r_-)(r-r_-)}(r-r_-)^{\eta}(r-r_+)^{\xi+1}e^{-i\omega r}R(r)dr.\]
\end{proof}
Next, we have
\begin{lemm}$\frac{\partial \tilde{g}}{\partial x}(x)$ exists and is continuous on $[r_+,\infty)$. Furthermore
\[\frac{\partial \tilde{g}}{\partial x} - A\left(r_+-r_-\right)\tilde{g} = O\left(x^{-2}\right)\text{ as }x\to\infty.\]
\end{lemm}
\begin{proof}This follows by setting $\epsilon = x^{-1}$ in lemma \ref{theDerivativeFormula} and then reasoning as in lemma \ref{regularg}.
\end{proof}

\subsection{Verifying the New Equation}\label{theNewEqn}
In this section we will compute $\tilde{g}$'s new equation.

We say that a function $h$ satisfies a Confluent Heun Equation (CHE) if there are complex parameters $\gamma$, $\delta$, $p$, $\alpha$, and $\sigma$ and a function $G$ such that
\begin{equation}\label{CHEop}
Th := (r-r_+)(r-r_-)\frac{d^2h}{dr^2} + \left(\gamma(r-r_+) + \delta(r-r_-) + p(r-r_+)(r-r_-)\right)\frac{dh}{dr} +
\end{equation}
\[\left(\alpha p(r-r_-) + \sigma\right)h = G.\]
One finds that $g$ satisfies such a CHE with
\[\gamma = 2\eta + 1 =: \gamma_0,\]
\[\delta = 2\xi + 1 =: \delta_0,\]
\[p = -2i\omega =: p_0,\]
\[\alpha = 1 =:\alpha_0,\]
\[\sigma = 2am\omega - 2\omega r_- i - \lambda_{\omega ml} - a^2\omega^2 =: \sigma_0,\]
\[G = (r-r_+)^{-\xi}(r-r_-)^{-\eta}e^{i\omega r}\hat{F} =: G_0.\]
We need an integration by parts lemma whose straightforward proof is omitted.
\begin{lemm}\label{intByParts}Let $T$ denote a Confluent Heun operator as defined in (\ref{CHEop}). Then
\[\int_{\beta_1}^{\beta_2}\left(Tf\right)(r-r_+)^{\delta-1}(r-r_-)^{\gamma-1}e^{pr}hdr\]
\[= (r-r_+)^{\delta}(r-r_-)^{\gamma}e^{pr}\left(\frac{df}{dr}h-f\frac{dh}{dr}\right)\bigg|_{\beta_1}^{\beta_2} \]
\[+ \int_{\beta_1}^{\beta_2}\left(Th\right)(r-r_+)^{\delta-1}(r-r_-)^{\gamma-1}e^{pr}fdr.\]
\end{lemm}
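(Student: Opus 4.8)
The plan is to recognize $T$ as a formally self-adjoint operator with respect to an explicit weight and then apply the classical Lagrange (Green) identity. First I would introduce the weight and the associated Sturm--Liouville factor
\[\mu(r) := (r-r_+)^{\delta-1}(r-r_-)^{\gamma-1}e^{pr},\qquad P(r) := \mu(r)(r-r_+)(r-r_-) = (r-r_+)^{\delta}(r-r_-)^{\gamma}e^{pr},\]
so that $P$ is exactly $\mu$ times the leading coefficient of $T$, and $P$ is precisely the function multiplying $\bigl(\tfrac{df}{dr}h - f\tfrac{dh}{dr}\bigr)$ in the boundary term of Lemma \ref{intByParts}.

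The one genuine computation is the logarithmic-derivative identity
\[P'(r) = P(r)\left(\frac{\delta}{r-r_+} + \frac{\gamma}{r-r_-} + p\right) = \mu(r)\Bigl(\gamma(r-r_+) + \delta(r-r_-) + p(r-r_+)(r-r_-)\Bigr),\]
whose right-hand side is $\mu$ times the coefficient of $h'$ in $Th$. (Note the weight carries $\delta-1,\gamma-1$ while $P$ carries $\delta,\gamma$, and $\gamma$ is paired with $r-r_+$ in $T$ but with $r-r_-$ in $\mu$; this bookkeeping is the only place a slip could occur.) Granting this, for any $C^2$ functions $f,h$ on a neighbourhood of $[\alpha,\beta]\subset(r_+,\infty)$, where $\mu$ and $P$ are smooth and non-vanishing, one gets
\[\mu(r)\,Th = \frac{d}{dr}\!\left(P(r)\frac{dh}{dr}\right) + Q(r)\,h,\qquad Q(r) := \mu(r)\bigl(\alpha p(r-r_-) + \sigma\bigr),\]
i.e.\ $\mu T$ is in Sturm--Liouville form with zeroth-order multiplication term $Q$.

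With this in hand the lemma is immediate: since $Q$ acts by multiplication, its contribution cancels in the antisymmetrized combination, and
\[\mu\bigl((Tf)h - (Th)f\bigr) = \Bigl(\tfrac{d}{dr}(Pf')\Bigr)h - \Bigl(\tfrac{d}{dr}(Ph')\Bigr)f = \frac{d}{dr}\Bigl(P\,(f'h - fh')\Bigr).\]
Integrating over $[\alpha,\beta]$ and transposing the $\int (Th)\mu f$ term to the other side yields exactly the stated identity, with boundary term $P(f'h-fh')\big|_{\alpha}^{\beta} = (r-r_+)^{\delta}(r-r_-)^{\gamma}e^{pr}\bigl(\tfrac{df}{dr}h - f\tfrac{dh}{dr}\bigr)\big|_{\alpha}^{\beta}$.

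I do not expect a real obstacle here; the identity is purely local on $(r_+,\infty)$, which is why the endpoints $\alpha,\beta$ are kept finite and bounded away from $r_\pm$, so no regularization is needed. The only care required is the exponent bookkeeping noted above. When the lemma is later applied with an endpoint sent to $r_+$ or to $\infty$, the convergence of the boundary term in that limit is a separate matter, to be handled using the asymptotics of $R$ (and of $g$, $\tilde g$) established elsewhere.
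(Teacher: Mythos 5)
Your proof is correct, and the weight bookkeeping (the crossing of $\gamma$ and $\delta$, and the identity $P'=\mu\bigl(\gamma(r-r_+)+\delta(r-r_-)+p(r-r_+)(r-r_-)\bigr)$) checks out exactly. The paper omits this proof as straightforward, and your Lagrange-identity/Sturm--Liouville argument is precisely the standard computation intended, so there is nothing further to compare.
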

Next we will compute $\tilde{g}$'s equation for $y > 0$.
\begin{lemm}\label{theEqn0}If $y > 0$ we have
\[(z-r_+)(z-r_-)\frac{\partial^2\tilde{g}}{\partial x^2} + \]
\[\left((z-r_+) + (1-4iM\omega)(z-r_-) - 2i\omega(z-r_-)(z-r_+)\right)\frac{\partial \tilde{g}}{\partial x} +\]
\[\left(-2i\omega(2\eta+1)(z-r_-) + 2am\omega - 2\omega r_-i-\lambda_{\omega ml}-a^2\omega^2\right)\tilde{g} = \tilde{G}\]
where
\[\tilde{G} := \int_{r_+}^{\infty}e^{\frac{2i\omega}{r_+-r_-}(z-r_-)(r-r_-)}(r-r_-)^{2\eta}(r-r_+)^{2\xi}e^{-2i\omega r}G_0(r)dr.\]
\end{lemm}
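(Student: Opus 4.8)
The plan is to derive the asserted Confluent Heun equation for $\tilde g$ by pushing the Confluent Heun operator $\tilde L_z$ (with the tilded parameters $\tilde p = -2i\omega$, $\tilde\alpha=\gamma=2\eta+1$, $\tilde\gamma=1$, $\tilde\delta=1-4iM\omega$, $\tilde\sigma=\sigma$ already determined above) inside the integral defining $\tilde g$, converting it via the identity $(\tilde L_z - L_r)e^{A(z-r_-)(r-r_-)}=0$ into the $r$-variable operator $L_r$ acting on the kernel, and then applying the integration-by-parts Lemma \ref{intByParts} to transfer the operator onto $g$, where we use $L_r g = G$. First I would justify differentiation under the integral sign: since $\omega>0$ and $A=2i\omega(r_+-r_-)^{-1}$, for $y>0$ one has $|e^{A(z-r_-)(r-r_-)}| = \exp\!\big(-\tfrac{2\omega y}{r_+-r_-}(r-r_-)\big)$, which decays exponentially in $r$, and applying up to two $\partial_x$-derivatives only inserts polynomial-in-$r$ factors; since moreover $g$ is continuous at $r_+$ and (with $g'$) decays as $r\to\infty$ by the boundary conditions on $R$, the integrand of $\tilde g$ and of its first two $x$-derivatives are dominated by a fixed $L^1(dr)$ function locally uniformly in $\{y>0\}$. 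Hence $\tilde g$ is holomorphic in $z$, so $\partial_x$ agrees with $\tfrac{d}{dz}$ and the $x$-CHE in the statement coincides with the holomorphic one, and we may differentiate under the integral. Using $(r-r_-)^{2\eta}=(r-r_-)^{\gamma-1}$, $(r-r_+)^{2\xi}=(r-r_+)^{\delta-1}$, $e^{-2i\omega r}=e^{pr}$, this gives
\[
\tilde L_z\tilde g(z) = \int_{r_+}^{\infty}\big(\tilde L_z e^{A(z-r_-)(r-r_-)}\big)(r-r_+)^{\delta-1}(r-r_-)^{\gamma-1}e^{pr}g(r)\,dr .
\]

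Next I would invoke the displayed computation preceding the statement: with $A=-p(r_+-r_-)^{-1}$ and the tilded parameters as above one has $\tilde L_z e^{A(z-r_-)(r-r_-)} = L_r f$, where $f(r):=e^{A(z-r_-)(r-r_-)}$ and $L_r$ is the Confluent Heun operator in $r$ with the \emph{untilded} parameters $\gamma,\delta,p,\alpha,\sigma$ attached to $g$. Substituting this and applying Lemma \ref{intByParts} with $T=L_r$, this $f$, and $h=g$ produces boundary terms at $r=r_+$ and $r=\infty$ together with $\int_{r_+}^{\infty}(L_r g)(r-r_+)^{\delta-1}(r-r_-)^{\gamma-1}e^{pr}f\,dr$. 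The boundary term at infinity vanishes because $f$ decays exponentially while $g,g'$ decay; the boundary term at $r_+$ vanishes because $\delta=2\xi+1$ with $\xi$ purely imaginary and, as checked above, the mode boundary condition on $R$ forces $\tfrac{dg}{dr^*}(r_+)=0$ (which is exactly what controls the otherwise singular $\tfrac{dg}{dr}\sim(r-r_+)^{-1}\tfrac{dg}{dr^*}$). Since $g$ solves $L_r g=G$, the surviving integral equals $\int_{r_+}^{\infty}G(r)(r-r_+)^{\delta-1}(r-r_-)^{\gamma-1}e^{pr}f(r)\,dr = \tilde G(z)$, and substituting the explicit tilded parameters into the definition of the CHE operator reproduces precisely the displayed equation for $\tilde g$.

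The \emph{main obstacle} — really the only non-routine point, since the algebra identifying $A$ and the tilded parameters has already been carried out — is the justification of differentiation under the integral sign together with the vanishing of the two boundary terms in Lemma \ref{intByParts}; both rest on the exponential decay supplied by $A(z-r_-)(r-r_-)$ for $y>0$ and on the behavior of $g$ near $r_+$ and near infinity, which in turn comes from the boundary conditions of Definition \ref{mode}. In the easier case $\text{Im}(\omega)>0$ the kernel $(r-r_-)^\eta(r-r_+)^\xi e^{-i\omega r}R(r)$ is itself absolutely integrable, so the same argument applies more directly and with no restriction to $y>0$.
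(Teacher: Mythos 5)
Your proposal is correct and follows essentially the same route as the paper: it uses the identity $(\tilde{L}_z-L_r)e^{A(z-r_-)(r-r_-)}=0$ with the already-determined $A$ and tilded parameters, Lemma \ref{intByParts} to move $L_r$ onto $g$ where $L_rg=G$, the vanishing of the boundary terms (exponential decay of the kernel for $y>0$ at $r=\infty$, and $\tfrac{dg}{dr^*}(r_+)=0$ at $r=r_+$), and holomorphy in $z$ to identify $\partial_x$ with $\tfrac{d}{dz}$. Your justification of differentiation under the integral sign via the explicit bound $\bigl|e^{A(z-r_-)(r-r_-)}\bigr|=\exp\bigl(-\tfrac{2\omega y}{r_+-r_-}(r-r_-)\bigr)$ is in fact slightly more detailed than what the paper records, but it is the same argument.
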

\begin{proof}
Since the coefficients of the CHE are all holomorphic, we may take the derivatives in the CHE to be complex derivatives. Let $L_r$ denote a Confluent Heun Operator in the $r$ variable with parameters $\left(\gamma_0,\delta_0,p_0,\alpha_0,\sigma_0\right)$ and right hand side $G_0$. Let $\tilde{L}_z$ denote a Confluent Heun operator in the $z$ $(=x+iy)$ variable with, to be determined, tilded parameters.

We wish to determine if
\[\int_{r_+}^{\infty}e^{A(z-r_-)(r-r_-)}(r-r_-)^{2\eta}(r-r_+)^{2\xi}e^{-2i\omega r}g(r)dr\]
is a solution to a CHE with tilded parameters. When $y > 0$ the exponential damping in the integral allows differentiation under the integral sign, and we see from Lemma \ref{intByParts} that the following two conditions will suffice:
\[\left(\tilde{L}_z - L_r\right)e^{A(z-r_-)(r-r_-)} = 0,\]
\[(r-r_+)^{\delta_0}(r-r_-)^{\gamma_0}e^{p_0r}e^{A(z-r_-)(r-r_-)}\left(A(z-r_-)g - \frac{dg}{dr}\right)\bigg|_{r_+}^{\infty} = 0\]
\[ \forall z\text{ such that }y > 0.\]
We have
\[e^{-A(z-r_-)(r-r_-)}\left(\tilde{L}_z - L_r\right)e^{A(z-r_-)(r-r_-)} = \]
\[A\left(A(r_+-r_-) + \tilde{p}\right)(r-r_-)(z-r_-)^2 -A\left(A(r_+-r_-)+p_0\right)(r-r_-)^2(z-r_-)\]
\[-A\left(\gamma_0+\delta_0+p_0(r_+-r_-)-\tilde{\gamma}-\tilde{\delta}-\tilde{p}(r_+-r_-)\right)(z-r_-)(r-r_-)\]
\[+\left(A\gamma(r_+-r_-)+\tilde{\alpha}\tilde{p}\right)(z-r_-) -\left(A\tilde{\gamma}(r_+-r_-) + \alpha_0 p_0\right)(r-r_-) + \left(\tilde{\sigma}-\sigma_0\right).\]
From this it is clear that we must have
\[A = -p(r_+-r_-)^{-1} = 2i\omega(r_+-r_-)^{-1},\]
\[\tilde{p} = p_0 = -2i\omega,\]
\[\tilde{\alpha} = \gamma_0,\]
\[\tilde{\gamma} = \alpha_0 = 1,\]
\[\tilde{\delta} = \gamma_0 + \delta_0 - \tilde{\gamma} = 1 - 4iM\omega,\]
\[\tilde{\sigma} = \sigma_0.\]
We still need to check that the boundary conditions are satisfied. Since $g$ and $\frac{dg}{dr}$ both decay for large $r$, the exponential decay from $e^{A(z-r_-)(r-r_-)}$ clearly implies that
\[\left((r-r_+)^{\delta_0}(r-r_-)^{\gamma_0}e^{p_0r}e^{A(z-r_-)(r-r_-)}\left(A(z-r_-)g - \frac{dg}{dr}\right)\right)(r = \infty) = 0\]
\[\text{ for all }z\text{ with }y > 0.\]
Since $\delta_0 = 2\xi + 1$, with $\xi$ purely imaginary, and $|g|$ extends continuously to $r_+$, we see that
\[\left((r-r_+)^{\delta_0}(r-r_-)^{\gamma_0}e^{p_0r}e^{A(z-r_-)(r-r_-)}\left(A(z-r_-)g - \frac{dg}{dr}\right)\right)(r = r_+) = 0\Leftrightarrow\]
\[\frac{dg}{dr^*}(r_+) = 0.\]
If we $r^*$ differentiate the expression defining $g$, we get
\[ \left|\frac{dg}{dr^*}\right|(r_+) = \left|\frac{dR}{dr^*}-\frac{\xi(r_+-r_-)}{2Mr_+}R\right|(r_+) = 0.\]
We conclude that $\tilde{g}$ satisfies $\tilde{L}_z\tilde{g} = 0$. Lastly, since $\tilde{g}$ is holomorphic in the upper half plane, $\frac{d\tilde{g}}{dz} = \frac{\partial\tilde{g}}{\partial x}$.
\end{proof}
Finally, using the analysis from section \ref{defineRealAxis} we can upgrade this lemma to
\begin{lemm}\label{theEqn}When $y = 0$, $\tilde{g}$ is smooth in $(r_+,\infty)$ and we have
\begin{equation}\label{star}
(x-r_+)(x-r_-)\frac{\partial^2\tilde{g}}{\partial x^2} +
\end{equation}
\[\left((x-r_+) + (1-4iM\omega)(x-r_-) - 2i\omega(x-r_-)(x-r_+)\right)\frac{\partial \tilde{g}}{\partial x} +\]
\[\left(-2i\omega(2\eta+1)(x-r_-) + 2am\omega - 2\omega r_-i-\lambda_{\omega ml}-a^2\omega^2\right)\tilde{g} = \tilde{G}\]
where
\[\tilde{G} := \int_{r_+}^{\infty}e^{\frac{2i\omega}{r_+-r_-}(x-r_-)(r-r_-)}(r-r_-)^{2\eta}(r-r_+)^{2\xi}e^{-2i\omega r}G_0(r)dr.\]
\end{lemm}
\begin{proof}One consequence of the analysis in section \ref{defineRealAxis} is that $\tilde{g}\left(x +iy\right)$ converges to $\tilde{g}$ in $H_x^1$ as $y \to 0$. In particular, we may take $y \to 0$ in the weak formulation of the equation from lemma \ref{theEqn0} to conclude that $\tilde{g}(x)$ is a weak $H^1_x$ solution of (\ref{star}). Since $\tilde{G}$ is smooth,\footnote{Recall that \[\tilde{G}(x) =  \int_{r_+}^{\infty}e^{\frac{2i\omega}{r_+-r_-}(x-r_-)(r-r_-)}(r-r_-)^{\eta}(r-r_+)^{\xi}e^{-i\omega r}\hat{F}dr\]
where $\hat{F}$ is smooth and compactly supported in $(r_+,\infty)$.} we may then conclude the proof by an appeal to elliptic regularity.
\end{proof}
\subsection{Asymptotic Analysis of $\tilde{g}$}\label{asymptoticAnalysis}
Recall that in section \ref{defineRealAxis} we saw that $\tilde{g} = O\left(x^{-1}\right)$ as $x\to\infty$. In this section we will carry out the somewhat subtle task of identifying
\[\lim_{x\to\infty}\left|x\tilde{g}(x)\right|.\]

We start with
\begin{lemm}\label{warmUp}Let $h$ be a smooth function on $[r_+,\infty)$ which vanishes on $[r_++2,\infty)$. Recall that we previously defined
\[\xi := \frac{i(am-2Mr_+\omega)}{r_+-r_-} \in i\mathbb{R}.\]
For $\tau \geq 0$ and $\nu > 0$, define
\[Z\left(\nu,\tau\right) := \int_{r_+}^{\infty}e^{i\nu r}\left(r-r_+ + i\tau\right)^{2\xi}h(r)dr.\]
Then we have
\[\left|Z\left(\nu,\tau\right)\right| \lesssim \nu^{-1}\]
where the implied constant does not depend on $\tau$.
\end{lemm}
\begin{proof}
Integrating by parts twice produces the following expression for $Z\left(\nu,\tau\right)$:
\begin{align}
Z\left(\nu,\tau\right) &= \int_{r_+}^{r_++\nu^{-1}}e^{i\nu r}\left(r-r_++i\tau\right)^{2\xi}h(r)dr \\
                 &-\left(i\nu\right)^{-1}e^{i\nu \left(r_++\nu^{-1}\right)}\left(\nu^{-1}+i\tau\right)^{2\xi}h\left(r_++\nu^{-1}\right) \\
                 &+\left(i\nu\right)^{-2}e^{i\nu \left(r_++\nu^{-1}\right)}\frac{d}{dr}\left(\left(\cdot-r_++i\tau\right)^{2\xi}h(\cdot)\right)\left(r_++\nu^{-1}\right)\\
                 &+\left(i\nu\right)^{-2}\int_{r_++\nu^{-1}}^{\infty}e^{i\nu r}\frac{d^2}{dr^2}\left(\left(r-r_++i\tau\right)^{2\xi}h(r)\right)dr.
\end{align}
The lemma follows by direct inspection of each term.
\end{proof}

The following lemma is the technical core of our argument. The proof is a slight adaption from similar problems discussed in the books \cite{n37} and \cite{n38}.
\begin{lemm}\label{technicalLemm}Let $h$ be a smooth function on $[r_+,\infty)$ which vanishes in $[r_++2,\infty)$. Recall that we previously defined
\[\xi := \frac{i(am-2Mr_+\omega)}{r_+-r_-} \in i\mathbb{R}.\]
For $\nu > 0$, define
\[Z\left(\nu\right) := Z\left(\nu,0\right) = \int_{r_+}^{\infty}e^{i\nu r}\left(r-r_+\right)^{2\xi}h(r)dr.\]
Then we have
\[Z\left(\nu\right) = \exp\left(\frac{i\pi}{2}\left(1+2\xi\right)\right)\Gamma\left(2\xi+1\right)h(r_+)e^{i\nu r_+}\nu^{-1-2\xi} + O\left(\nu^{-2}\right)\text{ as }\nu\to\infty\]
where
\[\Gamma(z) := \int_0^{\infty}e^{-t}t^{z-1}dt\]
is the Gamma function.
\end{lemm}
\begin{proof}The key trick is to come up with a clever form of the anti-derivative of $e^{i\nu r}\left(r-r_+\right)^{2\xi}$. In order to do this, we extend $e^{i\nu r}\left(r-r_+\right)^{2\xi}$ to $s \in \mathbb{C}\setminus\{(-\infty,r_+]\}$ where we are taking the principle branch of $(s-r_+)^{2\xi}$. One may easily check that $(s-r_+)^{2\xi} = \exp\left(2\xi\log\left(s-r_+\right)\right)$ is uniformly bounded in the region
\[\left\{s : \text{Re}\left(s\right) \in [r_+,r_++2)\right\}.\]

Thus, keeping in the mind the exponential decay from $e^{i\nu s}$ as $\text{Im}\left(s\right) \to \infty$ and Cauchy's theorem, we may unambiguously define
\[l\left(r,\nu\right) := -\int^{i\infty}_re^{i\nu s}\left(s-r_+\right)^{2\xi}ds\]
whenever $r \in (r_+,r_++2).$ This will satisfy
\[\frac{\partial l}{\partial r} = e^{i\nu r}\left(r-r_+\right)^{2\xi}.\]

Now, integrating along the curve $t \mapsto r + it$ implies
\begin{align}
l\left(r,\nu\right) =\label{goodExpression} -ie^{i\nu r}\int_0^{\infty}e^{-\nu t}\left(r-r_++it\right)^{2\xi}dt.
 \end{align}
Now, keeping in mind that $z^{2\xi} := \exp\left(2\xi\log z\right)$, we have
\begin{align}
\lim_{r \to r_+}l\left(r,\nu\right) &= -i^{1+2\xi}e^{i\nu r_+}\int_0^{\infty}e^{-\nu t}t^{2\xi}dt \\
&= -i^{1+2\xi}e^{i\nu r_+}\nu^{-1-2\xi}\Gamma\left(2\xi+1\right).
\end{align}

More generally, changing variables in \ref{goodExpression} implies
\begin{equation}\label{good2}
l\left(r,\nu\right) = -ie^{i\nu r}\nu^{-1}\int_0^{\infty}e^{-t}\left(r-r_++i\frac{t}{\nu}\right)^{2\xi}dt.
\end{equation}

Now we are ready for an estimate:
\begin{align}
Z\left(\nu,\tau\right) &= \int_{r_+}^{\infty}e^{i\nu r}\left(r-r_+\right)^{2\xi}h(r)dr\\
                  &= \int_{r_+}^{\infty}\frac{\partial l}{\partial r}\left(r,\nu\right)h(r)dr \\
                  &= i^{1+2\xi}\Gamma\left(2\xi+1\right)h(r_+)e^{i\nu r_+}\nu^{-1-2\xi}\\
                  &- \int_{r_+}^{\infty}l\left(r,\nu\right)h'(r)dr\\
                  &= i^{1+2\xi}\Gamma\left(2\xi+1\right)h(r_+)e^{i\nu r_+}\nu^{-1-2\xi}\\
                  &+ i\nu^{-1}\int_0^{\infty}e^{-t}\left(\int_{r_+}^{\infty}e^{i\nu r}\left(r-r_++i\frac{t}{\nu}\right)^{2\xi}h'(r)dr\right)dt.
\end{align}
We have used (\ref{good2}) and Fubini in the last equality.

To conclude the proof we just need to show that
\begin{equation}\label{toConclude}
\int_0^{\infty}e^{-t}\left(\int_{r_+}^{\infty}e^{i\nu r}\left(r-r_++i\frac{t}{\nu}\right)^{2\xi}h'(r)dr\right)dt = O\left(\nu^{-1}\right).
\end{equation}
However, this follows by an application of lemma \ref{warmUp} to the inner integral.
\end{proof}
Let's apply this analysis to $\tilde{g}$.
\begin{lemm}\label{toInf}As $x\to\infty$ we have
\begin{align}
\tilde{g}(x) = \Bigg(&\exp\left(\frac{i\pi}{2}\left(1+2\xi\right)\right)\Gamma\left(2\xi+1\right)\left(r_+-r_-\right)^{\eta}e^{-A\left(r_+-r_-\right)r_-}e^{-i\omega r_+}\\&\times\left(2\omega\left(r_+-r_-\right)^{-1}\right)^{-1-2\xi}\left(\left(\cdot-r_+\right)^{-\xi}R\left(\cdot\right)\right)\left(r_+\right)\\
&\times e^{Ax\left(r_+-r_-\right)}x^{-1-2\xi}\Bigg) + O\left(x^{-2}\right).
\end{align}
\end{lemm}
\begin{proof}Let $\chi(r)$ be a positive smooth function which is identically $1$ on $[r_+,r_++1]$ and identically $0$ on $[r_++2,\infty)$. We may write
\begin{align}
\tilde{g}(x) &= \int_{r_+}^{\infty}e^{A(x-r_-)(r-r_-)}(r-r_-)^{\eta}(r-r_+)^{\xi}e^{-i\omega r}R(r)\chi(r)dr\\
             &+ \int_{r_+}^{\infty}e^{A(x-r_-)(r-r_-)}(r-r_-)^{\eta}(r-r_+)^{\xi}e^{-i\omega r}R(r)\left(1-\chi(r)\right)dr.
\end{align}
The second integral satisfies
\[\int_{r_+}^{\infty}e^{A(x-r_-)(r-r_-)}(r-r_-)^{\eta}(r-r_+)^{\xi}e^{-i\omega r}R(r)\left(1-\chi(r)\right)dr\]
\begin{align*}
=\left(A\left(x-r_-\right)\right)^{-2}\int_{r_+}^{\infty}&e^{A(x-r_-)(r-r_-)}\\
&\times\frac{d^2}{dr^2}\left((r-r_-)^{\eta}(r-r_+)^{\xi}e^{-i\omega r}R(r)\left(1-\chi(r)\right)\right)dr
\end{align*}
\[= O\left(x^{-2}\right).\]
We have used the boundary condition (\ref{infinityBound}).

Now we conclude the proof by applying lemma \ref{technicalLemm} (with $\nu = Ax$) to the first integral.
\end{proof}
\subsection{Putting Everything Together}
Now we will prove propositions \ref{eqn} and \ref{asymptotics}.
\begin{proof}(Proposition \ref{eqn})

Recall the definition of $\tilde{u}$:
\begin{align}
\tilde{u}(x^*) :=&\ (x^2+a^2)^{1/2}(x-r_+)^{-2iM\omega}e^{-i\omega x}\times\\
&\int_{r_+}^{\infty}e^{\frac{2i\omega}{r_+-r_-}(x-r_-)(r-r_-)}(r-r_-)^{\eta}(r-r_+)^{\xi}e^{-i\omega r}R(r)dr.
\end{align}
In terms of $\tilde{g}$ we have
\[\tilde{u}\left(x^*\right) = (x^2+a^2)^{1/2}(x-r_+)^{-2iM\omega}e^{-i\omega x}\tilde{g}(x).\]
In particular $\tilde{u}$ is smooth on $(r_+,\infty)$ and proposition \ref{eqn} follows from lemma \ref{star} and a straightforward (if tedious) calculation.
\end{proof}
\begin{proof}(Proposition \ref{asymptotics})

Keeping in mind that
\[\tilde{u}' = \frac{(x-r_+)(x-r_-)}{x^2+a^2}\frac{\partial\tilde{u}}{\partial x},\]
the lemma follows immediately from
\[\tilde{u}\left(x^*\right) = (x^2+a^2)^{1/2}(x-r_+)^{-2iM\omega}e^{-i\omega x}\tilde{g}(x),\]
the fact that $\tilde{g}$ is $C^1$ at $r_+$ (see section \ref{defineRealAxis}), and lemma \ref{toInf}.
\end{proof}

Recall that we are omitting the proof of proposition \ref{asymptoticsEasy} since it is much easier and follows from the same sort of reasoning as the proofs of propositions \ref{eqn} and \ref{asymptotics}.
\section{Some Estimates for the Kerr ODE}\label{odeEstimates}
For the purposes of section \ref{wronskEst} we need to prove proposition \ref{secondBoundEstimate}:
\begin{equation}\label{theEst}
\left|u(-\infty)\right|^2 \lesssim (4\epsilon)^{-1}\int_{r_+}^{\infty}\left|F(r)\right|^2r^4dr + \epsilon\int_{r_+}^{\infty}\left|R(r)\right|^2dr\ \forall \epsilon > 0 \Rightarrow
\end{equation}
\begin{equation}\label{theEst2}
\left|u(-\infty)\right|^2 \lesssim \int_{r_+}^{\infty}\left|F(r)\right|^2r^4dr.
\end{equation}
It will sometimes be useful to switch our perspectives on $-\infty$ and $\infty$ and write
\[u'' + \left(\omega_0^2 - V_0\right)u = H\]
where
\[\omega_0 = \omega - \frac{am}{2Mr_+},\]
\[V_0 = V + \omega_0^2 - \omega^2.\]
For the following estimates the relevant properties of $V$ and $V_0$ are
\begin{enumerate}
    \item $V$ is uniformly bounded.
    \item $V = O(r^{-2})$ at $\infty$.
    \item $V_0 = O(r-r_+)$.
    \item For fixed non-zero $a$, $m$, and $M > 0$ there exists a constant $c > 0$ such that $am-2Mr_+\omega \geq -c\left(\lambda_{\omega ml}+a^2\omega^2\right) \Rightarrow \frac{dV_0}{dr}(r_+) > 0$.
\end{enumerate}
The last statement is the only non-obvious one, and the relevant computations can be found in \cite{n2}. It will also be useful to note that
\[\lambda_{\omega ml}+a^2\omega^2 \geq |m|\left(|m|+1\right).\]
This follows from the observation that when $a^2\omega^2 = 0$, the $e^{im\phi}S_{\omega ml}(\theta)$ are simply spherical harmonics with corresponding eigenvalues all larger than $|m|(|m|+1)$.

We will explore various estimates and their realm of applicability. Then at the end we will show how they can be combined to establish (\ref{theEst2}). We will borrow the ``separated current template'' from \cite{n2}.
\subsection{Virial Estimate I}\label{virialEstimate}
The estimates of this section require that $\omega$ be bounded away from $0$ and that we have a priori control of $Q_T(\infty)$. The resulting estimate will be sufficiently good near $\infty$, but will require strengthening near $-\infty$.

The virial current is
\[Q^y := y|u'|^2 + y\left(\omega^2-V\right)|u|^2\]
where $y$ is a suitably chosen function.
We have
\[\left(Q^y\right)' = y'|u'|^2 + y'\omega^2|u|^2 - \left(yV\right)'|u|^2 + 2y\text{Re}\left(H\overline{u}'\right).\]
Integrating this gives
\[\int_{-\infty}^{\infty}\left(y'|u'|^2 + y'\omega^2|u|^2 - \left(yV\right)'|u|^2\right)dr^* = \]
\[Q^y(\infty)-Q^y(-\infty) - \int_{-\infty}^{\infty}2y\text{Re}\left(u'\overline{H}\right)dx^*.\]
We want to choose $y$ so that the left hand side controls $|u|^2 + |u'|^2$ (possibly with weights), and so that the boundary terms are controllable. Let $\zeta(r^*)$ be a non-negative function which is identically $1$ near $r^* = -\infty$ and equals $r^{-2}$ near $r^* = \infty$. We set
\[y(r^*) := \exp\left(-B\int_{r^*}^{\infty}\zeta dr^*\right).\]
Here $B$ is a large parameter to be chosen later. We have $y(r_+) = 0$, $y(\infty) = 1$, and $y' = B\zeta y > 0$. We will show that the term
\[-\int_{-\infty}^{\infty}\left(yV\right)'|u|^2dr^*\]
which threatens to destroy the coercivity of our estimate can in fact be absorbed into the other two terms. After an integration by parts and the inequality $|ab| \leq \epsilon|a| + (4\epsilon)^{-1}|b|$, we find
\[\left|\int_{-\infty}^{\infty}\left(yV\right)'|u|^2dr^*\right| \leq \]
\[\frac{1}{2}\int_{-\infty}^{\infty}y'|u'|^2dr^* + 2\int_{-\infty}^{\infty}\left(y'\omega^2\right)\frac{y^2|V|^2}{\omega^2\left(y'\right)^2}|u|^2dr^* + \left|\left(yV|u|^2\right)\big|^{\infty}_{-\infty}\right|.\]
Note that $|V|$ is uniformly bounded, decays like $r^{-2}$, and that $y/y' \leq B^{-1}r^2$. Also, the boundary terms clearly vanish. Thus, for sufficiently large $B$, we get
\[\left|\int_{-\infty}^{\infty}\left(yV\right)'\left|u\right|^2dr^*\right| \leq \frac{1}{2}\int_{-\infty}^{\infty}\left(y'|u'|^2 + y'\omega^2|u|^2\right)dr^*.\]
Lastly, we note that
\[Q^y(\infty) = 2Q_T(\infty).\]
Thus, we end up with
\begin{equation}
\int_{-\infty}^{\infty}\left(y'|u'|^2 + y'\omega^2|u|^2\right)dr^* \lesssim Q_T(\infty) - \int_{-\infty}^{\infty}y\text{Re}\left(u'\overline{H}\right)dr^*.
\end{equation}
The usual Cauchy-Schwarz argument then gives
\begin{equation}\label{strictVirialEstimate}
\int_{-\infty}^{\infty}\left(y'|u'|^2 + y'\omega^2|u|^2\right)dr^* \lesssim \left|Q_T(\infty)\right| + \int_{-\infty}^{\infty}y|H|^2r^2dr^*.
\end{equation}
This estimate is sufficiently strong away from the horizon. However, near $-\infty$, the exponential decay of the weight $y$ makes the estimate quite weak.

\subsection{Virial Estimate II}
In this section we look at the virial current from the opposite direction. This estimate will require that $\omega_0$ is bounded away from $0$ and that we have a priori control of $Q_T(-\infty)$. The resulting estimate will be sufficiently strong near $r_+$, but will require strengthening near $\infty$.

We rewrite the virial current as
\[Q^y := y|u'|^2 + y\left(\omega_0^2-V_0\right)|u|^2.\]
Let $\zeta(r)$ be a positive function equal to $\Delta$ near $r = r_+$, and equal to $1$ near $r = \infty$. Then define
\[y(r^*) := \exp\left(-B\int_{-\infty}^{r^*}\zeta dr^*\right).\]
Integrating the virial current gives
\[\int_{-\infty}^{\infty}\left(-y'|u'|^2 - y'\omega_0^2|u|^2 + \left(yV_0\right)'|u|^2\right)dr^* = \]
\[-Q^y(\infty)+Q^y(-\infty) + \int_{-\infty}^{\infty}2y\text{Re}\left(u'\overline{H}\right)dx^*.\]
We may deal with the $\left(yV_0\right)'$ exactly as in the previous section. This time
\[Q^y(\infty) = 0\]
\[Q^y(-\infty) \approx 2\frac{\omega_0}{\omega} Q_T(-\infty).\]
We end up with
\begin{equation}
\int_{-\infty}^{\infty}\left(-y'|u'|^2 - y'\omega_0^2|u|^2\right)dr^* \lesssim -\frac{\omega_0}{\omega} Q_T(-\infty) + \int_{-\infty}^{\infty}y\text{Re}\left(u'\overline{H}\right)dr^*.
\end{equation}
As in the previous section, it is clear that we also have
\begin{equation}\label{strictVirialEstimate2}
\int_{-\infty}^{\infty}\left(-y'|u'|^2 - y'\omega_0^2|u|^2\right)dr^* \lesssim \left|\frac{\omega_0}{\omega} Q_T(-\infty)\right| + \int_{r_+}^{\infty}\left|F\right|^2dr.
\end{equation}
This estimate is sufficiently strong away from $\infty$. However, near $\infty$, the exponential decay of the weight $y$ makes the estimate very weak.
\subsection{The Red-Shift Estimate}
The estimate of this section will require that $\frac{dV_0}{dr}(r_+) > 0$ and that we can already estimate \[\int_{\alpha}^{\beta}\left(\left|u'\right|^2 + \left|u\right|^2\right)dr^*\]
for arbitrary $r_+ < \alpha < \beta < \infty$.

The following Poincar\'e type inequality will be useful.
\begin{lemm}\label{redPoincare}
Suppose $h$ has support in $[r_+,r_+ + \epsilon]$ and has
\[\left((\cdot-r_+)\left|h\right|^2(\cdot)\right)(r_+) = 0.\]
Then
\[\int_{r_+}^{\infty}|h|^2dr \leq C(\epsilon)\int_{r_+}^{\infty}\left|h' + i\omega_0h\right|^2dr\]
where
\[C(\epsilon) \lesssim \left(1+\epsilon^2\right).\]
\end{lemm}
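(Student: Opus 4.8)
The plan is to first strip off the $i\omega_0$ term with a phase change and then recognise what remains as a one‑sided weighted Hardy inequality whose validity is forced by the red shift at the horizon. First I would set $g := e^{i\omega_0 r^*}h$, with primes denoting $r^*$‑derivatives as elsewhere. Then $|g| = |h|$, and since $g' = e^{i\omega_0 r^*}(h' + i\omega_0 h)$ we also have $|g'| = |h' + i\omega_0 h|$; moreover $g$ inherits the support in $[r_+, r_+ + \epsilon]$, the vanishing at $r_+ + \epsilon$, and the boundary relation $\bigl((\cdot-r_+)|g|^2\bigr)(r_+) = 0$ from $h$ (here $\omega_0$ is real). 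So it suffices to prove $\int_{r_+}^{r_+ + \epsilon}|g|^2\,dr \le C(\epsilon)\int_{r_+}^{r_+ + \epsilon}|g'|^2\,dr$ for such $g$, and we may of course assume the right‑hand side finite.

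Next I would pass to the $r^*$ variable, which is the crucial move: it turns this into a weighted estimate on a \emph{half‑line}. Writing $\mu := dr/dr^* = \Delta/(r^2+a^2)$ and $R^* := r^*(r_+ + \epsilon)$, the two sides become $\int_{-\infty}^{R^*}\mu|g|^2\,dr^*$ and $\int_{-\infty}^{R^*}\mu|g'|^2\,dr^*$. The red shift enters through the single fact that $\mu$ grows towards the horizon: a direct computation gives $\mu'/\mu = d\mu/dr = 2M(r^2-a^2)/(r^2+a^2)^2$, which is strictly positive on all of $[r_+,\infty)$ (since $r \ge r_+ > |a|$) and tends to a positive constant as $r \to r_+$ (proportional to the surface gravity). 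I would then integrate by parts,
\[\int_{-\infty}^{R^*}\mu'|g|^2\,dr^* = \bigl[\mu|g|^2\bigr]_{-\infty}^{R^*} - \int_{-\infty}^{R^*}\mu\,(|g|^2)'\,dr^*,\]
where the boundary term at $R^*$ vanishes because $g(R^*) = 0$ and the boundary term at $-\infty$ vanishes because $\mu|g|^2$ is comparable near $r_+$ to $(r-r_+)|g|^2$, which tends to $0$ by hypothesis. Using $\mu' = (\mu'/\mu)\mu$ and Cauchy--Schwarz gives $\bigl(\inf_{[r_+, r_+ + \epsilon]}\mu'/\mu\bigr)\int\mu|g|^2 \le 2\bigl(\int\mu|g|^2\bigr)^{1/2}\bigl(\int\mu|g'|^2\bigr)^{1/2}$, hence $\int\mu|g|^2 \le 4\bigl(\inf\mu'/\mu\bigr)^{-2}\int\mu|g'|^2$. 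Since $\mu'/\mu$ is continuous and positive on $[r_+,\infty)$, this infimum is bounded below by a positive constant depending only on $(a,M)$ as long as $\epsilon$ stays bounded, so $C(\epsilon) = O(1)$ — which is all the applications need. To promote this to the stated $C(\epsilon) \lesssim 1 + \epsilon^2$ I would localise with a cutoff equal to $1$ near $r_+$ and supported in a fixed neighbourhood of $r_+$, apply the above to the cut‑off piece, apply a plain Poincar\'e inequality on an interval of length $\le \epsilon$ (where the $\epsilon^2$ appears) to the remaining piece on the region where $r$ and $r^*$ are comparable, and absorb the commutator term from the cutoff using a further Poincar\'e inequality off the horizon.

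The main obstacle is conceptual rather than computational: everything rests on the half‑line feature. In the $r^*$ coordinate the relevant interval is unbounded, so this is genuinely a weighted Hardy inequality and not a compact‑interval Poincar\'e inequality, and it can hold only because $\mu$ grows at a positive exponential rate as $r^* \to -\infty$; equivalently, the positivity of the surface gravity is indispensable. The hypothesis $\bigl((\cdot-r_+)|h|^2\bigr)(r_+) = 0$ is precisely what is needed to discard the horizon boundary term in the integration by parts, and dropping either this condition or the red‑shift sign makes the inequality false.
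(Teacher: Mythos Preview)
Your argument is correct, and the underlying mechanism --- integrate by parts once, use that $\omega_0$ is real to trade $h'$ for $h'+i\omega_0 h$, then Cauchy--Schwarz --- is exactly the paper's. The difference is purely in the choice of multiplier. The paper works directly in the $r$ variable with the multiplier $r-r_+$:
\[
\int_{r_+}^{\infty}|h|^2\,dr = \int_{r_+}^{\infty}\frac{d}{dr}(r-r_+)\,|h|^2\,dr = -2\,\mathrm{Re}\int_{r_+}^{\infty}(r-r_+)\frac{dh}{dr}\,\overline{h}\,dr = -2\,\mathrm{Re}\int_{r_+}^{\infty}\frac{r^2+a^2}{r-r_-}\,(h'+i\omega_0 h)\,\overline{h}\,dr,
\]
using $(r-r_+)\frac{d}{dr} = \frac{r^2+a^2}{r-r_-}\frac{d}{dr^*}$. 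One Cauchy--Schwarz then gives $C(\epsilon)\le 4\bigl(\sup_{[r_+,r_++\epsilon]}\frac{r^2+a^2}{r-r_-}\bigr)^2$, and since this weight grows only linearly in $r$ one reads off $C(\epsilon)\lesssim (1+\epsilon)^2\lesssim 1+\epsilon^2$ immediately.

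Your route through $r^*$ with multiplier $\mu$ yields instead $C(\epsilon)\le 4\bigl(\inf_{[r_+,r_++\epsilon]} d\mu/dr\bigr)^{-2}$; because $d\mu/dr = 2M(r^2-a^2)/(r^2+a^2)^2$ decays like $r^{-2}$, this constant is $O(\epsilon^4)$ for large $\epsilon$, which is why you then need the localisation patch. That patch works, but it is an avoidable detour: the paper's choice of multiplier delivers the stated $1+\epsilon^2$ in a single step. What your version buys is a clearer link to the surface gravity (you are literally using $\mu'/\mu>0$), whereas the paper's version hides this in the factor $(r-r_-)^{-1}$; conceptually they are the same red-shift Hardy inequality. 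Your phase change $g=e^{i\omega_0 r^*}h$ is equivalent to the paper's add-and-subtract of $i\omega_0 h$, and both rely on $\omega_0\in\mathbb{R}$.
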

\begin{proof}Keeping in mind that
\[\frac{dh}{dr^*} = \frac{(r-r_+)(r-r_-)}{r^2+a^2}\frac{dh}{dr},\]
we have
\[\int_{r_+}^{\infty}|h|^2dr = \int_{r_+}^{\infty}\frac{d}{dr}\left(r-r_+\right)|h|^2dr = -\int_{r_+}^{\infty}\left(r-r_+\right)\left(\frac{dh}{dr}\overline{h} + h\frac{d\overline{h}}{dr}\right)dr =\]
\[-\int_{r_+}^{\infty}\left(\frac{r^2+a^2}{r-r_-}\right)\left(h'\overline{h} + h\overline{h}'\right)dr = \] \[-\int_{r_+}^{\infty}\left(\frac{r^2+a^2}{r-r_-}\right)\left(\left(h'+i\omega_0h\right)\overline{h} + h\left(\overline{h}'-i\omega_0\overline{h}\right)\right)dr.\]
From here the lemma follows by the usual argument.
\end{proof}
The (microlocal) red-shift current is
\[Q^z_{\text{red}} := z\left|u'+i\omega_0u\right|^2 - zV_0|u|^2 = Q^z + 2z\frac{\omega_0}{\omega}Q_T.\]
Note that the boundary conditions for $R$ imply that $(u'+i\omega_0u)(r^*) = O(r-r_+)$ near $r^* = -\infty$. Hence, we may take $z$ to be a function which blows up at $-\infty$.
We have
\[\left(Q^z_{\text{red}}\right)' = z'\left|u' + i\omega_0u\right|^2 - \left(zV_0\right)'|u|^2 + 2z\text{Re}\left(\left(u'+i\omega_0u\right)\overline{H}\right).\]
Let $\zeta(r)$ be a bump function identically $1$ on $[r_+,r_++\epsilon]$ and vanishing on $[r_++2\epsilon,\infty)$. $\epsilon$ is a free parameter that we will later take sufficiently small. Now set
\[z(r^*) := -\frac{\zeta(r(r^*))}{V_0}.\]
Note that $z' > 0$ near $-\infty$ since $\frac{d}{dr}V_0(-\infty) > 0$. We have
\[\left(Q^z_{\text{red}}\right)\big|_{-\infty}^{\infty} = -|u(-\infty)|^2\]
which has a good sign. For $r \in [r_+,r_++\epsilon]$, we have
\[\left(Q^z_{\text{red}}\right)' = z'|u'+i\omega_0u|^2 + 2z\text{Re}\left(\left(u'+i\omega_0u\right)\overline{H}\right).\]
Note that we have $z' \sim (r-r_+)^{-1}$ in this region.\footnote{Keep in mind that \[z' = \frac{(r-r_+)(r-r_-)}{r^2+a^2}\frac{dz}{dr}.\]} For $r \in [r_++\epsilon,r_++2\epsilon]$ we will treat everything as an error:
\[\left|\left(Q^z_{\text{red}}\right)\right| \lesssim \left(|u'|^2 + |u|^2\right) + \left|z\text{Re}\left(\left(u'+i\omega_0u\right)\overline{H}\right)\right|.\]
Of course for $r \geq r_+ + 2\epsilon$ we have $\left(Q^z_{\text{red}}\right)' = 0$. Putting everything together will produce an estimate for
\[\int_{r_+}^{r_++\epsilon}(r-r_+)^{-2}|u'(r^*(r))+i\omega_0u(r^*(r))|^2dr.\]
For $\epsilon$ sufficiently small, an application of Lemma \ref{redPoincare} will show that this controls \[\int_{r_+}^{r_++\epsilon/2}|u(r^*(r))|^2dr\]
at the expense of introducing error terms \[\int_{r_++\epsilon/2}^{r_++\epsilon}\left(|u'(r^*(r))|^2+|u(r^*(r))|^2\right)dr.\]
We end up with
\[\int_{r_+}^{r_++\epsilon}(r-r_+)^{-2}|u'(r^*(r))+i\omega_0u(r^*(r))|^2dr + \int_{r_+}^{r_++\epsilon/2}\left|u(r^*(r))\right|^2dr \lesssim \]
\begin{equation}\label{redShiftEstimate}
\int_{\epsilon/2}^{2\epsilon}\left(\left|u\right|^2 + \left|u'\right|^2\right)dr^* + \int_{-\infty}^{\infty}\left|z\text{Re}\left(\left(u'+i\omega_0u\right)\overline{H}\right)\right|dr^*.
\end{equation}
As usual, this implies
\[\int_{r_+}^{r_++\epsilon}(r-r_+)^{-2}|u'(r^*(r))+i\omega_0u(r^*(r))|^2dr + \int_{r_+}^{r_++\epsilon/2}\left|u(r^*(r))\right|^2dr \lesssim \]
\begin{equation}\label{strictredShiftEstimate}
\int_{\epsilon/2}^{2\epsilon}\left(\left|u(r^*(r))\right|^2 + \left|u'(r^*(r))\right|^2\right)dr^* + \int_{-\infty}^{\infty}\left|F(r)\right|^2dr.
\end{equation}
Note that for every fixed $m$ and $l$, $\epsilon$ can be assumed to depend continuously on $a$ and $\omega$. This estimate is good near $-\infty$, but clearly is not sufficient otherwise.
\subsection{Proof of Proposition \ref{secondBoundEstimate}}
Let $b_0 \in (r_+,\infty)$ be sufficiently close to $r_+$ and $b_1 = 2b_0$. First we apply virial estimate I and conclude
\begin{equation}\label{something}
\int_{b_0}^{\infty}\left(\left|R\right|^2 + \left|\frac{dR}{dr}\right|^2\right)dr \lesssim \left|Q_T(\infty)\right| + \int_{r_+}^{\infty}|F|^2r^4dr.
\end{equation}
Now, depending on whether $\omega_0$ is small or large, we either carry out virial estimate II or the red-shift estimate and combine with \ref{something} to get
\[\int_{r_+}^{\infty}\left|R\right|^2dr \lesssim \left|Q_T(-\infty)\right| + \left|Q_T(\infty)\right| + \int_{r_+}^{\infty}\left|F\right|^2dr.\]
Next, we recall that the energy current $Q_T = \omega\text{Im}\left(u'\overline{u}\right)$ satisfies
\[\left(Q_T\right)' = \omega\text{Im}\left(H\overline{u}\right) \Rightarrow\]
\[\left|Q_T(\infty)\right| \leq \left|Q_T(-\infty)\right| + \int_{r_+}^{\infty}(r^2+a^2)\left|F\right|\left|R\right|dr \lesssim \]
\[\epsilon^{-1}\int_{r_+}^{\infty}\left|F(r)\right|^2r^4dr + \epsilon\int_{-\infty}^{\infty}\left|R(r)\right|^2dr\]
where we have used (\ref{theEst}) in the last line. Taking $\epsilon$ small enough, we may combine the various estimates to conclude
\[\int_{r_+}^{\infty}\left|R\right|^2dr \lesssim \int_{r_+}^{\infty}\left|F(r)\right|^2r^4dr.\]
Reapplying the energy estimate finally implies
\[\left|u(-\infty)\right|^2 \approx \left|Q_T(-\infty)\right| \lesssim \int_{r_+}^{\infty}\left|F(r)\right|^2r^4dr.\]
\section{A Unique Continuation Lemma}\label{vanish1}
\begin{lemm}\label{vanish}Suppose that we have a solution $u(r^*) : (-\infty,\infty) \to \mathbb{C}$ to the ODE
\[u'' + \left(\omega^2 - V\right)u = 0\]
such that
\begin{enumerate}
    \item $\omega \in \mathbb{R}\setminus \{0\}$,
    \item $u \in L^{\infty}$ and $\left(\left|u'\right|^2 + \left|u\right|^2\right)\left(\infty\right) = 0$,
    \item $V$ is real, $V \in L^{\infty}$, $V = O\left(r^{-1}\right)\text{ as }r\to\infty$, and $V' = O\left(r^{-2}\right)\text{ as }r\to\infty$.
\end{enumerate}
Then $u$ is identically $0$.
\end{lemm}
\begin{proof}We will slightly refine the estimate from section \ref{virialEstimate}.\footnote{Using that estimate directly would require that $V = O\left(r^{-2}\right)$ as $r\to\infty$.} Define
\[y(r^*) := \exp\left(-B\int_{r^*}^{\infty}\zeta(r)dr\right)\]
where $B$ is a large positive constant to be chosen later and $\zeta$ is a \emph{fixed} positive function which is identically $1$ near $r = -\infty$ and is equal to $r^{-2}$ near $r = \infty$. In particular, we have $y'|_{\left(-\infty,\infty\right)} > 0$, $y(-\infty) = 0$, and $y(\infty) = 1$.

Next, set
\[Q^y(r^*) := y\left|u'\right|^2 + y\left(\omega^2-V\right)\left|u\right|^2.\]
Observe that the hypothesis of the lemma imply that $Q^y\left(\pm\infty\right) = 0$. A simple computation gives
\[\left(Q^y\right)' = y'\left|u'\right|^2 + y'\omega^2\left|u\right|^2 - \left(yV\right)'\left|u\right|^2.\]
Thus, the fundamental theorem of calculus implies
\begin{equation}\label{theInteriorTerm}
\int_{-\infty}^{\infty}\left(y'\left|u'\right|^2 + y'\omega^2\left|u\right|^2 - \left(yV\right)'\left|u\right|^2\right)dr^* = 0.
\end{equation}
Let $R \in (1,\infty)$ be a large constant to be chosen later. Then set $\chi(r^*)$ to be a function identically $1$ on $(-\infty,R]$ and $0$ on $[R+1,\infty)$. We then define
\[V_1 := \chi V,\]
\[V_2 := (1-\chi)V.\]
Of course we have $V = V_1 + V_2$.

We have the following estimate:
\begin{align}
\left|\int_{-\infty}^{\infty}\left(yV_1\right)'\left|u\right|^2dr^*\right| &= 2\left|\int_{-\infty}^{\infty}yV_1\text{Re}\left(u'\overline{u}\right)dr^*\right|\\
&\leq \epsilon \int_{-\infty}^{\infty}y'\left|u'\right|^2dr^* + \epsilon^{-1}\int_{-\infty}^{\infty}y'\omega^2\left(\frac{y^2V_1^2}{\left(y'\right)^2\omega^2}\right)\left|u\right|^2dr^* \\
&\leq\label{firstIntegration} \epsilon \int_{-\infty}^{\infty}y'\left|u'\right|^2dr^* + C\epsilon^{-1}\omega^{-2}B^{-2}R^2\int_{-\infty}^{\infty}y'\omega^2\left|u\right|^2dr^*.
\end{align}
Here $C$ is a constant which only depends on $\zeta$ and $V$.

Next we estimate
\begin{align}
\left|\int_{-\infty}^{\infty}\left(yV_2\right)'\left|u\right|^2dr^*\right| &= \left|\int_{-\infty}^{\infty}\left(y'V_2 + yV_2'\right)\left|u\right|^2dr^*\right|\\
&\leq\label{secondIntegration} C\int_{-\infty}^{\infty}\left(R^{-1}\omega^{-2} + B^{-1}\omega^{-2}\right)y'\omega^2\left|u\right|^2dr^*.
\end{align}
Taking $\epsilon$ small, $R$ large, and then $B$ sufficiently large and combining (\ref{theInteriorTerm}), (\ref{firstIntegration}), and (\ref{secondIntegration}) implies that
\[\frac{1}{2}\int_{-\infty}^{\infty}\left(y'\left|u'\right|^2 + y'\omega^2\left|u\right|^2\right)dr^* = 0.\]
\end{proof}
\section{Acknowledgments}
I thank my advisor Igor Rodnianski for suggesting the problem and for many insightful conversations. I would also like to thank Mihalis Dafermos for very useful comments on preliminary versions of the paper.
\appendix
\section{Asymptotic Analysis of the Radial ODE}\label{odeFacts}
We will collect various facts concerning the radial ODE:
\[u'' + \left(\omega^2-V\right)u = 0\text{ for }\omega \in \mathbb{C}\setminus\{0\}.\]
The material in this section is standard, and the necessary background can be found in most textbooks on the asymptotic analysis of ODE's, e.g. \cite{n7}.

When recast in the $r$ variable our ODE has a regular singularity at $r_+$. Finding the roots of the indicial equation allows us to uniquely define two linearly independent functions $u_{\text{hor}}$ (already given by definition \ref{theHor}) and $u_{\text{hor2}}$ by
\begin{defi}Let $u_{\text{hor2}}(r^*)$ be the unique function satisfying
    \begin{enumerate}
        \item $u_{\text{hor2}}'' + \left(\omega^2-V\right)u_{\text{hor2}} = 0$.
        \item\label{asympt} $u_{\text{hor2}} \sim (r-r_+)^{\frac{-i(am-2Mr_+\omega)}{r_+-r_-}}\text{ near }r^* = -\infty$.
        \item $\left|(r-r_+)^{\frac{i(am-2Mr_+\omega)}{r_+-r_-}}u_{\text{hor2}}\left(-\infty\right)\right|^2 = 1$.
    \end{enumerate}
\end{defi}
Since we have a regular singularity, the ``$\sim$'' means that \[u_{\text{hor}}(r^*)(r-r_+)^{\frac{-i(am-2Mr_+\omega)}{r_+-r_-}}\]
is holomorphic in $r$ near $r_+$. In fact, it can be given by an explicit power series which exhibits holomorphic dependence on $\omega$. Analogous statements hold for $u_{\text{hor2}}$.

Our ODE has an irregular singularity at $\infty$. Nevertheless, we can uniquely define two linearly independent functions $u_{\text{in}}$ and $u_{\text{out}}$ (already given by definition \ref{theOut}) by
\begin{defi}Let $u_{\text{in}}(r^*)$ be the unique function satisfying
    \begin{enumerate}
        \item $u_{\text{in}}'' + \left(\omega^2-V\right)u_{\text{out}} = 0$.
        \item $u_{\text{in}} \sim e^{-i\omega r^*}\text{ near }r^* = \infty$.
        \item $\left|\left(e^{i\omega r^*}u_{\text{in}}\right)\left(\infty\right)\right|^2 = 1$.
    \end{enumerate}
\end{defi}
Since our singularity is irregular, ``$\sim$'' must be interpreted as follows: There exists explicit constants $\left\{C^{(\text{in})}_i\right\}_{i=1}^{\infty}$ and $\left\{C^{(\text{out})}_i\right\}_{i=1}^{\infty}$ such that for every $N \geq 1$
\[u_{\text{in}}(r^*) = e^{-i\omega r^*}\left(1 + \sum_{i=1}^N\frac{C^{\text{in}}_i}{(r^*)^i}\right) + O\left(\left(r^*\right)^{-N-1}\right)\text{ for large }r^*,\]
\[u_{\text{out}}(r^*) = e^{i\omega r^*}\left(1 + \sum_{i=1}^N\frac{C^{\text{out}}_i}{(r^*)^i}\right) + O\left(\left(r^*\right)^{-N-1}\right)\text{ for large }r^*.\]
It is important to note that the constants in these $O$'s can be estimated explicitly if desired. By examining the construction of $u_{\text{out}}$, one finds that $u_{\text{out}}$ will be holomorphic in $\omega$ in the upper half plane and smooth in $\omega$ in $\mathbb{R}\setminus\{0\}$. See \cite{n17} for a detailed discussion of the holomorphic dependence on $\omega$.

\section{Energy Currents}\label{theFirstt}
It will be useful to use the language of energy currents which we now briefly review (see \cite{n8} for a proper introduction).

Fix a smooth function $\psi$. The energy-momentum tensor is given by
\[T_{\alpha\beta} := \text{Re}\left(\partial_{\alpha}\psi\overline{\partial_{\beta}\psi}\right) - \frac{1}{2}g_{\alpha\beta}g^{\gamma\delta}\text{Re}\left(\partial_{\gamma}\psi\overline{\partial_{\delta}\psi}\right).\]
Given any vector field $X$ we form the corresponding ``current'' by
\[J^X_{\alpha} := T_{\alpha\beta}X^{\beta}.\]
We have
\begin{lemm}\label{energyMomBreakDown}Let $X$ and $Y$ be two linearly independent future oriented timelike vectors normalized to have $g(X,X) = g(Y,Y) = -1$. Set $\gamma := -g(X,Y)$. Note that $\gamma > 1$ by the reverse Cauchy-Schwarz inequality. Define
\[W := \frac{1}{\sqrt{2\left(\gamma+1\right)}}\left(X+Y\right),\]
\[Z := \frac{1}{\sqrt{2\left(\gamma-1\right)}}\left(X-Y\right),\]
\[L := W + Z,\]
\[\underline{L} := W- Z.\]
Let $E_1$ and $E_2$ be an orthonormal basis in the $2$-dimensional subspace orthogonal to the span of $X$ and $Y$. Then,
\[J^X_{\alpha}Y^{\alpha} = \frac{1}{4}\left(\left|L\psi\right|^2 + \left|\underline{L}\psi\right|^2\right) + \frac{\gamma}{2}\left(\left|E_1\psi\right|^2 + \left|E_2\psi\right|^2\right).\]
\end{lemm}
\begin{proof}Observe that
\[g\left(L,L\right) = g\left(\underline{L},\underline{L}\right) = 0,\]
\[g\left(L,\underline{L}\right) = -2,\]
\[X = (1/4)\left(\left(\sqrt{2\left(\gamma+1\right)} + \sqrt{2\left(\gamma-1\right)}\right)L + \left(\sqrt{2\left(\gamma+1\right)} - \sqrt{2\left(\gamma-1\right)}\right)\underline{L}\right),\]
\[Y = (1/4)\left(\left(\sqrt{2\left(\gamma+1\right)} - \sqrt{2\left(\gamma-1\right)}\right)L + \left(\sqrt{2\left(\gamma+1\right)} + \sqrt{2\left(\gamma-1\right)}\right)\underline{L}\right).\]
The result then follows from a simple computation using the algebraic properties of the energy-momentum tensor (see \cite{n8}).
\end{proof}
It is also possible to find a convenient expression for $J^X_{\alpha}X^{\alpha}$.
\begin{lemm}Let $X$ be a timelike vector normalized to have $g\left(X,X\right) = -1$. Let $R$ be any spacelike vector orthogonal to $X$, normalized to have size $1$. Define
\[L := X + R,\]
\[\underline{L} := X-R.\]
Let $E_1$ and $E_2$ be an orthonormal basis for the subspace orthogonal to the span of $X$ and $R$. Then
\[J^X_{\alpha}X^{\alpha} = \left|L\psi\right|^2 + \left|\underline{L}\psi\right|^2 + \left|E_1\psi\right|^2 + \left|E_2\psi\right|^2.\]
\end{lemm}
\begin{proof}This is a simple computation using the algebraic properties of the energy-momentum tensor (see \cite{n8}).
\end{proof}
This leads to
\begin{defi}Let $X$ be a future oriented timelike vector field and $\Sigma$ be a spacelike hypersurface with future oriented normal $n_{\Sigma}$. We define the (non-degenerate) energy of $\psi$ with respect to $X$ along $\Sigma$ by
\begin{equation}\label{theEnergy}
\int_{\Sigma}J^X_{\alpha}n_{\Sigma}^{\alpha}
\end{equation}
where the integral is with respect to the induced volume form. We will often use the schematic notation
\[\int_{\Sigma}\left|\partial\psi\right|^2\]
to denote (\ref{theEnergy}).
\end{defi}
\section{From Admissibility to Uniform Boundedness}\label{admissibleBounded}
The following lemma is a straightforward application of techniques developed in \cite{n3} (see also the lecture notes \cite{n6} and sections 5.4 and 5.5 of \cite{n4}).
\begin{lemm}\label{easyBound}Suppose that $\psi$ is admissible in the sense of definition \ref{admissible} and that $\psi$ solves the wave equation $\Box_g\psi = 0$ on the Kerr spacetime. Then $\psi$ is uniformly bounded to the future.
\end{lemm}
\begin{proof}Fix a choice of $r_0$ sufficiently close to but greater than $r_+$, and $r_1 < \infty$ sufficiently large. Our admissibility assumption implies
\[\int_0^{\infty}\int_{r\in (r_0,r_1)}\int_{\mathbb{S}^2}\left|\partial\psi\right|^2\sin\theta\, dt\, dr\, d\theta\, d\phi < \infty.\]
Let $V$ be a timelike and time translation invariant vector field in the region $r > r_0$ which equals $\partial_t$ for $r > r_1/2$. Let $\delta > 0$ be sufficiently small. By cutting off $\psi$ to the region $r > r_0$ and applying the energy estimate associated to $V$ (see \cite{n8}), we conclude that for every $\tau > 0$,
\[\int_{r > r_0+\delta}\int_{\mathbb{S}^2}\left|\partial\psi\right|^2\Big|_{t = \tau} r^2\sin\theta\, dr\, d\theta\, d\phi \lesssim_{r_0,r_1,\delta}\]
\[ \int_0^{\infty}\int_{r \in (r_0,r_1/2)}\int_{\mathbb{S}^2}\left|\partial\psi\right|^2\sin\theta\, dt\, dr\, d\theta\, d\phi + \]
\[\int_{r > r_0}\int_{\mathbb{S}^2}\left|\partial\psi\right|^2\Big|_{t = 0} r^2\sin\theta\, dr\, d\theta\, d\phi.\footnote{The point being that $V$ is Killing for $r > r_1/2$, and hence no spacetime error terms arise in that region.}\]
In order to control $\psi$ in the region $r < r_0$ we shall appeal to the \emph{red-shift} estimate of Dafermos-Rodnianski (see \cite{n3}, \cite{n6}, and \cite{n19}). As long as $r_0$ is sufficiently close to $r_+$, this estimate implies
\[\int_{\Sigma_{\tau}\cap\{r \leq r_0 + \delta\}}\left|\partial\psi\right|^2 + \int_0^{\tau}\int_{\Sigma_{t}\cap\{r \leq r_0 + \delta\}}\left|\partial\psi\right|^2\ dt \lesssim \]
\[\int_0^{\infty}\int_{r \in (r_0,r_1)}\int_{\mathbb{S}^2}\left|\partial\psi\right|^2\sin\theta\, dr\, d\theta\, d\phi + \]
\[\int_{\Sigma_0\cap\{r \leq r_1\}}\left|\partial\psi\right|^2.\footnote{The red-shift estimate also gives a good term on the horizon, but we do not need this.}\]
Let us emphasize that $\left|\partial\psi\right|^2$ denotes a \emph{non-degenerate} energy flux (see the discussion in appendix \ref{theFirstt}) and the integration is with respect to the induced volume form. Here $\Sigma_{\tau}$ refers to the time translations of the hypersurface $\Sigma_0$ from Theorem \ref{estHorizon}.

Given such a uniform bound on the non-degenerate energy, bounds on higher order energies follows in a standard fashion by commuting with the Killing vector fields $\partial_t$ and (a cut-off version of) $\partial_{\phi}$, commuting with the red-shift commutator $\hat{Y}$, and finally applying elliptic estimates. Once the higher order energies are controlled, pointwise boundedness follows from Sobolev inequalities. In section 13 of \cite{n3} one can find this scheme carried out in full detail for the case of $\left|a\right| \ll M$. A direct inspection of the argument there shows that the only difference in the case $\left|a\right| < M$ is that one also needs to commute with (a cut-off version of) $\partial_{\phi}$.\footnote{The key point being that in the domain of outer communication, there is always a timelike direction in the span of $\partial_t$ and $\partial_{\phi}$.} This fact is explicitly discussed in section 5.5 of \cite{n4}.
\end{proof}
\section{Modes and Their Finite Energy Hypersurfaces}\label{finiteModeEnergy}
In this appendix we will explore the hypersurfaces on which various modes have finite energy.
\subsection{The Hypersurfaces}
For purposes of exposition we will restrict attention to spacelike hypersurfaces $\Sigma_f$ which, for sufficiently large $R$, satisfy
\[\Sigma_f \cap \{r \geq R\} := \left\{\left(t,r^*,\theta,\phi\right) : r\geq R\text{ and } t - f(r^*) = 0\right\}.\]
In addition to the requirement that $\Sigma_f$ be spacelike, we also ask that $\Sigma_f$ intersects the future event horizon and
\[f \geq 0\text{ as }r^* \to \infty. \]
This last requirement implies that $\Sigma_f$ connects the event horizon $\mathcal{H}^+$ to either spacelike infinity or future null infinity.
\begin{defi}We will say that $\Sigma_f$ is asymptotically flat if $f \sim 1$ as $r^* \to \infty$.\footnote{More generally, one could consider any hypersurface which terminates at spacelike infinity, but this extra generality is not useful for the study of mode solutions.}
\end{defi}
These hypersurfaces converge to spacelike infinity as $r^*\to\infty$. The prototypical example of an asymptotically flat hypersurface is one where $f$ is identically constant for large $r$. The relevant Penrose diagram is
\begin{center}
    \begin{tikzpicture}
        [interior/.style={circle,draw=black,fill=black, inner sep=0pt,minimum size = 2.5mm},
        boundary/.style={circle,draw=black,fill=black!60, inner sep=0pt,minimum size = 2.5mm},
        exterior/.style={circle,draw=black,fill=white, inner sep=0pt,minimum size = 2.5mm},
        highlight/.style = {circle,draw= red, fill = red, inner sep=0pt, minimum size = 2.5mm}]
        \draw (0,0) -- (3.535,3.535) -- (7.071,0);
        \node [left] at (1.768,1.768) {$\mathcal{H}^+$};
        \node [right] at (5.4,1.768) {$\mathcal{I}^+$};
        \draw (1.3,1.3) to [out = -25,in = 175] (7.071,0);
        \node [above] at (3.536,-.03) {$\Sigma_f$};
    \end{tikzpicture}
\end{center}

\begin{defi}We will say that $\Sigma_f$ is hyperboloidal if $\left(f'\right)^2 - 1 = -\frac{C}{r^2} + O\left(r^{-3}\right)$ as $r^* \to\infty$ for some sufficiently large positive constant $C$ ($C \geq M$ will work).\footnote{In more general contexts one usually says a spacelike hypersurface is hyperboloidal if the induced metric asymptotically approaches a constant negative curvature metric. One could work with this more general definition here; but, since there is not much advantage \`a la the study of mode solutions, we shall spare ourselves the extra work.}
\end{defi}
These hypersurfaces converge to future null infinity as $r^*\to\infty$. The key examples to keep in mind are hyperbolas in Minkowski space (where $f = \sqrt{C + r^2}$). The relevant Penrose diagram is
\begin{center}
    \begin{tikzpicture}
        [interior/.style={circle,draw=black,fill=black, inner sep=0pt,minimum size = 2.5mm},
        boundary/.style={circle,draw=black,fill=black!60, inner sep=0pt,minimum size = 2.5mm},
        exterior/.style={circle,draw=black,fill=white, inner sep=0pt,minimum size = 2.5mm},
        highlight/.style = {circle,draw= red, fill = red, inner sep=0pt, minimum size = 2.5mm}]
        \draw (0,0) -- (3.535,3.535) -- (7.071,0);
        \node [left] at (1.768,1.768) {$\mathcal{H}^+$};
        \node [right] at (5.4,1.768) {$\mathcal{I}^+$};
        \draw (1.184,1.186) to [out = -30,in = 210] (6.04,.996);
        \node [below] at (3.536,.47) {$\Sigma_f$};
    \end{tikzpicture}
\end{center}

\subsection{Some Useful Calculations}\label{theSecond}
We start by noting that
\[g^{tt} = \frac{a^2\sin^2\theta\Delta - (r^2+a^2)^2}{\rho^2\Delta},\]
\[g^{\phi\phi} = \frac{\Delta - a^2\sin^2\theta}{\sin^2\theta\Delta},\]
\[g^{t\phi} = -\frac{4Mar}{\rho^2\Delta}.\]

Then we have
\begin{lemm}\label{theFirst}Let $\Sigma_f$ be an asymptotically flat hypersurface, $N$ be a future oriented timelike vector field which equals $\partial_t$ for large $r$, and $\psi$ be a smooth function. Then, for sufficiently large $R$, the energy of $\psi$ with respect to $N$ along $\Sigma_f\cap\{r \geq R\}$ is proportional to
\[\int_{r \geq R}\int_{\mathbb{S}^2}\left(\left|\partial_t\psi\right|^2 + \left|\partial_r\psi\right|^2 + r^{-2}\left(\left(\partial_{\theta}\psi\right)^2 + \sin^{-2}\theta\left(\partial_{\phi}\psi\right)^2\right)\right)\left(f(r^*),r,\theta,\phi\right)\times \]
\[r^2\sin\theta\ dr\ d\theta\ d\phi.\]
\end{lemm}
\begin{proof}First, observe that
\[-\nabla t = -g^{tt}\partial_t - g^{t\phi}\partial_{\phi} = \frac{(r^2+a^2)^2 - a^2\sin^2\theta}{\rho^2\Delta}\partial_t + \frac{4Mar}{\rho^2\Delta}\partial_{\phi},\]
\[g\left(\nabla t,\nabla t\right) = \frac{-(r^2+a^2)^2 + a^2\sin^2\theta\Delta}{\rho^2\Delta}.\]
In particular, $\nabla t$ is timelike. Next, we calculate
\[g\left(\nabla\left(t-f(r^*)\right),\nabla\left(t-f(r^*)\right)\right) = \]
\[\left(\left(f'\right)^2-1\right)\frac{(r^2+a^2)^2}{\rho^2\Delta} + \frac{a^2\sin^2\theta}{\rho^2} \to -1\text{ as }r\to\infty.\]
We conclude that the normal to $\Sigma_f$ satisfies
\[n_{\Sigma_f} = \left(1 + O\left(r^{-1}\right)\right)\left(-\nabla t\right) + O\left(r^{-1}\right)\partial_{r^*}\text{ as }r\to\infty.\]
Now, lemma \ref{energyMomBreakDown} implies
\[J^N_{\alpha}n^{\alpha}_{\Sigma_f} \approx \left|\partial_t\psi\right|^2 + \left|\partial_{r^*}\psi\right|^2 + r^{-2}\left(\left(\partial_{\theta}\psi\right)^2 + \sin^{-2}\theta\left(\partial_{\phi}\psi\right)^2\right)\text{ as }r\to\infty.\]
The volume form on Kerr satisfies
\[dVol = \frac{\Delta\rho^2}{r^2+a^2}\sin\theta\ dt\wedge dr^*\wedge d\theta\wedge d\phi.\]
Thus, the induced volume on $\Sigma_f$ is given by
\[\left(1 + O(r^{-1})\right)r^2\sin\theta\ dr^*\wedge d\theta\wedge d\phi + \left(1+O(r^{-1})\right)r\sin\theta\ dt\wedge\ d\theta\wedge d\phi + \] \[\left(1+O(r^{-1})\right)r\sin\theta\ dt\wedge dr^*\wedge d\theta\text{ as }r\to\infty.\]
The lemma follows by writing out the integral (\ref{theEnergy}) in the parametrization $\left(r^*,\theta,\phi\right) \mapsto \left(f(r^*),r^*,\theta,\phi\right)$.
\end{proof}

The analogous lemma in the hyperboloidal case is more subtle since we need to understand precisely how the energy degenerates due to the hypersurface becoming ``approximately null.''
\begin{lemm}\label{theSecond}Let $\Sigma_f$ be a hyperboloidal hypersurface, $N$ be a future oriented timelike vector field which equals $\partial_t$ for large $r$, and $\psi$ be a smooth function. Then, for sufficiently large $R$, the energy of $\psi$ with respect to $N$ along $\Sigma_f\cap\{r \geq R\}$ is proportional to
\begin{align*}
\int_{r\geq R}\int_{\mathbb{S}^2}&\Bigg(r^{-2}\left|\left(\partial_t- \partial_{r^*}\right)\psi\right|^2 + \left|\left(\partial_t+\partial_{r^*}\right)\psi\right|^2 + \\ &r^{-2}\left(\left(\partial_{\theta}\psi\right)^2 + \sin^{-2}\theta\left(\partial_{\phi}\psi\right)^2\right)\Bigg)r^2\sin\theta\ dr\ d\theta\ d\phi
\end{align*}
where the integrand is evaluated at $\left(f(r^*),r^*,\theta,\phi\right)$.
\end{lemm}
\begin{proof}Let's set
\begin{align*}
A_{\Sigma_f} &:= \sqrt{-g\left(\nabla\left(t-f(r^*)\right),\nabla\left(t-f(r^*)\right)\right)}\\ &=\sqrt{\left(1-\left(f'\right)^2\right)\frac{(r^2+a^2)^2}{\rho^2\Delta} - \frac{a^2\sin^2\theta}{\rho^2}}\\
&= O\left(r^{-1}\right)\text{ as }r\to\infty.
\end{align*}
The normal $n_{\Sigma_f}$ thus satisfies
\[n_{\Sigma_f} = A_{\Sigma_f}^{-1}\left(-\nabla t + f'\frac{(r^2+a^2)^2}{\rho^2\Delta}\partial_{r^*}\right).\]
The key difference with the asymptotically flat case is that $A_{\Sigma_f}^{-1} = r + O(1)$ as $r\to\infty$.

Let's apply lemma \ref{energyMomBreakDown} to the vectors $X := \left(-g_{tt}\right)^{-1/2}\partial_t$ and $Y := n_{\Sigma_f}$. We have
\[\gamma := -g\left(X,Y\right) = \left(-g_{tt}\right)^{-1/2}A_{\Sigma_f}^{-1} = r + O(1)\text{ as }r\to\infty.\]
Next, we compute
\begin{align*}
W &= \frac{1}{\sqrt{2\left(\gamma+1\right)}}\left(X+Y\right)\\
  &= O\left(r^{-1/2}\right)\partial_t + \left(r^{1/2}+O\left(r^{-1/2}\right)\right)\left(-\nabla t + \partial_{r^*}\right)\text{ as }r\to\infty,\\ \\
Z &= \frac{1}{\sqrt{2\left(\gamma-1\right)}}\left(X-Y\right)\\
  &= O\left(r^{-1/2}\right)\partial_t - \left(r^{1/2}+O\left(r^{-1/2}\right)\right)\left(-\nabla t + \partial_{r^*}\right)\text{ as }r\to\infty,\\ \\
L &= W + Z \\
  &= O\left(r^{-1/2}\right)\left(\partial_t + \left(-\nabla t + \partial_{r^*}\right)\right)\text{ as }r\to\infty,\\ \\
\underline{L} &= W - Z \\
              &= O\left(r^{-3/2}\right)\partial_t + 2\left(r^{1/2}+O\left(r^{-1/2}\right)\right)\left(-\nabla t + \partial_{r^*}\right)\text{ as }r\to\infty.
\end{align*}
Finally, as $r\to\infty$, the induced volume form satisfies
\[\left(A_{\Sigma_f}^{-1}\frac{(r^2+a^2)^2}{\rho^2\Delta} + O\left(r^{-3}\right)\right)\left(\frac{\Delta\rho^2}{r^2+a^2}\right)\sin\theta\ dr^*\wedge d\theta\wedge d\phi  \]
\[+O\left(1\right)\frac{\Delta\rho^2}{r^2+a^2}\sin\theta\ dt\wedge dr^*\wedge d\theta \]
\[-\left(A_{\Sigma_f}^{-1}f'\frac{(r^2+a^2)^2}{\rho^2\Delta}\right)\left(\frac{\Delta\rho^2}{r^2+a^2}\right)\sin\theta\ dt\wedge d\theta\wedge d\phi.\]
The lemma now follows by carefully writing out the integral (\ref{theEnergy}) in the parametrization $\left(r^*,\theta,\phi\right)\mapsto \left(f(r^*),r^*,\theta,\phi\right)$, using $\left(f'\right)^2 - 1 = -\frac{C}{r^2} + O\left(r^{-3}\right)$, and appealing to lemma \ref{energyMomBreakDown}.
\end{proof}
\subsection{Finite Energy Hypersurfaces for Mode Solutions}
\begin{lemm}\label{modeThing}Let $\Sigma_f$ be an asymptotically flat hypersurface, $N$ be a future oriented timelike vector field which equals $\partial_t$ for large $r$, and
\[\psi(t,r,\theta,\phi) = e^{-i\omega t}e^{im\phi}S_{\omega ml}(\theta)R(r)\]
be a mode solution. If $\text{Im}\left(\omega\right) > 0$ then $\psi$ has finite energy with respect to $N$ along $\Sigma_f$. If $\text{Im}\left(\omega\right) \leq 0$ then $\psi$ has infinite energy with respect to $N$ along $\Sigma_f$.
\end{lemm}
\begin{proof}In Kerr-star coordinates, it is easy to see that the volume form remains bounded in a compact region of $r$ (including the event horizon). Thus, in order for $\psi$ to have finite energy along $\Sigma_f \cap \{r \leq R\}$ it is sufficient for $\psi$ to be smooth (and hence bounded). Furthermore, $\psi$ is manifestly smooth if $r > r_+$. Since Boyer-Lindquist coordinates break down at $r = r_+$, in order to investigate the smoothness of $\psi$ there, we will change to Kerr-star coordinates $(t^*,r,\theta,\phi^*)$. In these coordinates we get
\[\psi(t^*,r,\theta,\phi^*)  = e^{-i\omega\left(t^*-\overline{t}(r)\right)}e^{im\left(\phi^*-\overline{\phi}(r)\right)}S_{\omega ml}(\theta)R(r).\]
Hence, $\psi$ extends smoothly to $r=r_+$ if and only if
\[R(r) = e^{-i\left(\omega \overline{t}(r) - m\overline{\phi}(r)\right)}h(r)\]
where $h$ extends smoothly to $r_+$. However, this is precisely what the boundary condition (\ref{horizonBound}) guarantees.

For $R$ sufficiently large, lemma \ref{theFirst} imply that the energy along $\Sigma_f \cap \{r \geq R\}$ is proportional to
\[\int_{r \geq R}\int_{\mathbb{S}^2}\left(\left|\partial_t\psi\right|^2 + \left|\partial_r\psi\right|^2 + r^{-2}\left(\left(\partial_{\theta}\psi\right)^2 + \sin^{-2}\theta\left(\partial_{\phi}\psi\right)^2\right)\right)\left(f(r^*),r,\theta,\phi\right)\times \]
\[r^2\sin\theta\ dr\ d\theta\ d\phi.\]
Now, if $\text{Im}\left(\omega\right) > 0$, then the boundary condition (\ref{infinityBound}) implies that all of these terms are decaying exponentially as $r \to \infty$, and hence, the integral is finite. If $\text{Im}\left(\omega\right) = 0$, then the first two terms in the integral as proportional to $r^{-2}$, and hence the integral is infinite. If $\text{Im}\left(\omega\right) < 0$, then all of the terms are exponentially growing in $r$, and hence the integral is infinite.
\end{proof}
\begin{lemm}Let $\Sigma_f$ be a hyperboloidal hypersurface, $N$ be a future oriented timelike vector field which equals $\partial_t$ for large $r$, and
\[\psi(t,r,\theta,\phi) = e^{-i\omega t}e^{im\phi}S_{\omega ml}(\theta)R(r)\]
be a mode solution with $\text{Im}\left(\omega\right) \leq 0$. Then $\psi$ has finite energy with respect to $N$ along $\Sigma_f$.
\end{lemm}
\begin{proof} The analysis of $\psi$ for any compact region of $r$ is exactly the same as in the proof of lemma \ref{modeThing}. In lemma \ref{theSecond} we saw that the energy along $\Sigma_f \cap \{r \geq R\}$ is proportional to
\begin{align*}
\int_{r\geq R}\int_{\mathbb{S}^2}&\Bigg(r^{-2}\left|\left(\partial_t- \partial_{r^*}\right)\psi\right|^2 + \left|\left(\partial_t+\partial_{r^*}\right)\psi\right|^2 + \\ &r^{-2}\left(\left(\partial_{\theta}\psi\right)^2 + \sin^{-2}\theta\left(\partial_{\phi}\psi\right)^2\right)\Bigg)r^2\sin\theta\ dr\ d\theta\ d\phi
\end{align*}
where the integrand is evaluated at $\left(f(r^*),r^*,\theta,\phi\right)$.

When $\text{Im}\left(\omega\right) = 0$, then the boundary condition \ref{infinityBound} exactly implies that $\left(\partial_t+\partial_{r^*}\right)\psi = O\left(r^{-2}\right)$. Combining this with the fact that $\psi$ and it's derivatives are all $O\left(r^{-1}\right)$ shows that the integral is finite.

Now consider the case where $\text{Im}\left(\omega\right) < 0$. Using the boundary condition \ref{infinityBound}, we get
\begin{align*}
\psi\left(f(r^*),r^*,\theta,\phi\right) &= \exp\left(-i\omega f\left(r^*\right)\right)e^{im\phi}S_{\omega ml}\left(\theta\right)R\left(r\right)\\
                                        &= O\left(r^{-1}\exp\left(-i\omega r^*\right)\exp\left(i\omega\left(r^* - 2M\log r\right)\right)\right)\text{ as }r\to\infty\\
                                        &= O\left(r^{-1}\right)\text{ as }r\to\infty.
\end{align*}
Similarly,
\[\partial_t\psi\left(f(r^*),r^*,\theta,\phi\right) = O\left(r^{-1}\right)\text{ as }r\to\infty,\]
\[\partial_{r^*}\psi\left(f(r^*),r^*,\theta,\phi\right) = O\left(r^{-1}\right)\text{ as }r\to\infty,\]
\[\partial_{\theta}\psi\left(f(r^*),r^*,\theta,\phi\right) = O\left(r^{-1}\right)\text{ as }r\to\infty,\]
\[\partial_{\phi}\psi\left(f(r^*),r^*,\theta,\phi\right) = O\left(r^{-1}\right)\text{ as }r\to\infty,\]
\[\left(\partial_t+\partial_{r^*}\right)\psi\left(f(r^*),r^*,\theta,\phi\right) = O\left(r^{-2}\right)\text{ as }r\to\infty.\]
Thus, the integral is finite.

\end{proof}

\end{document}